\documentclass[journal,twocolumn]{IEEEtran}
\usepackage{amsmath,amssymb,url}
\interdisplaylinepenalty=3000
\newtheorem{theorem}{Theorem}
\newtheorem{lemma}{Lemma}
\newtheorem{defn}{Definition}
\newtheorem{prpn}{Proposition}

\newtheorem{example}{Example}

\def\f{{\mathbb{F}}}

\def\R{{\mathbb{R}}}

\providecommand{\norm}[1]{\lVert#1\rVert}
\begin{document}
\title{Low-delay, High-rate Non-square Complex Orthogonal Designs}

\author{Smarajit Das,~\IEEEmembership{Student Member,~IEEE} and
        B. Sundar Rajan,~\IEEEmembership{Senior Member,~IEEE}%
\thanks{This work was supported through grants to B.S.~Rajan; partly by the
IISc-DRDO program on Advanced Research in Mathematical Engineering,  partly
by the Council of Scientific \& Industrial Research (CSIR, India) Research
Grant (22(0365)/04/EMR-II) and partly by the INAE Chair Professorship grant. Part of the material in this paper was presented in IEEE 2008 International Symposium on Information Theory (ISIT-2008), July 6-11, Toronto, Canada.} 
\thanks{Smarajit Das is a Post-doctoral fellow in the School of Technology and Computer science, Tata Institute of Fundamental Research, Mumbai, and B. Sundar Rajan is with the Department of Electrical Communication Engineering, Indian Institute of Science, Bangalore-560012, India. emails:smarajitd@gmail.com, bsrajan@ece.iisc.ernet.in.}}

\maketitle
\begin{abstract}
The maximal rate of a non-square complex orthogonal design for $n$ transmit antennas is $\frac{1}{2}+\frac{1}{n}$ if $n$ is even and $\frac{1}{2}+\frac{1}{n+1}$ if $n$ is odd and 
the codes have been constructed for all $n$
by Liang (IEEE Trans. Inform. Theory, 2003) and  Lu et al. (IEEE Trans. Inform. Theory, 2005)
to achieve this rate.
A lower bound on the decoding delay of maximal-rate complex orthogonal designs 
has been obtained by Adams et al. (IEEE Trans. Inform. Theory, 2007) and it is observed that Liang's construction achieves the bound on delay for $n$ equal to $1$ and $3$ modulo $4$ while Lu et al.'s construction achieves the bound for $n=0,1,3$ mod $4$. For $n=2$ mod $4$, Adams et al. (IEEE Trans. Inform. Theory, 2010) have shown that the minimal decoding delay is twice the lower bound, in which case, both Liang's and Lu at al.'s construction achieve the minimum decoding delay. 
For large value of $n$, it is observed that the rate is close to half and the decoding delay is very large. A class of rate-$\frac{1}{2}$ codes with low decoding delay for all $n$
has been constructed by Tarokh et al. (IEEE Trans. Inform. Theory, 1999).
In this paper, another class of rate-$\frac{1}{2}$ codes is constructed for all $n$ in which case the decoding delay is half the decoding delay of the rate-$\frac{1}{2}$ codes given by Tarokh et al. This is achieved by giving first a general construction of square real orthogonal designs which includes as special cases the well-known constructions of Adams, Lax and Phillips and the construction of Geramita and Pullman, and then making use of it to obtain the desired rate-$\frac{1}{2}$ codes. For the case of 9 transmit antennas, the proposed rate-$\frac{1}{2}$ code is shown to be of minimal-delay. The proposed construction results in designs with zero entries which may have high peak-to-average power ratio and it is shown that by appropriate post-multiplication, a design with no zero entry can be obtained with no change in the code parameters.
\end{abstract}
\begin{IEEEkeywords}
Decoding delay, orthogonal designs, peak-to-average power ratio, space-time codes.
\end{IEEEkeywords}

\section{Introduction}
\label{sec1}
Space-time block codes (STBCs) from complex orthogonal designs (CODs) have been widely studied for square designs, since they correspond to minimum-delay codes for co-located multiple-antenna coherent communication systems. However, non-square designs naturally appear 
in the following  situations.
\begin{enumerate}
\item In coherent co-located MIMO systems, for a specified number of transmit antennas, non-square designs can give much higher rate than the square designs~\cite{Lia}.
\item In non-coherent MIMO systems with non-differential detection, non-square designs with $p=2n$ lead to low decoding complexity STBCs~\cite{TaK}.
\item Space-time-frequency codes can be viewed as non-square designs~\cite{JSKR}.
\item In distributed space-time coding for relay channels, rectangular designs appear naturally~\cite{JiH}.
\end{enumerate}
\begin{defn}
A {\textit{complex orthogonal design}} (COD) in complex variables $x_0,x_1,\cdots,x_{k-1}$ is a $p\times n$ matrix $G$ 
with entries $0,\pm x_0,\pm x_1,\cdots,\pm x_{k-1}$, their complex conjugates $\pm x_0^*,\pm x_1^*,\cdots,\pm x_{k-1}^*$ such that 
$G^\mathcal{H}G=({\vert x_0\vert}^2 +{\vert x_1\vert}^2+\cdots+{\vert x_{k-1}\vert}^2)I_n$, where $G^\mathcal{H}$ is the complex conjugate transpose of $G$ and $I_n$ is the $n\times n$ identity matrix.
The matrix $G$ is also said to be a $[p,n,k]$ COD.
When $x_0,\cdots,x_{k-1}$ are real variables, the corresponding design is called real orthogonal design (ROD). 
\end{defn}
An orthogonal design (OD) will always mean both real or complex orthogonal design.
The rate of a $[p,n,k]$ OD $G$ (defined as the number of complex symbols per channel use )
is $\frac{k}{p}$ and $p$ is called the decoding delay of the OD $G$.

The main problem in the construction of orthogonal designs is to construct a $p\times n$ orthogonal design (for given $n$) in $k$ variables which maximizes the rate $\frac{k}{p}$  and then to find a $p\times n$ orthogonal design with maximal rate which minimizes $p$.

It has been noted that the rate of the square ODs is very low for large number of antennas. 
Let $n$ be a positive integer and $\rho$ be a function (known as Hurwitz-Radon function) given by
the following formula: write $n=2^a(2b+1),a=4c+d;~a,b,c$ and $d$ are integers with $0\leq d\leq 3$, then
\begin{equation}
\label{rs}
\rho(n)=8c+2^d.
\end{equation}
It is known that~\cite{TiH,TJC,ALP} the maximal rate of a square ROD for $n$ transmit antennas is  $\frac{\rho(n)}{n}$ while that of a square COD $\frac{a+1}{n}$.

As the square ODs are not bandwidth efficient, it is natural to study non-square orthogonal designs expecting that there may exist codes with high rate. It is known~\cite{TJC} that there always exists a rate-1 ROD for any number of transmit antennas. In fact, all rate-1 RODs can be obtained from square RODs of appropriate size.
The minimum decoding delay of a rate-1 ROD for $n$ transmit antennas~\cite{TJC} is $\nu(n)$
which is given by the following formula:

{\footnotesize
\begin{eqnarray}
\label{nun}
\begin{array}{l}
\nu(n)=2^{\delta(n)}\mbox{   where   }\\
 \delta(n)=
    \begin{cases}
4s      & \text{ if $n=8s+1$ } \\
4s+1    & \text{ if $n=8s+2$ } \\
4s+2    & \text{ if $n=8s+3$ or $8s+4$ }\\
4s+3    & \text{ if $n=8s+5,8s+6,8s+7$ or $8s+8$. }
  \end{cases}
\end{array}
\end{eqnarray}
}

On the other hand, it is not known, in general, the maximal rate of a complex orthogonal design which admits as entries linear combination of several complex variables for arbitrary number of antennas. However, it is shown by Liang~\cite{Lia} that the maximal rate of a COD
is $\frac{t+1}{2t}$ whenever number of transmit antennas is $2t-1$ or $2t$.
Construction of maximal-rate CODs given by Liang ~\cite{Lia} is stated in the form of an algorithm while Lu et al~\cite{LFX} have constructed these codes by concatenating several matrices of smaller size.
The following theorem describes the minimum decoding delay of the maximal-rate non-square CODs:
\begin{table*}
\caption{The comparison of maximum rate achieving codes and rate 1/2  codes}
\label{recparameters}
\begin{center}
\begin{tabular}{c||rrrrrrrrrrrrrrrr}
$n$                     &5 &6 &7    &8  &9 &10  &11 &12 &13 &14 &15 &16 \\ \hline \hline
Decoding delay of $RH_n$ &8 &8 &8 &8 &16 &32 &64 &64 &128 &128 &128 &128 \\ \hline
Decoding delay of $TJC_n$ &16 &16 &16 &16 &32 &64 &128 &128 &256 &256 &256 &256 \\ \hline
\mbox{ Decoding delay of } $L_n$ &15 &30 &56 &56 &210 &420 &792 &792  &3003 & 6006& 11440& 11440\\ \hline \hline
$\mbox{ Rate of }RH_n$          &1/2 &1/2 &1/2 &1/2 &1/2 &1/2 &1/2 &1/2 &1/2 &1/2 & 1/2 &1/2\\\hline
$\mbox{ Rate of }TJC_n$          &1/2 &1/2 &1/2 &1/2 &1/2 &1/2 &1/2 &1/2 &1/2 &1/2 &1/2 &1/2\\\hline
$\mbox{ Rate of } L_n$          &2/3 &2/3 &5/8 &5/8 &3/5 &3/5 &7/12 &7/12 &4/7 &4/7 &9/16 &9/16 \\\hline\hline

\vspace*{0.1cm}
\end{tabular}
\end{center}
\hrule
\end{table*}
\begin{theorem}[~\cite{AKP,AKM}] 
\label{delaycod}
A tight lower bound on the decoding delay of a maximum-rate COD for $n$ antennas is $\binom{2m}{m-1}$ for $n=2m-1$ or $n=2m$. Moreover, if $n$ is congruent to $0,1$ or $3$ modulo $4$, then this lower bound on decoding delay is achievable. If $n$ is congruent to $2$ modulo $4$, the minimum decoding delay is twice the lower bound.
\end{theorem}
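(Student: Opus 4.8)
The plan is to establish three things: a lower bound $p\ge\binom{2m}{m-1}$ valid for every maximum-rate COD, a matching construction when $n\equiv 0,1,3\pmod 4$, and the strengthened bound $p\ge 2\binom{2m}{m-1}$ together with a matching construction when $n\equiv 2\pmod4$.

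\emph{Reduction to a canonical form.} Since the maximal rate for $n=2m-1$ or $n=2m$ is $\frac{m+1}{2m}$, a maximum-rate COD $G$ is, after padding a $[p,2m-1,\cdot]$ design with an extra orthogonal column, a $[p,2m,k]$ COD with $k=\frac{m+1}{2m}p$; so it suffices to show $k\ge\binom{2m-1}{m-1}$, equivalently $p\ge\binom{2m}{m-1}$. Writing each variable as $z_j=x_j+\sqrt{-1}\,y_j$ and $G=\sum_j(A_jx_j+B_jy_j)$, the identity $G^{\mathcal H}G=(\sum_j|z_j|^2)I_{2m}$ is equivalent to $A_j^{\mathcal H}A_j=B_j^{\mathcal H}B_j=I_{2m}$ together with the pairwise anticommuting-type relations $A_i^{\mathcal H}A_j+A_j^{\mathcal H}A_i=0$ (and their mixed analogues), where in addition each $B_j$ is fixed entrywise from $A_j$ by the conjugation pattern, $(B_j)_{rc}=\pm\sqrt{-1}\,(A_j)_{rc}$. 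First I would apply the equivalence operations that preserve being a COD and preserve $p$ and $k$ --- permuting and negating rows and columns, multiplying rows or columns by $\pm1$ or $\pm\sqrt{-1}$, and global conjugation --- to normalize $G$: sort the rows according to which variable and which conjugation state (``conjugated'' versus ``not'') occupies them, and fix one distinguished variable, say $z_{m+1}$, to appear in a prescribed block shape. This yields a block decomposition of $G$ whose combinatorics can be tracked explicitly.

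\emph{The counting lower bound.} The heart of the argument is to show that the normalized block structure forces at least $\binom{2m}{m-1}$ rows. I would proceed by induction on $m$: deleting the distinguished variable $z_{m+1}$ together with the rows and columns it governs should leave, on the remaining rows, a disjoint union of copies of a maximum-rate COD for a smaller antenna count, producing a Pascal-type recursion on row counts --- for instance one reflecting $\binom{2m}{m-1}=\binom{2m-2}{m-1}+2\binom{2m-2}{m-2}+\binom{2m-2}{m-3}$. The lemma that does the real work is that the normalization genuinely forces all the rows produced this way to be distinct and nonzero; I would prove it by examining, for each pair of columns $(i,j)$, the $p\times 2$ submatrix they span and using the anticommuting relations to exclude coincidences. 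That the answer is $\binom{2m}{m-1}=\binom{2m}{m}-C_m$ with $C_m$ the $m$-th Catalan number is the signature of the extra conjugation constraint a COD carries over a real orthogonal design: the ``Catalan deficiency'' is exactly the quantity this step has to pin down.

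\emph{The residue $n\equiv2\pmod4$ and achievability.} For $n\equiv2\pmod4$ one must additionally show the bare bound $\binom{2m}{m-1}$ is unattainable, i.e.\ $p\ge2\binom{2m}{m-1}$. I expect this to come from a Hurwitz--Radon obstruction: a design meeting $\binom{2m}{m-1}$ exactly would, after the above normalization, be forced to contain a square real orthogonal subdesign whose size violates the $\rho$-function constraint of \eqref{rs} precisely when $m$ is odd (which is the condition $2m\equiv2\pmod4$), so the row count must double. Isolating this obstruction is, I expect, the main difficulty of the theorem, since it is the one place where the naive extremal count does not already deliver the answer. For tightness, when $n\equiv0,1,3\pmod4$ I would verify by induction on the recursive construction of Lu et al.\ (and of Liang for $n\equiv1,3$) that it outputs a $[\binom{2m}{m-1},n,\binom{2m-1}{m-1}]$ COD, the induction matching row counts via the same binomial recursion; and when $n\equiv2\pmod4$ I would check that those constructions already have delay $2\binom{2m}{m-1}$, so the doubled lower bound is met as well.
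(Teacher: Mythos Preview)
The paper does not prove Theorem~\ref{delaycod}; it is quoted from \cite{AKP,AKM} as background and carries no proof in this paper. There is therefore no ``paper's own proof'' to compare your proposal against. Your sketch is a plausible outline of how the cited results are actually obtained---a normalization/equivalence reduction, a combinatorial row-count forcing $\binom{2m}{m-1}$, and a separate parity obstruction for $n\equiv 2\pmod 4$---but none of that appears here, and for the purposes of this paper the statement is simply imported.

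If your task is to supply a proof where the paper gives none, be aware that your proposal is still only a plan: the inductive ``deletion leaves smaller maximum-rate CODs'' step and the claim that the $n\equiv 2\pmod 4$ obstruction reduces to a Hurwitz--Radon $\rho$-function violation are exactly the hard parts of \cite{AKP} and \cite{AKM}, and each requires substantial bookkeeping that your sketch does not yet provide.
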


As the rate of the maximal-rate codes is close to $\frac{1}{2}$ for large number of antennas
and the decoding delay of these codes is large, it is important to know whether there exists rate-$\frac{1}{2}$ codes with low decoding delay.
The importance of determining the delay of rate-$\frac{1}{2}$ CODs has also been noted by Adams et al~\cite{AKP}. 

 A construction of rate-$\frac{1}{2}$ codes for any number of antennas is given by Tarokh et al.~\cite{TJC}. Their construction is simple: start with a rate-1 ROD $\mathcal{O}$ for $n$ antennas in $\nu(n)$ variables $x_0,x_2,\cdots,x_{\nu(n)-1}$,
and then form the following matrix
\begin{eqnarray}  
\label{TJC_half}
TJC_n=\frac{1}{\sqrt{2}}\left[\begin{array}{c}
\mathcal{O}\\
\mathcal{O}^*
    \end{array}\right]
\end{eqnarray}
where $\mathcal{O}^*$ is obtained from $\mathcal{O}$ by replacing each variable with its complex conjugate and $\nu(n)$ is given by \eqref{nun}.  Note that the number of rows in $TJC_n$ is $2\nu(n)$ and each variable appears twice along each column of the matrix.

We define a {\textit {$\lambda$-scaled complex orthogonal design}}, for a positive integer $\lambda$, ($\lambda$-scaled-COD) $G$ as a $p\times n$ orthogonal matrix
with non-zero entries the indeterminates $\pm x_0,\pm x_1,\cdots,\pm x_{k-1}$, their conjugates
or all the non-zero entries in a subset of columns of the matrix are of the form  $\pm\frac{1}{\sqrt{\lambda}}x_i,\pm\frac{1}{\sqrt{\lambda}}x_i^*,i=0,1,\cdots,k-1$.
Notice that a $\lambda$-scaled COD corresponds to a COD if $\lambda =1$.
In columns with scaling by $\frac{1}{\sqrt{\lambda}}$, all the variables appear exactly $\lambda$ times. 
In other words, lambda scaling (where Lambda ($\lambda$) is an integer greater than $1$) of a complex orthogonal design allows  all the non-zero entries in a subset of columns of the matrix
to take values from the set $\{\pm \frac{1}{\sqrt{\lambda}}x_i,\pm \frac{1}{\sqrt{\lambda}}x_i^*,i=0,1,\cdots,k-1\}$.
 It must be noted that scaling of a design is not something new as it has been already used by Seberry et al.~\cite{SSW} to construct orthogonal designs with fewer zeros. 
In this paper, $\lambda$ is always $2$ and call these codes simply {\textit{scaled-COD}}s. 

In the most general case, a {\textit{linear-processing complex orthogonal design}} (LPCOD) is a $p\times n$ orthogonal matrix $G$ in variables $x_0,x_1,\cdots,x_{k-1}$ such that each non-zero entry of the matrix is a complex linear combinations of the variables $x_0,x_1,\cdots,x_{k-1}$ and their conjugates.
If $x_0,x_1,\cdots,x_{k-1}$ are real variables, then the corresponding design is called {\textit{linear-processing real orthogonal design}} (LPROD).
Note that a scaled-COD is an LPCOD, but not conversely.
An example~\cite{TJC} of an LPCOD which is not a scaled-COD is the following code:
\begin{eqnarray*}
\left[
\begin{array}{cccc}
x_0   & x_1   &\frac{x_2}{\sqrt{2}} & \frac{x_2}{\sqrt{2}}     \\
-x_1^* & x_0^* &\frac{x_2}{\sqrt{2}} & \frac{-x_2}{\sqrt{2}}   \\
\frac{x_2^*}{\sqrt{2}} & \frac{x_2^*}{\sqrt{2}}  &\frac{(-x_0-x_0^*+x_1-x_1^*)}{2}& \frac{(x_0-x_0^*-x_1-x_1^*)}{2} \\ 
\frac{x_2^*}{\sqrt{2}} & \frac{-x_2^*}{\sqrt{2}}  &\frac{(x_0-x_0^*+x_1+x_1^*)}{2} &-\frac{(x_0+x_0^*+x_1-x_1^*)}{2}
\end{array}
\right].
\end{eqnarray*}
It has been observed that the decoding delay of the rate-$\frac{1}{2}$ codes obtained by the construction~\eqref{TJC_half} is not the best possible: for example, the following code for $8$ antennas
\begin{eqnarray*}
\left[\begin{array}{r@{\hspace{0.9pt}}r@{\hspace{0.9pt}}r@{\hspace{0.9pt}}r@{\hspace{0.9pt}}r@{\hspace{0.9pt}}r@{\hspace{0.9pt}}r@{\hspace{0.9pt}}r@{\hspace{0.9pt}}}
    x_0   &-x_1^*  &-x_2^*& 0    &-x_3^* & 0     & 0    & 0 \\
    x_1   & x_0^*  & 0    &-x_2^*& 0     &-x_3^* & 0    & 0 \\
    x_2   & 0      & x_0^*& x_1^*& 0     & 0     &-x_3^*& 0 \\
     0    & x_2    &-x_1  & x_0  & 0     & 0     & 0    &-x_3^* \\
    x_3   & 0      & 0    & 0    & x_0^* & x_1^* & x_2^*& 0 \\
     0    & x_3    & 0    & 0    &-x_1   & x_0   & 0    & x_2^*\\
     0    & 0      & x_3  & 0    &-x_2   & 0     & x_0  &-x_1^*\\
     0    & 0      & 0    & x_3  & 0     &-x_2   & x_1  & x_0^*\\
\end{array}\right]
\end{eqnarray*}
is a rate-$\frac{1}{2}$ COD with decoding delay $8$, whereas the corresponding rate-$\frac{1}{2}$ code given by the construction~\eqref{TJC_half}
has decoding delay $16$. 
This indicates that there may exist rate-$\frac{1}{2}$ scaled-COD for any number of antennas with half the decoding delay of the rate-$\frac{1}{2}$ code given by~\eqref{TJC_half}.

In this paper, we provide an explicit construction of rate-$\frac{1}{2}$ scaled-COD for any number of transmit antennas, say $n$, with decoding delay $\nu(n)$. 
Table~\ref{recparameters} gives a comparison of the three classes of codes, namely, maximal rate CODs (denoted by $L_n$), rate-$\frac{1}{2}$ scaled-CODs ($TJC_n$) and the rate-$\frac{1}{2}$ codes of this paper (denoted by $RH_n$). It shows that for large values of $n,$ but for a marginal decrease in the rate with respect to $L_n,$ the codes of this paper are the best codes known to date with respect to decoding delay.

As a byproduct of the above mentioned  construction, a general construction of square RODs is presented which includes as special cases the well-known constructions of Adams, Lax and Phillips~\cite{ALP} and the construction of Geramita and Pullman~\cite{GeP}.

Though the minimum value of the decoding delay of the maximal-rate CODs is well-known~\cite{AKP}, nothing is known about the minimal-delay of the rate-$\frac{1}{2}$ scaled-CODs. However, we have only been able to show that the decoding delay of the proposed rate-$\frac{1}{2}$ code for $9$ transmit antennas is minimum.

Zero entries in a design increase the peak-to-average power ratio (PAPR) in the transmitted signal and it is preferred not to have any zero entry in the design. This problem has been addressed for square and non-square orthogonal designs~\cite{DaR,TWSMS,SSW}. Our initial construction of rate-$\frac{1}{2}$ scaled-CODs contain zero entries in the design matrix which will lead to higher PAPR in contrast to the designs $TJC_n$ given by \eqref{TJC_half}. However, we show that by post-multiplication of appropriate matrices, our construction leads to designs with no zero entry without any change in the parameters of the designs.

The remaining part of the paper is organized as follows: In Section \ref{sec2}, we present the main  result of the paper given by Theorem \ref{rate12cod}.  
For the special case of 9 transmit antennas, in Section \ref{sec3}, it is shown that our construction is of  minimal delay. In Section \ref{sec4}, we show that the codes discussed so far can be made to have no zero entry in it by appropriate preprocessing without affecting the parameters of the design. Concluding remarks constitute Section \ref{sec5}. 

\section{A Construction of rate-$\frac{1}{2}$ Scaled Complex Orthogonal Designs}
\label{sec2}
Construction of the rate-$\frac{1}{2}$ codes is obtained in the following three steps:\\
STEP 1:  Construction of a new set of square RODs (Subsection \ref{subsec2-1}).\\
STEP 2:  Construction of two new sets of rate-1 RODs from the square RODs of STEP 1 (Subsection \ref{subsec2-2}). \\
STEP 3:  Construction of low-delay rate-$\frac{1}{2}$ scaled-CODs using rate-1 RODs (Subsection \ref{subsec2-3}).

Before explaining these steps, we first build up some preliminary results needed to describe these steps. 
\subsection{Mathematical Preliminaries}
$\mathbb{F}_2$ denotes the finite field consisting of two elements with two binary operations addition and multiplication denoted by $b_1\oplus b_2$ and $b_1b_2$ respectively, $b_1,b_2\in \mathbb{F}_2$.
Let $b_1+b_2$ and $\bar{b}_1$ represent respectively the logical disjunction (OR) of $b_1$ and $b_2$ and complement or negation of $b_1$.\\
Let $l$ be a non-zero positive integer and $Z_l=\{0,1,\cdots,l-1\}$. We identify $Z_{2^a}$ with the set $\f_2^a$ of $a$-tuple binary vectors in the standard way, i.e., 
any element of $Z_{2^a}$ is identified with its radix-2 representation vectors (of length $a$)
via the correspondence:  
$x\in Z_{2^a} \leftrightarrow (x_{a-1},\cdots,x_0)\in\mathbb{F}_2^a$ such that $x=\sum_{j=0}^{a-1} x_j2^j,x_j\in \mathbb{F}_2$.
For convenience, depending on the context, the  set $Z_{2^a}$ is used as the set of positive integers and sometimes as the set of binary vectors.\\
For $x=(x_{a-1},\cdots,x_0),y=(y_{a-1},\cdots,y_0),x_i,y_i\in \f_2,i=0,1,\cdots,a-1$, the component-wise modulo-2 addition and the component-wise multiplication  of $x$ and $y$ are denoted by $x\oplus y$ and $x\cdot y$ respectively.
We have $x\oplus y=(x_{a-1}\oplus y_{a-1},\cdots,x_0\oplus y_0),x\cdot y=(x_{a-1} y_{a-1},\cdots,x_0y_0)$.
The two's complement of a number $x\in Z_{2^a}$, denoted by $\overline{x}$ is defined as the value obtained by subtracting the number from a large power of two (specifically, from $2^a$ for an $a$-bit two's complement) i.e., $\overline{x}=2^a-x$.\\
The Hamming weight of $x$, denoted by $\vert x\vert$ is the number of $1$ in the binary representation of $x$.
For two integers $i,j$, we use the notation $i\equiv j$, to mean $i-j=0 \mod 2$.

For any matrix of size $n_1\times n_2,$ the rows and the columns of the matrix are labeled by the elements of $\{0,1,\cdots,n_1-1\}$ and $\{0,1,\cdots,n_2-1\}$ respectively. If $M$ is a $p\times n$ matrix in $k$ real variables $x_0,x_1, x_2,\cdots, x_{k-1}$, such that each non-zero entry of the matrix is $x_i$ or $-x_i$ for some $i\in\{0,1,\cdots,k-1\}$, it is not necessary that $M$ is an ROD. For example,
$\begin{bmatrix}
    x_0   &x_1    \\
    x_1   & x_0  
\end{bmatrix}
$ is not an ROD. 
A sub-matrix $M_2$ of size $2\times 2,$  constructed by choosing any two rows and any two columns of $M$ is called {\textit{proper}} if
\begin{itemize}
\item none of the entries of $M_2$ is zero and
\item it contains exactly two distinct variables.
\end{itemize}
\begin{example}
Consider the following matrix in three real variables $x_0,x_1$ and $x_2$
{\small
\begin{equation}
\label{m4}
 \begin{bmatrix}
    x_0   &-x_1   &-x_2  & 0     \\
    x_1   & x_0   & 0    &-x_2     \\
    x_2   & 0     & x_0  & x_1     \\
     0    & x_2   &-x_1  & x_0
\end{bmatrix}.
\end{equation}}
The sub-matrix 
$\begin{bmatrix}
    x_1   &-x_2    \\
    x_2   & x_1   
  \end{bmatrix}
$ is {\textit{proper}} while
$
\begin{bmatrix}
    x_3   & 0 \\
    0     & x_3 
  \end{bmatrix}
$
is not.
\end{example}
If $M(i,j)\neq 0$, then we write $\vert M(i,j)\vert= k$ whenever $M(i,j)=x_k$ or $-x_k$. 

It is easy to see that the following two statements are equivalent:\\
 1) $M$ is an ROD.\\
 2) (i) Each variable appears exactly once along each column\\
\hspace*{.90cm}of $M$ and at most once along each row of $M$,\\
\hspace*{.40cm}(ii) if for some $i,j,j^\prime$, $M(i,j)\neq 0$ and $M(i,j^\prime)\neq 0$,\\
 \hspace*{.90cm}then there exists $i^\prime$ such that $\vert M(i,j)\vert=\vert M(i^\prime,j^\prime)\vert$ \\
 \hspace*{.80cm} and $\vert M(i,j^\prime)\vert=\vert M(i^\prime,j)\vert$,\\
 \hspace*{.40cm}(iii) any proper $2\times 2$ sub-matrix of $M$ is an ROD.
\subsection{STEP 1: Construction of a new class of square RODs}
\label{subsec2-1}
Square RODs have been constructed by several authors, for example, Adams et al.~\cite{ALP} and Geramita et al~\cite{GeP}. All these designs are constructed recursively and the basic building blocks of these designs are the RODs of order $1,2,4$ and $8$.
In this subsection, we take a different approach towards the construction of square RODs and it leads to a new class of RODs of which the constructions in~\cite{ALP} and~\cite{GeP} are special cases.
For any ROD, a non-zero entry of it is characterized by a pair of two integers, the first component of which takes value from the set $\{+1,-1\}$ denoting the sign of the entry
while the second component represents the variable at that entry. For example, the $(0,0)$-th entry of \eqref{m4} corresponds to the pair $(1,0)$ while the $(0,1)$-th entry corresponds to $(-1,1)$.

For a square ROD $B_t$ of order $t$ in $k$ real variables $x_0,\cdots,x_{k-1}$, we define two functions $\mu_t$ and $\lambda_t$ on the set $Z_t\times Z_t$ with ${\mu}_t(i,j)\in\{1,-1\}$ and $\lambda_t(i,j)\in Z_k,i,j\in Z_t$ such that $B_t(i,j)={\mu}_t(i,j)x_{\lambda_t(i,j)}$ whenever $B_t(i,j)\neq 0$.
It is straightforward to see that $B_t$ is uniquely determined by $\mu_t$ and $\lambda_t$.
However, any arbitrary choice of these two functions will not lead to a square ROD. Therefore
the approach we take is identifying a pair of functions $\mu_t$ and $\lambda_t$ that results in a square ROD. 
Let 
\begin{equation}
\label{gammamap}
\gamma_t: Z_{\rho(t)}\rightarrow Z_t 
\end{equation}
be an injective map defined on $Z_{\rho(t)}$ with the image denoted by
$\hat{Z}_{\rho(t)}=\gamma_t(Z_{\rho(t)})$ 
and  
\begin{equation}
\label{psimap}
\psi_t: \hat{Z}_{\rho(t)}\rightarrow Z_t 
\end{equation}
be another injective map defined on $\hat{Z}_{\rho(t)}.$ $\rho(t)$ is given by \eqref{rs}.

In the following theorem, we define two maps $\mu_t$ and $\lambda_t$ 
in terms of the maps \eqref{gammamap} and \eqref{psimap} and identify the conditions 
so that the resulting $B_t$ becomes a square ROD. 

\begin{theorem}
\label{ratesquarerod}
Let $t=2^a$. Construct a square matrix $B_t$ of order $t$ in $\rho(t)$ variables $x_0,\cdots,x_{\rho(t)-1}$ as follows:
\begin{eqnarray*}
 B_t(i,j)=
    \begin{cases}
{\mu}_t(i,j)x_{\lambda_t(i,j)}      & \text{ if $i\oplus j\in \hat{Z}_{\rho(t)}$ } \\
 0 & \text{ otherwise, }
\end{cases}
\end{eqnarray*}
where $\mu_t(i,j)=(-1)^{\vert i\cdot\psi_t(i\oplus j)\vert}$ and 
$\lambda_t(i,j)=\gamma_t^{-1}(i\oplus j)$. 
Suppose $\forall x,y\in\hat{Z}_{\rho(t)}, x\neq y$,
\begin{equation}
\label{oddcondition}
\vert (\psi_t(x)\oplus \psi_t(y))\cdot(x\oplus y)\vert
\mbox{ is odd.} 
\end{equation}
Then $B_t$ is a square ROD of size $[t,t,\rho(t)]$.

\end{theorem}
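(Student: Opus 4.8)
The plan is to verify the three-part characterization of RODs recalled just before Subsection~\ref{subsec2-1}: it suffices to show that (i) each of $x_0,\dots,x_{\rho(t)-1}$ occurs exactly once in every column of $B_t$ and at most once in every row; (ii) the ``rectangle'' property 2)(ii) holds; and (iii) every proper $2\times 2$ submatrix of $B_t$ is an ROD. I expect the whole argument to be driven by two structural facts, that $B_t(i,j)\neq 0$ exactly when $i\oplus j\in\hat{Z}_{\rho(t)}$ and that in that case $\vert B_t(i,j)\vert=\gamma_t^{-1}(i\oplus j)$, together with the elementary parity identities $\vert u\oplus v\vert\equiv \vert u\vert+\vert v\vert\pmod 2$ and $(a\cdot c)\oplus(b\cdot c)=(a\oplus b)\cdot c$ for binary vectors of equal length.

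For (i) I would fix a column $j$; since $i\mapsto i\oplus j$ permutes $Z_t$, the rows with a non-zero entry in column $j$ are $\{\,j\oplus s : s\in\hat{Z}_{\rho(t)}\,\}$, and the variable in row $j\oplus\gamma_t(k)$ is precisely $x_k$, so each variable occurs exactly once per column, and symmetrically once per row. For (ii), given $B_t(i,j)\neq 0$ and $B_t(i,j')\neq 0$, write $x=i\oplus j\in\hat{Z}_{\rho(t)}$ and $y=i\oplus j'\in\hat{Z}_{\rho(t)}$ with $x\neq y$, and set $i'=i\oplus j\oplus j'$; then $i'\oplus j'=x$ and $i'\oplus j=y$, so $B_t(i',j),B_t(i',j')\neq 0$ and, by injectivity of $\gamma_t^{-1}$, $\vert B_t(i,j)\vert=\vert B_t(i',j')\vert$ and $\vert B_t(i,j')\vert=\vert B_t(i',j)\vert$. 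Neither of these uses~\eqref{oddcondition}.

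The content is in (iii), and this is where I expect the real work. Take a proper $2\times 2$ submatrix on rows $i,i'$ and columns $j,j'$. Within each row the two entries have distinct variable indices (injectivity of $\gamma_t^{-1}$, since $j\neq j'$), and likewise within each column; since the submatrix carries exactly two distinct variables, the diagonal entries must share one variable and the anti-diagonal entries the other. This forces $i\oplus j=i'\oplus j'$, i.e.\ $i\oplus i'=j\oplus j'$, so writing $x=i\oplus j$ and $y=i\oplus j'=i'\oplus j$ we get $x,y\in\hat{Z}_{\rho(t)}$, $x\neq y$, $x\oplus y=i\oplus i'=j\oplus j'$, and the submatrix equals (up to a harmless relabelling) $\bigl(\begin{smallmatrix}\mu_t(i,j)x_{\gamma_t^{-1}(x)} & \mu_t(i,j')x_{\gamma_t^{-1}(y)}\\ \mu_t(i',j)x_{\gamma_t^{-1}(y)} & \mu_t(i',j')x_{\gamma_t^{-1}(x)}\end{smallmatrix}\bigr)$. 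Such a $2\times 2$ matrix in two real variables is an ROD if and only if $\mu_t(i,j)\mu_t(i,j')\mu_t(i',j)\mu_t(i',j')=-1$. Substituting the definition of $\mu_t$ and $i'=i\oplus x\oplus y$, the exponent of $-1$ in this product is $\vert i\cdot\psi_t(x)\vert+\vert i'\cdot\psi_t(x)\vert+\vert i\cdot\psi_t(y)\vert+\vert i'\cdot\psi_t(y)\vert$, which by the two parity identities is $\equiv\vert(x\oplus y)\cdot(\psi_t(x)\oplus\psi_t(y))\vert\pmod 2$; this is odd exactly by hypothesis~\eqref{oddcondition}. Hence every proper $2\times 2$ submatrix is an ROD, and combining (i)--(iii) gives that $B_t$ is a $[t,t,\rho(t)]$ ROD.

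The anticipated obstacle is thus entirely in step (iii): first the combinatorial observation that a proper $2\times 2$ submatrix must exhibit the diagonal/anti-diagonal variable pattern, so that it is parametrised by an unordered pair $\{x,y\}\subseteq\hat{Z}_{\rho(t)}$ and an anchor row $i$ (with $j=i\oplus x$, $j'=i\oplus y$, $i'=i\oplus x\oplus y$); and then the sign computation, where the four factors $\mu_t(i,j),\mu_t(i,j'),\mu_t(i',j),\mu_t(i',j')$ must be made to collapse, via $\vert u\oplus v\vert\equiv\vert u\vert+\vert v\vert$ and $(a\cdot c)\oplus(b\cdot c)=(a\oplus b)\cdot c$, precisely onto the quantity appearing in~\eqref{oddcondition}. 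I would also keep an eye on the reassuring sanity check that $\hat{Z}_{\rho(t)}$ plays no role beyond marking the non-zero entries, confirming that the hypothesis is used in full and that no additional condition on $\gamma_t$ is required.
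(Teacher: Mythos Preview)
Your proposal is correct and follows essentially the same three-step verification as the paper's proof: the column/row occurrence count, the rectangle property with $i'=i\oplus j\oplus j'$, and the sign computation collapsing $\mu_t(i,j)\mu_t(i,j')\mu_t(i',j)\mu_t(i',j')$ onto $\vert(x\oplus y)\cdot(\psi_t(x)\oplus\psi_t(y))\vert$ via the two parity identities. The only difference is that you spell out why a proper $2\times 2$ submatrix forces $i\oplus i'=j\oplus j'$, whereas the paper simply asserts this constraint at the start of step (iii); your added justification is correct and fills a small gap the paper leaves implicit.
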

\begin{proof}
By definition, 
each of the variables $x_0,x_1,\cdots,x_{\rho(t)-1}$ appears exactly once in each column of the matrix and at most once along each row of $B_t$.
Secondly, assume that $B_t(i,j)\neq 0$ and $B_t(i,j^\prime)\neq 0$, then we show that there exists $i^\prime$ such that
\begin{eqnarray*}
\vert B_t(i,j)\vert=\vert B_t(i^\prime,j^\prime)\vert \mbox{ and  }
\vert B_t(i,j^\prime)\vert = \vert B_t(i^\prime,j)\vert.
\end{eqnarray*}

Let $i^\prime=i\oplus j\oplus j^\prime$. Then 
$\vert B_t(i,j)\vert=\gamma_t^{-1}(i\oplus j)$ and
$\vert B_t(i^\prime,j^\prime)\vert=\gamma_t^{-1}(i^\prime\oplus j^\prime)=\gamma_t^{-1}(i\oplus j)$,
therefore $\vert B_t(i,j)\vert=\vert B_t(i^\prime,j^\prime)\vert$.
Similarly, $\vert B_t(i,j^\prime)\vert=\vert B_t(i^\prime,j)\vert$.

Thirdly, we show that any proper $2\times 2$ sub-matrix of $B_t$ is an ROD, that is, $\mu_t(i,j)\cdot \mu_t(i,j^\prime)\cdot \mu_t(i^\prime,j)\cdot \mu_t(i^\prime,j^\prime)=-1$
whenever $i+i^\prime=j\oplus j^\prime$. 
Now
\begin{align*}
\label{inter}
&\vert i\cdot\psi_t(i\oplus j) \vert +\vert i\cdot\psi_t(i\oplus j^\prime) \vert
+\vert i^\prime\cdot\psi_t(i^\prime\oplus j) \vert\\
&~~~~~+\vert i^\prime\cdot\psi_t(i^\prime\oplus j^\prime)\vert\\
&\equiv \vert (i\oplus i^\prime)\cdot(\psi_t(i\oplus j) \oplus \psi_t(i^\prime\oplus j)) \vert\\
&\equiv \vert ((i\oplus j)\oplus(i^\prime\oplus j))\cdot(\psi_t(i\oplus j) \oplus \psi_t(i^\prime\oplus j)) \vert
\end{align*}
is an odd number.
Therefore, $\mu_t(i,j)\cdot \mu_t(i,j^\prime)\cdot \mu_t(i^\prime,j)\cdot \mu_t(i^\prime,j^\prime)=-1$.

\end{proof}

We now construct the maps $\psi_t$ and $\gamma_t$ explicitly such that \eqref{oddcondition} is satisfied.
The map $\gamma_t:Z_{\rho(t)}\rightarrow Z_t$ is given by
{\small{
\begin{eqnarray}
\label{gammat}
 \gamma_t(i)=
    \begin{cases}
i   & \text{ if  $0\leq i\leq 7$ } \\
2^{4l-1}\cdot\hat{\gamma}(m)&\text{ if $i\geq 8,i=8l+m$, $0\leq m\leq 7$ }
  \end{cases}
  \end{eqnarray}
}}
\begin{equation*}  
\mbox{ where }\hat{\gamma}=\left(\begin{array}{cccccccc}
 0& 1 &2 &3&4&5&6&7 \\
 1& 2 &4 &7&8&11&13&14
\end{array}\right),
\end{equation*}
that is, $\hat{\gamma}(0)=1,\cdots,\hat{\gamma}(7)=14$.\\
Let $F=\hat{\gamma}(Z_8)$.
For an element $x\in \hat{Z}_{\rho(t)}$, either $x\in Z_8$ or $x=2^{4y-1}z$ for some $y\in\mathbb{N}\setminus\{0\}$ and $z\in F$. Note that $\hat{Z}_{\rho(t)}=\gamma_t(Z_{\rho(t)})$.

We now define a map $\phi:\hat{Z}_{\rho(t)}\rightarrow Z_t$ given by
\begin{eqnarray}
\phi(x)&=&
    \begin{cases}
\phi_1(x)  & \text{ if  $x\in Z_8$} \\
2^{4y-1}\cdot\phi_2(z)&\text{ if $x=2^{4y-1}z,~z\in F$}
  \end{cases}
  \end{eqnarray}
where $\phi_1: Z_8\rightarrow Z_8$ be the map given by
 \begin{equation}  
\label{phi_07}
\phi_1=\left(\begin{array}{cccccccccccc}
    0   & 1 &2 &3&4&5&6&7   \\
    0   & 1 &2 &3&4&7&5&6
    \end{array}\right)
\end{equation}
and $\phi_2: F\rightarrow Z_{16}$ be an injective map given by
\begin{equation}  
\phi_2=\left(\begin{array}{cccccccccccc}
     1 &2 &4&7&8&11&13&14   \\
     1 &2 &4&6&8&15&10&12
    \end{array}\right).
\end{equation}
Let 
\begin{equation}
\label{mappsi}
\psi_t(x)=\overline{\phi(x)} \mbox{ in } \mathbb{F}_2^a~ \forall x\in \hat{Z}_{\rho(t)}.
\end{equation}
Note that $\overline{z}$ is two's complement of $z$.\\
In order to show that the map $\psi_t$ so constructed satisfies the condition of \eqref{oddcondition}, we need the following two results related to the maps $\phi_1$ and $\phi_2$.

\begin{figure*}
\small{
\begin{eqnarray}
\label{R16}
R_{16}=
\left[\begin{array} 
{ r @{\hspace{.2pt}} r @{\hspace{.2pt}} r @{\hspace{.2pt}} r @{\hspace{.2pt}}r @{\hspace{.2pt}} r @{\hspace{.2pt}} r @{\hspace{.2pt}} r @{\hspace{.2pt}}
r @{\hspace{.2pt}} r @{\hspace{.2pt}} r @{\hspace{.2pt}} r @{\hspace{.2pt}}r @{\hspace{.2pt}} r @{\hspace{.2pt}} r @{\hspace{.2pt}} r @{\hspace{.2pt}}}
  x_0&x_1&  x_2& x_3& x_4& x_5&  x_6 & x_7 & x_8 &  0 &  0 &  0 &  0 &  0 &  0 &  0\\
 -x_1 & x_0& -x_3 & x_2& -x_5 & x_4 & x_7 & -x_6 &  0 & x_8 &  0 &  0 &  0 &  0 &  0 &  0 \\
 -x_2 & x_3 & x_0 & -x_1 & -x_6 & -x_7 & x_4 & x_5 &  0 &  0 & x_8 &  0 &  0 &  0 &  0 &  0 \\
 -x_3 & -x_2 & x_1 & x_0 & -x_7 & x_6 & -x_5 & x_4 &  0 &  0 &  0 & x_8 &  0 &  0 &  0 &  0 \\
 -x_4 & x_5 & x_6 & x_7 & x_0 & -x_1 & -x_2 & -x_3 &  0 &  0 &  0 &  0 & x_8 &  0 &  0 &  0 \\
 -x_5 & -x_4 & x_7 & -x_6 & x_1 & x_0 & x_3 & -x_2 &  0 &  0 &  0 &  0 &  0 & x_8 &  0 &  0 \\
 -x_6 & -x_7 & -x_4 & x_5 & x_2 & -x_3 & x_0 & x_1 &  0 &  0 &  0 &  0 &  0 &  0 & x_8 &  0 \\
 -x_7 & x_6 & -x_5 & -x_4 & x_3 & x_2 & -x_1 & x_0 &  0 &  0 &  0 &  0 &  0 &  0 &  0 & x_8 \\
 -x_8 & 0 &  0 &  0 &  0 &  0 &  0 &  0 & x_0 & -x_1 & -x_2 & -x_3 & -x_4 & -x_5 & -x_6 & -x_7 \\
   0 & -x_8 &  0 &  0 &  0 &  0 &  0 &  0 & x_1 & x_0 & x_3 & -x_2 & x_5 & -x_4 & -x_7 & x_6 \\
   0 &  0 & -x_8 &  0 &  0 &  0 &  0 &  0 & x_2 & -x_3 & x_0 & x_1 & x_6 & x_7 & -x_4 & -x_5 \\
   0 &  0 &  0 & -x_8 &  0 &  0 &  0 &  0 & x_3 & x_2 & -x_1 & x_0 & x_7 & -x_6 & x_5 & -x_4 \\
   0 &  0 &  0 &  0 & -x_8 &  0 &  0 &  0 & x_4 & -x_5 & -x_6 & -x_7 & x_0 & x_1 & x_2 & x_3 \\
   0 &  0 &  0 &  0 &  0 & -x_8 &  0 &  0 & x_5 & x_4 & -x_7 & x_6 & -x_1 & x_0 & -x_3 & x_2 \\
   0 &  0 &  0 &  0 &  0 &  0 & -x_8 &  0 & x_6 & x_7 & x_4 & -x_5 & -x_2 & x_3 & x_0 & -x_1 \\
   0 &  0 &  0 &  0 &  0 &  0 &  0 & -x_8 & x_7 & -x_6 & x_5 & x_4 & -x_3 & -x_2 & x_1 & x_0 
\end{array}\right]
\end{eqnarray}
}
 \scriptsize{
\begin{eqnarray}
\label{R32}
\left[\begin{array}{ r @{\hspace{.2pt}} r @{\hspace{.2pt}} r @{\hspace{.2pt}} r @{\hspace{.2pt}}r @{\hspace{.2pt}} r @{\hspace{.2pt}} r @{\hspace{.2pt}} r @{\hspace{.2pt}}
r @{\hspace{.2pt}} r @{\hspace{.2pt}} r @{\hspace{.2pt}} r @{\hspace{.2pt}}r @{\hspace{.2pt}} r @{\hspace{.2pt}} r @{\hspace{.2pt}} r @{\hspace{.2pt}} r @{\hspace{.2pt}} r @{\hspace{.2pt}} r @{\hspace{.2pt}} r @{\hspace{.2pt}}r @{\hspace{.2pt}} r @{\hspace{.2pt}} r @{\hspace{.2pt}} r @{\hspace{.2pt}}r @{\hspace{.2pt}} r @{\hspace{.2pt}} r @{\hspace{.2pt}} r @{\hspace{.2pt}}r @{\hspace{.2pt}} r @{\hspace{.2pt}} r @{\hspace{.2pt}} r @{\hspace{.2pt}}}
  x_0 & x_1 & x_2 & x_3 & x_4 & x_5 & x_6 & x_7 & x_8 &  0 &  0 &  0 &  0 &  0 &  0 &  0 & x_9 &  0 &  0 &  0 &  0 &  0 &  0 &  0 &  0 &  0 &  0 &  0 &  0 &  0 &  0 &  0 \\
 -x_1 & x_0 & -x_3 & x_2 & -x_5 & x_4 & x_7 & -x_6 &  0 & x_8 &  0 &  0 &  0 &  0 &  0 &  0 & 0 & x_9 &  0 &  0 &  0 &  0 &  0 &  0 &  0 &  0 &  0 &  0 &  0 &  0 &  0 &  0 \\
 -x_2 & x_3 & x_0 & -x_1 & -x_6 & -x_7 & x_4 & x_5 &  0 &  0 & x_8 &  0 &  0 &  0 &  0 &  0 & 0 &  0 & x_9 &  0 &  0 &  0 &  0 &  0 &  0 &  0 &  0 &  0 &  0 &  0 &  0 &  0 \\
 -x_3 & -x_2 & x_1 & x_0 & -x_7 & x_6 & -x_5 & x_4 &  0 &  0 &  0 & x_8 &  0 &  0 &  0 &  0 & 0 &  0 &  0 & x_9 &  0 &  0 &  0 &  0 &  0 &  0 &  0 &  0 &  0 &  0 &  0 &  0 \\
 -x_4 & x_5 & x_6 & x_7 & x_0 & -x_1 & -x_2 & -x_3 &  0 &  0 &  0 &  0 & x_8 &  0 &  0 &  0 & 0 &  0 &  0 &  0 & x_9 &  0 &  0 &  0 &  0 &  0 &  0 &  0 &  0 &  0 &  0 &  0 \\
 -x_5 & -x_4 & x_7 & -x_6 & x_1 & x_0 & x_3 & -x_2 &  0 &  0 &  0 &  0 &  0 & x_8 &  0 &  0 & 0 &  0 &  0 &  0 &  0 & x_9 &  0 &  0 &  0 &  0 &  0 &  0 &  0 &  0 &  0 &  0 \\
 -x_6 & -x_7 & -x_4 & x_5 & x_2 & -x_3 & x_0 & x_1 &  0 &  0 &  0 &  0 &  0 &  0 & x_8 &  0 & 0 &  0 &  0 &  0 &  0 &  0 & x_9 &  0 &  0 &  0 &  0 &  0 &  0 &  0 &  0 &  0 \\
 -x_7 & x_6 & -x_5 & -x_4 & x_3 & x_2 & -x_1 & x_0 &  0 &  0 &  0 &  0 &  0 &  0 &  0 & x_8 & 0 &  0 &  0 &  0 &  0 &  0 &  0 & x_9 &  0 &  0 &  0 &  0 &  0 &  0 &  0 &  0 \\
 -x_8 &  0 &  0 &  0 &  0 &  0 &  0 &  0 & x_0 & -x_1 & -x_2 & -x_3 & -x_4 & -x_5 & -x_6 & -x_7 &  0 &  0 &  0 &  0 &  0 &  0 &  0 &  0 & x_9 &  0 &  0 &  0 &  0 &  0 &  0 &  0 \\
   0 & -x_8 &  0 &  0 &  0 &  0 &  0 &  0 & x_1 & x_0 & x_3 & -x_2 & x_5 & -x_4 & -x_7 & x_6 &  0 &  0 &  0 &  0 &  0 &  0 &  0 &  0 &  0 & x_9 &  0 &  0 &  0 &  0 &  0 &  0 \\
   0 &  0 & -x_8 &  0 &  0 &  0 &  0 &  0 & x_2 & -x_3 & x_0 & x_1 & x_6 & x_7 & -x_4 & -x_5 &  0 &  0 &  0 &  0 &  0 &  0 &  0 &  0 &  0 &  0 & x_9 &  0 &  0 &  0 &  0 &  0 \\
   0 &  0 &  0 & -x_8 &  0 &  0 &  0 &  0 & x_3 & x_2 & -x_1 & x_0 & x_7 & -x_6 & x_5 & -x_4 &  0 &  0 &  0 &  0 &  0 &  0 &  0 &  0 &  0 &  0 &  0 & x_9 &  0 &  0 &  0 &  0 \\
   0 &  0 &  0 &  0 & -x_8 &  0 &  0 &  0 & x_4 & -x_5 & -x_6 & -x_7 & x_0 & x_1 & x_2 & x_3 &  0 &  0 &  0 &  0 &  0 &  0 &  0 &  0 &  0 &  0 &  0 &  0 & x_9 &  0 &  0 &  0 \\
   0 &  0 &  0 &  0 &  0 & -x_8 &  0 &  0 & x_5 & x_4 & -x_7 & x_6 & -x_1 & x_0 & -x_3 & x_2 &  0 &  0 &  0 &  0 &  0 &  0 &  0 &  0 &  0 &  0 &  0 &  0 &  0 & x_9 &  0 &  0 \\
   0 &  0 &  0 &  0 &  0 &  0 & -x_8 &  0 & x_6 & x_7 & x_4 & -x_5 & -x_2 & x_3 & x_0 & -x_1 &  0 &  0 &  0 &  0 &  0 &  0 &  0 &  0 &  0 &  0 &  0 &  0 &  0 &  0 & x_9 &  0 \\
   0 &  0 &  0 &  0 &  0 &  0 &  0 & -x_8 & x_7 & -x_6 & x_5 & x_4 & -x_3 & -x_2 & x_1 & x_0 &  0 &  0 &  0 &  0 &  0 &  0 &  0 &  0 &  0 &  0 &  0 &  0 &  0 &  0 &  0 & x_9 \\
 -x_9 &  0 &  0 &  0 &  0 &  0 &  0 &  0 &  0 &  0 &  0 &  0 &  0 &  0 &  0 &  0 & x_0 & -x_1 & -x_2 & -x_3 & -x_4 & -x_5 & -x_6 & -x_7 & -x_8 &  0 &  0 &  0 &  0 &  0 &  0 &  0 \\
   0 & -x_9 &  0 &  0 &  0 &  0 &  0 &  0 &  0 &  0 &  0 &  0 &  0 &  0 &  0 &  0 & x_1 & x_0 & x_3 & -x_2 & x_5 & -x_4 & -x_7 & x_6 &  0 & -x_8 &  0 &  0 &  0 &  0 &  0 &  0 \\
   0 &  0 & -x_9 &  0 &  0 &  0 &  0 &  0 &  0 &  0 &  0 &  0 &  0 &  0 &  0 &  0 & x_2 & -x_3 & x_0 & x_1 & x_6 & x_7 & -x_4 & -x_5 &  0 &  0 & -x_8 &  0 &  0 &  0 &  0 &  0 \\
   0 &  0 &  0 & -x_9 &  0 &  0 &  0 &  0 &  0 &  0 &  0 &  0 &  0 &  0 &  0 &  0 & x_3 & x_2 & -x_1 & x_0 & x_7 & -x_6 & x_5 & -x_4 &  0 &  0 &  0 & -x_8 &  0 &  0 &  0 &  0 \\
   0 &  0 &  0 &  0 & -x_9 &  0 &  0 &  0 &  0 &  0 &  0 &  0 &  0 &  0 &  0 &  0 & x_4 & -x_5 & -x_6 & -x_7 & x_0 & x_1 & x_2 & x_3 &  0 &  0 &  0 &  0 & -x_8 &  0 &  0 &  0 \\
   0 &  0 &  0 &  0 &  0 & -x_9 &  0 &  0 &  0 &  0 &  0 &  0 &  0 &  0 &  0 &  0 & x_5 & x_4 & -x_7 & x_6 & -x_1 & x_0 & -x_3 & x_2 &  0 &  0 &  0 &  0 &  0 & -x_8 &  0 &  0 \\
   0 &  0 &  0 &  0 &  0 &  0 & -x_9 &  0 &  0 &  0 &  0 &  0 &  0 &  0 &  0 &  0 & x_6 & x_7 & x_4 & -x_5 & -x_2 & x_3 & x_0 & -x_1 &  0 &  0 &  0 &  0 &  0 &  0 & -x_8 &  0 \\
   0 &  0 &  0 &  0 &  0 &  0 &  0 & -x_9 &  0 &  0 &  0 &  0 &  0 &  0 &  0 &  0 & x_7 & -x_6 & x_5 & x_4 & -x_3 & -x_2 & x_1 & x_0 &  0 &  0 &  0 &  0 &  0 &  0 &  0 & -x_8 \\
   0 &  0 &  0 &  0 &  0 &  0 &  0 &  0 & -x_9 &  0 &  0 &  0 &  0 &  0 &  0 &  0 & x_8 &  0 &  0 &  0 &  0 &  0 &  0 &  0 & x_0 & x_1 & x_2 & x_3 & x_4 & x_5 & x_6 & x_7 \\
   0 &  0 &  0 &  0 &  0 &  0 &  0 &  0 &  0 & -x_9 &  0 &  0 &  0 &  0 &  0 &  0 &  0 & x_8 &  0 &  0 &  0 &  0 &  0 &  0 & -x_1 & x_0 & -x_3 & x_2 & -x_5 & x_4 & x_7 & -x_6 \\
   0 &  0 &  0 &  0 &  0 &  0 &  0 &  0 &  0 &  0 & -x_9 &  0 &  0 &  0 &  0 &  0 &  0 &  0 & x_8 &  0 &  0 &  0 &  0 &  0 & -x_2 & x_3 & x_0 & -x_1 & -x_6 & -x_7 & x_4 & x_5 \\
   0 &  0 &  0 &  0 &  0 &  0 &  0 &  0 &  0 &  0 &  0 & -x_9 &  0 &  0 &  0 &  0 &  0 &  0 & 0 & x_8 &  0 &  0 &  0 &  0 & -x_3 & -x_2 & x_1 & x_0 & -x_7 & x_6 & -x_5 & x_4 \\
   0 &  0 &  0 &  0 &  0 &  0 &  0 &  0 &  0 &  0 &  0 &  0 & -x_9 &  0 &  0 &  0 &  0 &  0 & 0 &  0 & x_8 &  0 &  0 &  0 & -x_4 & x_5 & x_6 & x_7 & x_0 & -x_1 & -x_2 & -x_3 \\
   0 &  0 &  0 &  0 &  0 &  0 &  0 &  0 &  0 &  0 &  0 &  0 &  0 & -x_9 &  0 &  0 &  0 &  0 & 0 &  0 &  0 & x_8 &  0 &  0 & -x_5 & -x_4 & x_7 & -x_6 & x_1 & x_0 & x_3 & -x_2 \\
   0 &  0 &  0 &  0 &  0 &  0 &  0 &  0 &  0 &  0 &  0 &  0 &  0 &  0 & -x_9 &  0 &  0 &  0 & 0 &  0 &  0 &  0 & x_8 &  0 & -x_6 & -x_7 & -x_4 & x_5 & x_2 & -x_3 & x_0 & x_1 \\
   0 &  0 &  0 &  0 &  0 &  0 &  0 &  0 &  0 &  0 &  0 &  0 &  0 &  0 &  0 & -x_9 &  0 &  0 & 0 &  0 &  0 &  0 &  0 & x_8 & -x_7 & x_6 & -x_5 & -x_4 & x_3 & x_2 & -x_1 & x_0
\end{array}\right]
\end{eqnarray}
}
\hrule
\end{figure*}

\begin{lemma}
\label{propertyV8}
Let $x,y\in Z_{2^a},a\in\{0,1,2,3\}, x\neq y$. 
 Then $\vert (\psi_{2^a}(x)\oplus \psi_{2^a}(y))\cdot(x\oplus y)\vert$ is an odd integer. 
\end{lemma}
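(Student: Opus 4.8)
The plan is to reduce the claim to a finite verification, exploiting that $a$ takes only the four values $0,1,2,3$. First I would observe that for $a\le 3$ we have $\rho(2^a)\le 8$, so by \eqref{gammat} the map $\gamma_{2^a}$ is the identity on its domain; hence $\hat{Z}_{\rho(2^a)}=Z_{\rho(2^a)}=Z_{2^a}$, and in particular the expression in the statement is meaningful for every pair $x,y\in Z_{2^a}$. Moreover the whole domain sits inside $Z_8$, so the map $\phi$ restricts to $\phi_1$ there, and by \eqref{mappsi} we get $\psi_{2^a}(x)=\overline{\phi_1(x)}$, the two's complement taken in $\f_2^a$. Thus the lemma is precisely the assertion that this explicit $\psi_{2^a}$ meets condition \eqref{oddcondition}, i.e. (via Theorem~\ref{ratesquarerod}) that the matrix $B_{2^a}$ it determines is a square ROD. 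I would settle the four cases one at a time.

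For $a=0$ the set $Z_1$ has no two distinct elements, so there is nothing to prove; for $a=1$ the only unordered pair is $\{0,1\}$, and since $\psi_2(0)=0$, $\psi_2(1)=\overline{1}=1$ in $\f_2$, the quantity equals $\vert 1\cdot 1\vert=1$. For $a=2$, $\phi_1$ fixes $\{0,1,2,3\}$, so $\psi_4(x)=\overline{x}=(4-x)\bmod 4$, i.e. $\psi_4$ sends $0,1,2,3$ to $0,3,2,1$; running over the $\binom{4}{2}=6$ unordered pairs one finds $\vert(\psi_4(x)\oplus\psi_4(y))\cdot(x\oplus y)\vert=1$ in every case, so \eqref{oddcondition} holds.

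For $a=3$, the permutation $\phi_1$ acts as the $3$-cycle $(5\;7\;6)$ and fixes $0,1,2,3,4$; computing two's complements in $\f_2^3$, the map $\psi_8$ sends $0,1,2,3,4,5,6,7$ to $0,7,6,5,4,1,3,2$. One then checks the $\binom{8}{2}=28$ unordered pairs, which I would organize as: the seven pairs $\{0,y\}$, the six pairs inside $\{1,2,3,4\}$, the twelve pairs with one index in $\{1,2,3,4\}$ and the other in $\{5,6,7\}$, and the three pairs inside $\{5,6,7\}$. In each of the $28$ cases $\vert(\psi_8(x)\oplus\psi_8(y))\cdot(x\oplus y)\vert$ turns out to be $1$ or $3$, hence odd. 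Equivalently, a direct check shows that with these maps Theorem~\ref{ratesquarerod} produces, for $t=8$, exactly the $8\times 8$ array forming the upper-left block of $R_{16}$ in \eqref{R16}, which is the classical octonionic ROD; so the $a=3$ case is just the known fact that this array is an ROD.

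I do not expect a genuine obstacle here: the only real computation is the $a=3$ case, and the single subtlety is that $\psi_8$ cannot be replaced by plain two's complement --- the twist of $\phi_1$ on $\{5,6,7\}$ is exactly what repairs the parity failures that $x\mapsto\overline{x}$ would otherwise introduce (e.g. the pair $\{1,6\}$ violates \eqref{oddcondition} for $x\mapsto\overline{x}$ but not for $\psi_8$), so some case-checking involving $\{5,6,7\}$ is unavoidable. This lemma will serve as the base case for the later verification of \eqref{oddcondition} for arbitrary $t$, once the analogous properties of $\phi_2$ and of the recursive map $\hat{\gamma}$ are in place.
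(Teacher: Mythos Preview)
Your proposal is correct and is precisely the approach the paper takes: the paper's proof consists solely of the sentence ``It can be proved easily by direct check,'' and you have simply written out that direct check in detail for each $a\in\{0,1,2,3\}$. Your observation about why the twist $\phi_1$ on $\{5,6,7\}$ is needed is a nice addition, but structurally there is no difference between your argument and the paper's.
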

\begin{proof}
It can be proved easily by direct check.
\end{proof}

\begin{lemma}
\label{propertyphi}
Let $x,y\in F, x\neq y$. Then \\
(i) $\vert \overline{\phi_2(x)}\cdot x\vert$  is odd for all $x\neq 0$.\\
(ii) $\vert \overline{\phi_2(x)}\cdot y\vert +
\vert \overline{\phi_2(y)}\cdot x\vert$ is  odd for all  $x\neq y,
x\neq 0,y\neq 0$.\\
\end{lemma}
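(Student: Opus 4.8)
The plan is to prove both parts by an explicit finite computation, since every object in the statement is completely pinned down: $F=\hat{\gamma}(Z_8)=\{1,2,4,7,8,11,13,14\}$ is the $8$-element set of odd-Hamming-weight vectors of $\mathbb{F}_2^4$; the map $\phi_2$ is given by a table, so $\overline{\phi_2(x)}=2^4-\phi_2(x)$ is an explicit vector of $\mathbb{F}_2^4$ for each of the eight inputs; the product $u\cdot v$ is the bitwise AND; and $|\cdot|$ is the Hamming weight. First I would tabulate the eight vectors $\overline{\phi_2(x)}$, equivalently their sets of $1$-positions inside $\{0,1,2,3\}$. For part (i) I would split $F$ into the weight-one vectors $\{1,2,4,8\}$ and the weight-three vectors $\{7,11,13,14\}$: for $x=2^{j}$ the table gives $\phi_2(x)=x$, so $\overline{\phi_2(x)}=2^4-2^{j}$ has $1$-positions exactly $\{j,j+1,\dots,3\}$, whence $\overline{\phi_2(x)}\cdot x=2^{j}$, of weight $1$; for each of the four weight-three inputs one checks directly that the $1$-positions of $\overline{\phi_2(x)}$ and of $x$ overlap in a single position. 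In every case $|\overline{\phi_2(x)}\cdot x|=1$, which is odd.

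For part (ii) the clean way to organize the $\binom{8}{2}=28$ pairs is to form the $8\times 8$ binary matrix $T(x,y):=|\overline{\phi_2(x)}\cdot y|\bmod 2$ over $x,y\in F$ and to verify (a) that its diagonal is all $1$'s (which is part (i) restated) and (b) that $T(x,y)+T(y,x)=1$ for every $x\ne y$, i.e.\ that the off-diagonal part of $T$ is \emph{anti-symmetric} over $\mathbb{F}_2$ (exactly one of $T(x,y),T(y,x)$ equals $1$). To reduce the entries to be computed I would use the bilinearity identity $|(a\oplus b)\cdot c|\equiv|a\cdot c|+|b\cdot c|\pmod 2$, which holds because $(a\oplus b)\cdot c=(a\cdot c)\oplus(b\cdot c)$ and $|u\oplus v|\equiv|u|+|v|$: for fixed $x$ the map $y\mapsto T(x,y)$ is a linear functional of $y$, so each row of $T$ is determined by its four entries in the columns $1,2,4,8$. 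It therefore suffices to write down the $8\times 4$ array of values $T(x,2^{j})$ ($x\in F$, $0\le j\le 3$) and extend it by $\oplus$-linearity to the weight-three columns, using the four weight-three rows computed directly from the table as an independent cross-check.

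The step I expect to be the only real difficulty is that there appears to be no structural shortcut: $\phi_2$ has evidently been hand-chosen so that these parities come out right, and I see no symmetry of $F$ under which $\phi_2$ is equivariant, so part (ii) genuinely comes down to checking all pairs; the effort lies in laying the table out so that the $28$ verifications are transparent rather than error-prone. It is worth recording why the lemma is packaged as (i) together with (ii): for $x=2^{4l-1}w_1$, $y=2^{4l-1}w_2$ lying in a common block $2^{4l-1}F$ of $\hat{Z}_{\rho(t)}$, the common high-order $1$-bits of $\psi_t(x)$ and $\psi_t(y)$ cancel, so $\psi_t(x)\oplus\psi_t(y)=2^{4l-1}\big(\overline{\phi_2(w_1)}\oplus\overline{\phi_2(w_2)}\big)$ and $x\oplus y=2^{4l-1}(w_1\oplus w_2)$; by the same bilinearity $|(\psi_t(x)\oplus\psi_t(y))\cdot(x\oplus y)|\equiv\big(|\overline{\phi_2(w_1)}\cdot w_1|+|\overline{\phi_2(w_2)}\cdot w_2|\big)+\big(|\overline{\phi_2(w_1)}\cdot w_2|+|\overline{\phi_2(w_2)}\cdot w_1|\big)\pmod 2$, and the first bracket is even by (i) while the second is odd by (ii). This supplies the instances of \eqref{oddcondition} with $x,y$ in a common block $2^{4l-1}F$, complementing Lemma \ref{propertyV8}, which handles the $Z_8$ block.
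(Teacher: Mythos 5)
Your plan is the same one the paper uses -- the paper's entire proof is ``there are only finitely many possibilities\dots it can be easily checked'' -- and your way of organizing the check is sound: the $\mathbb{F}_2$-bilinearity of $(u,v)\mapsto \vert u\cdot v\vert \bmod 2$ is correct, and your part (i) analysis is right (for every $x\in F$ one indeed finds $\vert\overline{\phi_2(x)}\cdot x\vert=1$). The gap is exactly at the step you flagged as ``the only real difficulty'': you assert that the $28$ verifications for part (ii) come out right, but you never exhibit the table, and with $\phi_2$ as printed they do not. Take $x=1$, $y=11$. Then $\overline{\phi_2(1)}=16-1=15=(1111)$ and $\overline{\phi_2(11)}=16-15=1=(0001)$, so
$\vert\overline{\phi_2(1)}\cdot 11\vert+\vert\overline{\phi_2(11)}\cdot 1\vert=\vert(1111)\cdot(1011)\vert+\vert(0001)\cdot(0001)\vert=3+1=4$,
which is even. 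The same failure occurs for the pairs $\{2,11\}$, $\{11,13\}$ and $\{11,14\}$; the remaining $24$ pairs do check out. A claim proved ``by exhaustive finite check'' is only proved once the check is actually displayed, precisely because this is the kind of statement that can silently fail; your proposal stops one step short of discovering that it does.

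Since every failing pair involves $11$, the culprit is the single entry $\phi_2(11)=15$, which is almost certainly a typo for $\phi_2(11)=14$: with that change $\overline{\phi_2(11)}=2=(0010)$, the map $\phi_2$ remains injective, part (i) still gives weight $1$ at $x=11$, and all $28$ pairs in part (ii) then satisfy the parity condition (I verified the full table). The correction matters beyond the lemma: by the reduction you correctly describe at the end of your proposal, the pair $x=2^{4l-1}\cdot 1$, $y=2^{4l-1}\cdot 11$ inside a common block of $\hat{Z}_{\rho(t)}$ would violate \eqref{oddcondition}, so with the printed table Theorem \ref{orthog} and hence the ROD property of $R_t$ would fail at $t=128$, the first size at which $2^{4l-1}\cdot 11$ enters $\hat{Z}_{\rho(t)}$. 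So the right conclusion of your computation is not ``the lemma holds by inspection'' but ``the lemma holds after correcting $\phi_2(11)$ from $15$ to $14$, and is false as printed.''
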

\begin{proof}
There are only finitely many possibilities for $x$ and $y$ and it can be easily checked that both the statements (i) and (ii) hold for all possible cases.
\end{proof}
We now have the following important theorem.
\begin{theorem}
\label{orthog}
Let $t$ be a positive integer which is a power of $2$. Let $\psi_t$ and $\hat{Z}_{\rho(t)}$ be as defined above. Then,
$ \vert (\psi_t(x)\oplus \psi_t(y))\cdot(x\oplus y)\vert$ is odd for all $x,y\in\hat{Z}_{\rho(t)}, x\neq y$.
\end{theorem}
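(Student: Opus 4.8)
The plan is to reduce the statement to the two finite verifications already isolated as Lemma~\ref{propertyV8} and Lemma~\ref{propertyphi}, by exploiting a ``block'' structure shared by $\hat{Z}_{\rho(t)}$ and the map $\psi_t$. Write $t=2^{a}$, and for $u,v\in\mathbb{F}_2^{a}$ put $\langle u,v\rangle=\vert u\cdot v\vert\bmod 2\in\mathbb{F}_2$; this is an $\mathbb{F}_2$-bilinear form, and the assertion ``$\vert(\psi_t(x)\oplus\psi_t(y))\cdot(x\oplus y)\vert$ is odd'' is exactly ``$\langle\psi_t(x)\oplus\psi_t(y),x\oplus y\rangle=1$'', which by bilinearity reads
\begin{equation*}
\langle\psi_t(x),x\rangle\oplus\langle\psi_t(x),y\rangle\oplus\langle\psi_t(y),x\rangle\oplus\langle\psi_t(y),y\rangle=1 .
\end{equation*}
So it is enough to establish this identity for all distinct $x,y\in\hat{Z}_{\rho(t)}$.

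I would next record the block structure. By \eqref{gammat}, every $x\in\hat{Z}_{\rho(t)}$ lies in exactly one \emph{layer}: layer~$0$ is $\hat{Z}_{\rho(t)}\cap Z_8$, on which $\phi(x)=\phi_1(x)$; and for $l\ge1$, layer~$l$ consists of the numbers $2^{4l-1}z$ with $z\in F$, on which $\phi(x)=2^{4l-1}\phi_2(z)$. Distinct layers have disjoint bit supports: layer~$0$ occupies positions $\{0,1,2\}$, and layer~$l$ occupies positions $\{4l-1,4l,4l+1,4l+2\}$ (intersected with $\{0,\dots,a-1\}$). From the identity $\overline{w}=2^{a}-w=2^{k}(2^{a-k}-1)+(2^{k}-w)$, valid when $0<w<2^{k}$, one reads off the shape of $\psi_t(x)=\overline{\phi(x)}$: we have $\psi_t(0)=0$; for $x\ne0$ in layer~$0$, $\psi_t(x)$ agrees with the two's complement of $\phi_1(x)$ on positions $\{0,1,2\}$ and equals $1$ in every position $\ge3$; and for $x=2^{4l-1}z$ in layer~$l\ge1$, $\psi_t(x)$ is $0$ below position $4l-1$, agrees (after the evident shift) with the two's complement of $\phi_2(z)$ on positions $\{4l-1,\dots,4l+2\}$, and equals $1$ in every position above $4l+2$.

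Finally I would split the four terms according to the layers of $x$ and $y$, using support-disjointness to drop irrelevant positions. If $x,y$ lie in the same layer, all four terms see only that layer's block, and the sum collapses to $\vert(\overline{\phi_1(x)}\oplus\overline{\phi_1(y)})\cdot(x\oplus y)\vert\bmod2$ (layer~$0$), which is odd by Lemma~\ref{propertyV8} with $a=3$; or to $\vert\overline{\phi_2(z)}\cdot z\vert\oplus\vert\overline{\phi_2(z)}\cdot z'\vert\oplus\vert\overline{\phi_2(z')}\cdot z\vert\oplus\vert\overline{\phi_2(z')}\cdot z'\vert\bmod2$ (layer~$l\ge1$), which is odd because the first and last terms are odd by Lemma~\ref{propertyphi}(i) and the middle two add to an odd number by Lemma~\ref{propertyphi}(ii). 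If $x,y$ lie in different layers and one of them is $0$, then two terms vanish and the sum is $\langle\psi_t(y),y\rangle$, odd by Lemma~\ref{propertyV8} (second argument $0$) or Lemma~\ref{propertyphi}(i). Otherwise, taking $x$ in the strictly lower layer, the ``all ones above the block'' part of $\psi_t(x)$ gives $\langle\psi_t(x),y\rangle=\vert y\vert$, the ``zero below the block'' part of $\psi_t(y)$ gives $\langle\psi_t(y),x\rangle=0$, and $\langle\psi_t(x),x\rangle$, $\langle\psi_t(y),y\rangle$ are odd (Lemma~\ref{propertyV8} or \ref{propertyphi}(i)); since $\vert y\vert=\vert z'\vert$ and every element of $F=\{1,2,4,7,8,11,13,14\}$ has odd Hamming weight, the sum is $1\oplus1\oplus0\oplus1=1$. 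By the symmetry of $x$ and $y$ these cases are exhaustive.

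The step I expect to be the real obstacle is the second one: pinning down that $\overline{\phi(x)}$ is \emph{all ones above the entire block of $x$}, not merely near it, since this is precisely what collapses the cross term $\langle\psi_t(x),y\rangle$ to the Hamming weight $\vert y\vert$, and it is also what forces the element $x=0$ (where $\phi_1(0)=0$ breaks the pattern) to be treated apart. One must additionally check that in the top layer, when $d<3$ so that the block is truncated, the truncated two's complement occurring there still agrees with the quantities of Lemma~\ref{propertyphi} on the positions actually occupied by $z$. Everything else is the finite bookkeeping already supplied by Lemmas~\ref{propertyV8} and~\ref{propertyphi}, plus the one-line observation that $F$ consists of odd-weight vectors.
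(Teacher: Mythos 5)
Your proof is correct and follows essentially the same route as the paper's: expand the target quantity by bilinearity, exploit the disjoint bit-supports of the layers of $\hat{Z}_{\rho(t)}$ together with the shape of $\psi_t(x)=\overline{\phi(x)}$ (zero below the block, a local two's complement on the block, all ones above), and reduce everything to Lemmas~\ref{propertyV8} and~\ref{propertyphi} plus the odd Hamming weight of the elements of $F$. The only difference is presentational --- the paper first isolates the two sub-claims ``$\vert\psi_t(y)\cdot y\vert$ odd'' and ``$\vert\psi_t(x)\cdot y\vert+\vert\psi_t(y)\cdot x\vert$ odd'' using $\psi_t(0)=0$, whereas you keep all four cross terms and case directly on layers --- and you are in fact more explicit than the paper about the truncated top block when $d<3$.
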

\begin{proof}
For $t=1,2,4$ and $8$, the statement holds by Lemma \ref{propertyV8}. Hence we assume that
$t\geq 16$.
As $\psi_t(0)=0$, it is enough to prove that \\
(i) $\vert \psi_t(y)\cdot y\vert$ is odd for all $y\neq 0$.\\
(ii) $\vert\psi_t(x)\cdot y\vert +\vert\psi_t(y)\cdot x\vert$ is odd for all  $x\neq y,~~x\neq 0,y\neq 0$.\\
To prove (i),
let $z =\psi_t(y)\cdot y$. If $y\in Z_8$, we have $\vert\psi_t(y)\cdot y\vert=\vert\psi_8(y)\cdot y\vert$ which is an odd number by Lemma \ref{propertyV8}.\\
On the other hand, if $y=2^{4l-1}m$, $l\geq 0,m\in F$, then
$\vert z \vert=\vert\overline{2^{4l-1}\phi_2(m)}\cdot 2^{4l-1}m\vert$ where the $2's$ complement of an element is performed in $\mathbb{F}_2^a$.
We have $\vert z \vert=\vert\overline{\phi_2(m)}\cdot m\vert$ where the $2's$ complement of $\phi_2(m)$ is performed in $\mathbb{F}_2^4$. Hence $\vert z \vert$ is odd by Lemma \ref{propertyphi}.

In order to prove the part (ii),
we have following three cases: \\
(i) $1\leq x\leq 7 ~\&~ 1\leq y\leq 7$, \\
(ii) $1\leq y\leq 7~\&~ x=2^{4\alpha-1}\beta$ for some $\beta\in F$, $\alpha\geq 1$,\\
(iii) $x=2^{4\hat{\alpha}-1}\hat{\beta}~\&~ y=2^{4\alpha-1}\beta$ for some $\beta,\hat{\beta}\in F,\alpha,\hat{\alpha}\geq 1$.\\
In all the three cases, we have $x\neq y$.
By Lemma \ref{propertyV8}, (i) is true.  \\ 
For the second case, let $z=\psi_t(x)\cdot y \oplus \psi_t(y)\cdot x$.
We have $z=(\overline{2^{4\alpha-1}\phi_2(\beta)}\cdot y) \oplus ((2^{4\alpha-1}\beta)\cdot\overline{\phi_1(y)})$.

As $\overline{2^{4\alpha-1}\phi_2(\beta)}\cdot y=\mathbf{0}$ (the all zero vector in $\mathbb{F}_2^{a}$) for $\alpha\geq 1$, we have
$z=(2^{4\alpha-1}\beta)\cdot\overline{\phi_1(y)}$.
But $\vert\beta\vert$ is odd  for all $\beta\in F$, hence $\vert z \vert$ is an odd number.

For (iii), let $z=\psi_t(x)\cdot y \oplus \psi_t(y)\cdot x$.
We have
\[z=\overline{2^{4\alpha-1}\phi_2(\beta)}\cdot 2^{4\hat{\alpha}-1}\hat{\beta}
\oplus 2^{4\alpha-1}\beta\cdot \overline{2^{4\hat{\alpha}-1}\phi_2(\hat{\beta})}.
\]
If $\hat{\alpha}>\alpha$, we have $2^{4\alpha-1}\beta\cdot \overline{2^{4\hat{\alpha}-1}\phi_2(\hat{\beta})}=\mathbf{0}$ and
$\overline{2^{4\alpha-1}\phi_2(\beta)}\cdot 2^{4\hat{\alpha}-1}\hat{\beta}=\hat{\beta}$. Thus $\vert z\vert$ is an odd number by Lemma \ref{propertyphi}.
If $\alpha=\hat{\alpha}$, it follows that
$\vert z\vert = \vert \overline{\phi_2(\beta)}\cdot \hat{\beta}\vert +
\vert \beta\cdot \overline{\phi_2(\hat{\beta})}\vert$

is an odd number by Lemma \ref{propertyphi}. 
\end{proof}

The square ROD obtained using the maps $\gamma_t$ and $\psi_t$ given by \eqref{gammat} and \eqref{mappsi} respectively will be denoted by $R_t$ throughout. The RODs $R_{16}$
and $R_{32}$ are given by \eqref{R16} and \eqref{R32} respectively.
In Appendix \ref{appendixI}, it is shown that the RODs $R_t$ can be constructed recursively.
\begin{figure*}
\normalsize{
\begin{eqnarray}
\label{W9}
W_9=\left[ \hspace{-5pt}
\begin{array}{  r @{\hspace{.2pt}} r @{\hspace{.2pt}} r @{\hspace{.2pt}} r @{\hspace{.2pt}} r @{\hspace{.2pt}} r @{\hspace{.2pt}} r @{\hspace{.2pt}} r @{\hspace{.2pt}} r}
   y_0 &  y_1 &  y_2 &  y_3 &  y_4 &  y_5 &  y_6 &  y_7 &  y_8 \\
   y_1 & -y_0 &  y_3 & -y_2 &  y_5 & -y_4 & -y_7 &  y_6 &  y_9 \\
   y_2 & -y_3 & -y_0 &  y_1 &  y_6 &  y_7 & -y_4 & -y_5 & y_{10} \\
   y_3 &  y_2 & -y_1 & -y_0 &  y_7 & -y_6 &  y_5 & -y_4 & y_{11} \\
   y_4 & -y_5 & -y_6 & -y_7 & -y_0 &  y_1 &  y_2 &  y_3 & y_{12} \\
   y_5 &  y_4 & -y_7 &  y_6 & -y_1 & -y_0 & -y_3 &  y_2 & y_{13} \\
   y_6 &  y_7 &  y_4 & -y_5 & -y_2 &  y_3 & -y_0 & -y_1 & y_{14} \\
   y_7 & -y_6 &  y_5 &  y_4 & -y_3 & -y_2 &  y_1 & -y_0 & y_{15} \\
   y_8 & -y_9 & -y_{10} & -y_{11} & -y_{12} & -y_{13} & -y_{14} & -y_{15} & -y_0 \\
   y_9 &  y_8 & -y_{11} & y_{10} & -y_{13} & y_{12} & y_{15} & -y_{14} & -y_1 \\
  y_{10} & y_{11} &  y_8 & -y_9 & -y_{14} & -y_{15} & y_{12} & y_{13} & -y_2 \\
  y_{11} & -y_{10}&  y_9 &  y_8 & -y_{15} & y_{14} & -y_{13} & y_{12} & -y_3 \\
  y_{12} & y_{13} & y_{14} & y_{15} &  y_8 & -y_9 & -y_{10} & -y_{11} & -y_4 \\
  y_{13} & -y_{12}& y_{15} & -y_{14} &  y_9 &  y_8 & y_{11} & -y_{10} & -y_5 \\
  y_{14} & -y_{15}& -y_{12} & y_{13} & y_{10} & -y_{11} &  y_8 &  y_9 & -y_6 \\
  y_{15} & y_{14} & -y_{13} & -y_{12} & y_{11} & y_{10} & -y_9 &  y_8 & -y_7
\end{array}\right], ~~
\hat{W}_9=\left[ \hspace{-5pt}
\begin{array}{ r @{\hspace{.2pt}} r @{\hspace{.2pt}} r @{\hspace{.2pt}} r @{\hspace{.2pt}} r @{\hspace{.2pt}} r @{\hspace{.2pt}} r @{\hspace{.2pt}} r @{\hspace{.2pt}} r}
   y_0 &    -y_1 &    -y_2 &     y_3 &    -y_4 &     y_5 &     y_6 &    -y_7 &    -y_8\\
   y_1 &     y_0 &    -y_3 &    -y_2 &    -y_5 &    -y_4 &    -y_7 &    -y_6 &    -y_9\\
   y_2 &     y_3 &     y_0 &     y_1 &    -y_6 &     y_7 &    -y_4 &     y_5 &   -y_{10}\\
   y_3 &    -y_2 &     y_1 &    -y_0 &    -y_7 &    -y_6 &     y_5 &     y_4 &   -y_{11}\\
   y_4 &     y_5 &     y_6 &    -y_7 &     y_0 &     y_1 &     y_2 &    -y_3 &   -y_{12}\\
   y_5 &    -y_4 &     y_7 &     y_6 &     y_1 &    -y_0 &    -y_3 &    -y_2 &   -y_{13}\\
   y_6 &    -y_7 &    -y_4 &    -y_5 &     y_2 &     y_3 &    -y_0 &     y_1 &   -y_{14}\\
   y_7 &     y_6 &    -y_5 &     y_4 &     y_3 &    -y_2 &     y_1 &     y_0 &   -y_{15}\\
   y_8 &     y_9 &    y_{10} &   -y_{11} &    y_{12} &   -y_{13} &   -y_{14} &    y_{15} &    y_0\\
   y_9 &    -y_8 &    y_{11} &    y_{10} &    y_{13} &    y_{12} &    y_{15} &    y_{14} &    y_1\\
  y_{10} &   -y_{11} &    -y_8 &    -y_9 &    y_{14} &   -y_{15} &    y_{12} &   -y_{13} &    y_2\\
  y_{11} &    y_{10} &    -y_9 &     y_8 &    y_{15} &    y_{14} &   -y_{13} &   -y_{12} &    y_3\\
  y_{12} &   -y_{13} &   -y_{14} &    y_{15} &    -y_8 &    -y_9 &   -y_{10} &    y_{11} &    y_4\\
  y_{13} &    y_{12} &   -y_{15} &   -y_{14} &    -y_9 &     y_8 &    y_{11} &    y_{10} &    y_5\\
  y_{14} &    y_{15} &    y_{12} &    y_{13} &   -y_{10} &   -y_{11} &     y_8 &    -y_9 &    y_6\\
  y_{15} &   -y_{14} &    y_{13} &   -y_{12} &   -y_{11} &    y_{10} &    -y_9 &    -y_8 &    y_7\\
\end{array}\right]
\end{eqnarray}
}
\hrule
\end{figure*}

One can define the functions $\gamma_t$ and $\psi_t$ different from the one given above and can have a square ROD different from $R_t$. In Appendix \ref{appendixII}, we provide three different pairs of such functions  and these are shown to give the well-known  Adams-Lax-Phillips' construction from Octonions and Quaternions and Geramita and Pullman's construction of square RODs.
\subsection{STEP 2 : Construction of new sets of rate-1 RODs}
\label{subsec2-2}
Transition from a square ROD to a rate-1 ROD can be performed using column vector representation of an ROD~\cite{TJC}. In a similar way, we construct a rate-1 ROD $W_n$ of size $[\nu(n),n,\nu(n)]$ for $n$ transmit antennas from an ROD of size $[\nu(n),\nu(n),n]$ where $n$ is any non-zero positive integer, not necessarily a power of 2. 

Any square ROD of order $\nu(n)$ obtained via a suitable pair of maps $\gamma_{\nu(n)}$ and $\psi_{\nu(n)}$ satisfying the condition \eqref{oddcondition} 
(for instance, $R_{\nu(n)}$ obtained in the previous subsection or  $A_{\nu(n)},\hat{A}_{\nu(n)}$ and $G_{\nu(n)}$ obtained in Appendix \ref{appendixII}) can be used for this purpose. We refer to any such design by $B_{\nu(n)}$ consisting of $n$ real variables. 

Let $y_0,y_1,\cdots,y_{\nu(n)-1}$ be $\nu(n)$ real variables. 
The matrix $W_n$ is obtained as follows: Make
$W_n(i,j)=0$ if the $i$-th row of $B_{\nu(n)}$  does not contain $z_j$. Otherwise,
$W_n(i,j)=y_k$ or $-y_k$ if $B_{\nu(n)}(i,k)=z_j$ or $-z_j$ respectively.
The construction of the matrix $W_n$ ensures that it is a rate-1 ROD. 
Using Theorem \ref{ratesquarerod} and Theorem \ref{orthog}, we have
\begin{align}
\label{defineW}
 &W_n(i,j)=s(i,j)y_{f(i,j)} \mbox{ where } \nonumber\\
 &f(i,j)=i\oplus \gamma_{\nu(n)}(j) 
,s(i,j)=(-1)^{\vert i\cdot\psi_{\nu(n)}(\gamma_{\nu(n)}(j))\vert}
\end{align}
for $0\leq i\leq \nu(n)-1,~ 0\leq j\leq n-1$.
Similarly, we define another matrix $\hat{W}_n$ as
\begin{align}
\label{defineWhat}
&\hat{W}_n(i,j)=\hat{s}(i,j)y_{f(i,j)} \mbox{ where } 
f(i,j)=i\oplus \gamma_{\nu(n)}(j),
\nonumber\\
&\hat{s}(i,j)=(-1)^{\vert (i\oplus \gamma_{\nu(n)}(j))\cdot\psi_{\nu(n)}(\gamma_{\nu(n)}(j))\vert}.
\end{align}
$\hat{W}_n$ is also a rate-1 ROD. $\hat{W}_n$ and $W_n$ are used to construct a rate-$\frac{1}{2}$ scaled-COD for $(n+8)$ antennas.
Two rate-1 RODs $W_9$ and $\hat{W}_9$ for $9$ antennas are given by \eqref{W9}.
\subsection{STEP 3 : Construction of low-delay, rate-$\frac{1}{2}$ scaled-CODs}
\label{subsec2-3}
The construction of the rate-$\frac{1}{2}$ code is little involved: it makes use of two rate-1 RODs constructed in the previous subsection and the code-matrix contains several copies of square COD of size $[8,8,4]$.
For $n$ transmit antennas, the desired rate-$\frac{1}{2}$ scaled-COD $RH_n$ is given by  
\begin{equation}
\label{gn}
   RH_n=\left[\begin{array}{cc}
     E_8 & H_t\\
     O_8 & \hat{H}_t 
\end{array}\right]
\end{equation}
where $t=n-8$. The matrices $E_8,H_t,O_8$ and $\hat{H}_t$ are constructed as follows.
$H_t$ and $\hat{H}_t$ are constructed very easily using rate-1 RODs and 
an $8\times 1$ column vector given by

{\small
\begin{eqnarray*}
C(x_0,x_1,x_2,x_3) 
=\frac{1}{\sqrt{2}}\left[
\begin{array}{ c @{\hspace{2pt}} c @{\hspace{2pt}} c @{\hspace{2pt}} c @{\hspace{2pt}} c @{\hspace{2pt}} c @{\hspace{2pt}} c @{\hspace{2pt}} c }
-x_3^* & x_2^* &-x_1^* &-x_0 & x_0^* &-x_1 &-x_2  &-x_3
\end{array}\right]^\mathcal{T}
\end{eqnarray*}
}

\noindent
where $x_0,x_1,\cdots$ are complex variables.
Define 
$\overline{A}(i)=C(x_{4i},x_{4i+1},x_{4i+2},x_{4i+3})$
for all non-negative integer $i$.

Let $W_t$ and $\hat{W}_t$ be two rate-1 RODs of size $[\nu(t),t,\nu(t)]$ in $\nu(t)$ real variables $y_0,y_1,\cdots,y_{\nu(t)-1}$ as constructed in the previous subsection.
Let $H_t$ be the matrix obtained from $W_t$ by substituting $y_i$ with $\overline{A}(2i+1)$ for $i=0$ to $\nu(t)-1$.
Similarly construct $\hat{H}_t$ from $\hat{W}_t$ by substituting $y_i$ with $\overline{A}(2i)$

Next, we construct $E_8$ and $O_8$.
Let
{\small
\begin{equation}
\label{tx8cod1}
   A(x_0,x_1,x_1,x_3)=\left[
\begin{array}{ r @{\hspace{.2pt}} r @{\hspace{.2pt}} r @{\hspace{.2pt}} r @{\hspace{.2pt}} r @{\hspace{.2pt}} r @{\hspace{.2pt}} r @{\hspace{.2pt}} r @{\hspace{.2pt}} }
     x_0 & -x_1^* & -x_2^* & 0   & -x_3^* & 0 & 0 & 0 \\
     x_1 & x_0^*  & 0  & -x_2^* & 0 & -x_3^* & 0 & 0  \\
     x_2 & 0 & x_0^* & x_1^*  & 0 & 0 & -x_3^* & 0 \\
     0   & x_2 & -x_1 & x_0  & 0 & 0 & 0 & -x_3^*  \\
     x_3 & 0 & 0 & 0 & x_0^* & x_1^* &x_2^* &0    \\
     0   & x_3 & 0 & 0  & -x_1 & x_0 &0 &x_2^*    \\
     0   &0  & x_3 & 0  & -x_2 &0 & x_0 & -x_1^* \\
     0   &0  &0 & x_3 & 0  & -x_2  & x_1 & x_0^* 
\end{array}\right],
\end{equation}
}
{\small
\begin{equation}
   B(x_4,x_5,x_6,x_7)=\left[
\begin{array}{ r @{\hspace{.2pt}} r @{\hspace{.2pt}} r @{\hspace{.2pt}} r @{\hspace{.2pt}} r @{\hspace{.2pt}} r @{\hspace{.2pt}} r @{\hspace{.2pt}} r @{\hspace{.2pt}} }
    x_4 &-x_5^* &-x_6^* &-x_7^*  & 0 & 0 & 0 & 0 \\
    x_5 & x_4^* & 0 & 0  & -x_6^* &-x_7^* & 0 & 0 \\
    x_6 &0  & x_4^* & 0  & x_5^* &0 &-x_7^* & 0 \\
     0  & x_6 & -x_5 &0  & x_4  & 0  & 0 &-x_7^* \\
     x_7&0 &0 & x_4^*  &0 & x_5^*   & x_6^*  & 0 \\
     0  & x_7  & 0  & -x_5 & 0 & x_4 & 0 & x_6^* \\
     0  & 0 & x_7 &-x_6  & 0 & 0 & x_4 &-x_5^* \\
     0  & 0 & 0 & 0 & x_7 &-x_6 & x_5   & x_4^*
    \end{array}\right]
\end{equation}
}
be two square CODs of size $[8,8,4].$ Define
\begin{eqnarray*}
A(2i)&=&A(x_{8i},x_{8i+1},x_{8i+2},x_{8i+3})\\
A(2i+1)&=&B(x_{8i+4},x_{8i+5},x_{8i+6},x_{8i+7}). 
\end{eqnarray*}
We now construct two $\frac{\nu(n)}{2}\times 8$ matrices $E_8$ and $O_8$ using $A(i)$ as follows:
\begin{eqnarray}
\label{rowvectorrep}
E_8= 
\left[
\begin{array}{c}
A(0) \\
A(2) \\
. \\
. \\
. \\
A(u-2)
\end{array}
\right],~~
O_8=
\left[
\begin{array}{c}
A(1) \\
A(3) \\
. \\
. \\
. \\
A(u-1)
\end{array}
\right]
\end{eqnarray}
~~~~where $u=\nu(n)/8$. 
Note that
\begin{eqnarray}
\label{Aij}
  \left[\begin{array}{cc}
    A(i) &\overline{A}(j)\\
    A(j) &\overline{A}(i)
   \end{array}\right]
\end{eqnarray}
is a scaled-COD whenever $(i+j)$ is odd and
\begin{eqnarray}
\label{Aoverline}
\left[\begin{array}{cc}
    \overline{A}(i) &-\overline{A}(j)\\
    \overline{A}(j) &\overline{A}(i)
   \end{array}\right],~~
\end{eqnarray}
is a scaled-COD for all values of $i$ and $j$, $i\neq j$.

\begin{figure*}
{\footnotesize
\begin{eqnarray}
\label{ratehalf_9}
   \left[\begin{array}{rrrrrrrrr} 
     x_0 & -x_1^* & -x_2^* & 0   & -x_3^* & 0 & 0 & 0 &\frac{-x_7^*}{\sqrt{2}} \\
     x_1 & x_0^*  & 0  & -x_2^* & 0 & -x_3^* & 0 & 0  &\frac{x_6^*}{\sqrt{2}}  \\
     x_2 & 0 & x_0^* & x_1^*  & 0 & 0 & -x_3^* & 0 &\frac{-x_5^*}{\sqrt{2}} \\
     0   & x_2 & -x_1 & x_0  & 0 & 0 & 0 & -x_3^*  &\frac{-x_4}{\sqrt{2}} \\
     x_3 & 0 & 0 & 0 & x_0^* & x_1^* &x_2^* &0    &\frac{x_4^*}{\sqrt{2}}   \\
     0   & x_3 & 0 & 0  & -x_1 & x_0 &0 &x_2^*     &\frac{-x_5}{\sqrt{2}} \\
     0   &0  & x_3 & 0  & -x_2 &0 & x_0 & -x_1^* &\frac{-x_6}{\sqrt{2}} \\
     0   &0  &0 & x_3 & 0  & -x_2  & x_1 & x_0^* &\frac{-x_7}{\sqrt{2}} \\
     x_4 &-x_5^* &-x_6^* &-x_7^*  & 0 & 0 & 0 & 0  &\frac{-x_3^*}{\sqrt{2}}\\
     x_5& x_4^* & 0 & 0  & -x_6^* &-x_7^* & 0 & 0   &\frac{x_2^*}{\sqrt{2}} \\
    x_6 &0  & x_4^* & 0  & x_5^* &0 &-x_7^* & 0 &\frac{-x_1^*}{\sqrt{2}}  \\
     0    & x_6 & -x_5 &0  & x_4  & 0  & 0 &-x_7^* &\frac{-x_0}{\sqrt{2}}\\
     x_7 &0 &0 & x_4^*  &0 & x_5^*   &-x_7^*  & 0  &\frac{x_0^*}{\sqrt{2}} \\
     0    & x_7  & 0  & -x_5 & 0 & x_4 & 0 & x_6^*  &\frac{-x_1}{\sqrt{2}}  \\
     0    & 0 & x_7 &-x_6  & 0 & 0 & x_4 &-x_5^* &\frac{-x_2}{\sqrt{2}}   \\
     0    & 0 & 0 & 0 & x_7 &-x_6 & x_5   & x_4^*  &\frac{-x_3}{\sqrt{2}}\\
\end{array}\right],~~
\frac{1}{\sqrt{2}}
\left[\begin{array}{r@{\hspace{0.2cm}}r@{\hspace{0.2cm}}r@{\hspace{0.2cm}}r@{\hspace{0.2cm}}r@{\hspace{0.2cm}}r@{\hspace{0.2cm}}r@{\hspace{0.2cm}}r@{\hspace{0.2cm}}r@{\hspace{0.2cm}}}
    x_0   &-x_1   &-x_2    &-x_3    &-x_4    &-x_5     &-x_6  &-x_7 &-x_8\\
    x_1   & x_0   & x_3    &-x_2    & x_5    &-x_4     &-x_7  & x_6 & x_9\\
    x_2   &-x_3   & x_0    & x_1    & x_6    &x_7      &-x_4  &-x_5 & x_{10}\\
    x_3   & x_2   &-x_1    & x_0    & x_7    &-x_6     & x_5  &-x_4 & x_{11}\\
    x_4   &-x_5   &-x_6    &-x_7    & x_0    & x_1     & x_2  & x_3 & x_{12}\\
    x_5   & x_4   &-x_7    & x_6    &-x_1    & x_0     &-x_3  & x_2 & x_{13}  \\
    x_6   & x_7   & x_4    &-x_5    &-x_2    &x_3      & x_0  &-x_1 & x_{14}  \\
    x_7   &-x_6   & x_5    & x_4    &-x_3    &-x_2     & x_1  & x_0 & x_{15}\\
    x_8&-x_9&-x_{10} &-x_{11} &-x_{12} &-x_{13}  &-x_{14}&-x_{15} & x_0\\
    x_9& x_8   &-x_{11} & x_{10} &-x_{13} & x_{12}  & x_{15}&-x_{14}&-x_1\\
    x_{10}& x_{11}& x_8    &-x_9 &-x_{14} &-x_{15}  & x_{12}& x_{13}&-x_2\\
    x_{11}&-x_{10}& x_9 & x_8    &-x_{15} & x_{14}  &-x_{13}& x_{12}&-x_3\\
    x_{12}& x_{13}& x_{14} & x_{15} & x_8    &-x_9  &-x_{10}&-x_{11}&-x_4\\
    x_{13}&-x_{12}& x_{15} &-x_{14} & x_9 & x_8     & x_{11}&-x_{10}&-x_5   \\
    x_{14}&-x_{15}&-x_{12} & x_{13} & x_{10} &-x_{11}  & x_8   & x_9&-x_6\\
    x_{15}&x_{14} &-x_{13} &-x_{12} & x_{11} & x_{10}  &-x_9& x_8   &-x_7\\
    x^*_0   &-x^*_1   &-x^*_2    &-x^*_3    &-x^*_4    &-x^*_5     &-x^*_6  &-x^*_7 &-x^*_8\\
    x^*_1   & x^*_0   & x^*_3    &-x^*_2    & x^*_5    &-x^*_4     &-x^*_7  & x^*_6 & x^*_9\\
    x^*_2   &-x^*_3   & x^*_0    & x^*_1    & x^*_6    &x^*_7      &-x^*_4  &-x^*_5 & x^*_{10}\\
    x^*_3   & x^*_2   &-x^*_1    & x^*_0    & x^*_7    &-x^*_6     & x^*_5  &-x^*_4 & x^*_{11}\\
    x^*_4   &-x^*_5   &-x^*_6    &-x^*_7    & x^*_0    & x^*_1     & x^*_2  & x^*_3 & x^*_{12}\\
    x^*_5   & x^*_4   &-x^*_7    & x^*_6    &-x^*_1    & x^*_0     &-x^*_3  & x^*_2 & x^*_{13}  \\
    x^*_6   & x^*_7   & x^*_4    &-x^*_5    &-x^*_2    &x^*_3      & x^*_0  &-x^*_1 & x^*_{14}  \\
    x^*_7   &-x^*_6   & x^*_5    & x^*_4    &-x^*_3    &-x^*_2     & x^*_1  & x^*_0 & x^*_{15}\\
    x^*_8&-x^*_9&-x^*_{10} &-x^*_{11} &-x^*_{12} &-x^*_{13}  &-x^*_{14}&-x^*_{15} & x^*_0\\
    x^*_9& x^*_8   &-x^*_{11} & x^*_{10} &-x^*_{13} & x^*_{12}  & x^*_{15}&-x^*_{14}&-x^*_1\\
    x^*_{10}& x^*_{11}& x^*_8    &-x^*_9 &-x^*_{14} &-x^*_{15}  & x^*_{12}& x^*_{13}&-x^*_2\\
    x^*_{11}&-x^*_{10}& x^*_9 & x^*_8    &-x^*_{15} & x^*_{14}  &-x^*_{13}& x^*_{12}&-x^*_3\\
    x^*_{12}& x^*_{13}& x^*_{14} & x^*_{15} & x^*_8    &-x^*_9  &-x^*_{10}&-x^*_{11}&-x^*_4\\
    x^*_{13}&-x^*_{12}& x^*_{15} &-x^*_{14} & x^*_9 & x^*_8     & x^*_{11}&-x^*_{10}&-x^*_5   \\
    x^*_{14}&-x^*_{15}&-x^*_{12} & x^*_{13} & x^*_{10} &-x^*_{11}  & x^*_8   & x^*_9&-x^*_6\\
    x^*_{15}&x^*_{14} &-x^*_{13} &-x^*_{12} & x^*_{11} & x^*_{10}  &-x^*_9& x^*_8   &-x^*_7
\end{array}\right]
\end{eqnarray}
}
\hrule
\end{figure*}

Note that the number of rows and columns of the matrix $RH_n$ are $16\cdot \nu(n-8)=8\cdot \nu(n)/8 =\nu(n)$ and $t+8=n$ respectively.
The following theorem is the main result of this paper.
\begin{theorem}
\label{rate12cod}
For any non-zero positive integer $n$, there exists a 
rate-$\frac{1}{2}$ scaled-COD for $n$ transmit antennas with decoding delay $\nu(n)$.
\end{theorem}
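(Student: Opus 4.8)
The plan is to verify directly that the matrix $RH_n$ built in \eqref{gn} meets the three defining requirements of a $(\lambda=2)$-scaled-COD of delay $\nu(n)$ — correct size, rate $\tfrac12$ with the right scaling pattern, and the orthogonality relation — after first disposing of the small cases. For $n\le 8$ the construction \eqref{gn} degenerates ($t=n-8\le 0$), so I would instead take the first $n$ columns of a standard square COD of the appropriate order: the $[8,8,4]$ design $A(x_0,x_1,x_2,x_3)$ of \eqref{tx8cod1} when $5\le n\le 8$, and smaller classical CODs of orders $4,2,1$ when $n\le 4$; each such truncation is automatically orthogonal, has rate $\tfrac12$ (or, for $n=1$, is the trivial design), and has delay exactly $\nu(n)$ by inspection. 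Thus the substance is the case $n\ge 9$, where $t=n-8\ge 1$ and, as already noted before the theorem, $RH_n$ is $\nu(n)\times n$.

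Next I would do the bookkeeping. Counting variables: $E_8$ and $O_8$ are vertical stacks of $\nu(t)=\nu(n)/16$ of the $[8,8,4]$ designs $A(2k)$ and $A(2k+1)$ respectively, and these blocks jointly involve exactly the $4u=\nu(n)/2$ variables $x_0,\dots,x_{\nu(n)/2-1}$; the right block column of \eqref{gn} is obtained by substituting the vectors $\overline A(0),\dots,\overline A(2\nu(t)-1)$ — which involve the same $\nu(n)/2$ variables — into $W_t$ and $\hat W_t$. Hence $RH_n$ has $\nu(n)/2$ complex variables and rate $\tfrac12$. The nonzero entries of columns $1,\dots,8$ are $\pm x_i,\pm x_i^*$ (they come from the genuine CODs $A,B$), those of columns $9,\dots,n$ are $\pm\tfrac1{\sqrt2}x_i,\pm\tfrac1{\sqrt2}x_i^*$ (they come from substituting the column $C(\cdot)$), and in each of those latter columns every variable occurs exactly twice (each variable occurs twice in the corresponding $\overline A(\cdot)$ and every $\overline A(\cdot)$ occurs once per column of $W_t$ or $\hat W_t$); this is precisely the scaled-COD pattern with $\lambda=2$.

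The core is the identity $RH_n^{\mathcal{H}}RH_n=\big(\sum_i|x_i|^2\big)I_n$, which in the block form of \eqref{gn} splits into (a) $E_8^{\mathcal{H}}E_8+O_8^{\mathcal{H}}O_8=\big(\sum_i|x_i|^2\big)I_8$, (b) $H_t^{\mathcal{H}}H_t+\hat H_t^{\mathcal{H}}\hat H_t=\big(\sum_i|x_i|^2\big)I_t$, and (c) the cross term $E_8^{\mathcal{H}}H_t+O_8^{\mathcal{H}}\hat H_t=0$. Part (a) is immediate, since $E_8^{\mathcal{H}}E_8$ and $O_8^{\mathcal{H}}O_8$ are sums of the scalar Gram matrices of the $[8,8,4]$ designs $A(2k)$ and $A(2k+1)$, whose variable sets partition $\{x_0,\dots,x_{\nu(n)/2-1}\}$. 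For part (b), I would use that $W_t,\hat W_t$ are rate-$1$ RODs with all entries nonzero (established in Subsection~\ref{subsec2-2} via Theorems~\ref{ratesquarerod} and~\ref{orthog}), so on any two columns $j\ne j'$ their rows pair off through the fixed-point-free involution $i\mapsto i\oplus\gamma_{\nu(t)}(j)\oplus\gamma_{\nu(t)}(j')$, with the two members of each pair carrying opposite-sign contributions; after the substitution $y_i\mapsto\overline A(\cdot)$, the relation \eqref{Aoverline} — which is exactly the statement that $\overline A(a)^{\mathcal{H}}\overline A(b)=\overline A(b)^{\mathcal{H}}\overline A(a)$ — makes each pair cancel, while the diagonal sums to $\sum_i|x_i|^2$ because $\overline A(0),\dots,\overline A(2\nu(t)-1)$ jointly exhaust the variables. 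For part (c), in the $j$-th column I would match the $k$-th block-row of $E_8$ (namely $A(2k)$) with the $\big(k\oplus\gamma_{\nu(t)}(j)\big)$-th block-row of $O_8$; because the two block-indices $2k$ and $2(k\oplus\gamma_{\nu(t)}(j))+1$ have opposite parity, \eqref{Aij} applies and gives a combined contribution of the form $\big(-s(k,j)+\hat s(k\oplus\gamma_{\nu(t)}(j),j)\big)\,A(\cdot)^{\mathcal{H}}\overline A(\cdot)$, which vanishes by the sign identity $\hat s(k\oplus\gamma_{\nu(t)}(j),j)=s(k,j)$, immediate from \eqref{defineW}--\eqref{defineWhat}.

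I expect the main obstacle to be organising part (c): one must align the ``$2i$ versus $2i+1$'' interleaving hidden in the definitions of $E_8,O_8$ and of $H_t,\hat H_t$ so that every term of $E_8^{\mathcal{H}}H_t$ is matched with exactly one opposite-sign term of $O_8^{\mathcal{H}}\hat H_t$, using only the parity hypothesis of \eqref{Aij}. That hypothesis — ``$i+j$ odd'' — is precisely what forces the even/odd interleaving into the construction, so the bookkeeping is rigid rather than clever; once the indexing is set up correctly, (c) reduces to unwinding the definitions of $s$ and $\hat s$, and no analytic input is needed beyond \eqref{Aij}, \eqref{Aoverline}, and the ROD property of $W_t,\hat W_t$.
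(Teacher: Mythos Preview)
Your proposal is correct and follows essentially the same route as the paper: the same block decomposition of $RH_n^{\mathcal H}RH_n$ into the three pieces (a), (b), (c), the same use of \eqref{Aoverline} for (b) and of \eqref{Aij} together with the sign identity $\hat s(k\oplus\gamma_{\nu(t)}(j),j)=s(k,j)$ for (c), and the same handling of $n\le 8$ by truncating a square $[8,8,4]$ COD. Your write-up in fact supplies more detail than the paper does for (b) (the pairing via $i\mapsto i\oplus\gamma(j)\oplus\gamma(j')$ and the symmetry $\overline A(a)^{\mathcal H}\overline A(b)=\overline A(b)^{\mathcal H}\overline A(a)$), and your bookkeeping of the $\lambda=2$ scaling pattern is a useful addition; the only cosmetic slip is the overall sign in the displayed coefficient for (c), which is harmless since either $\pm(s-\hat s)$ vanishes.
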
 

\begin{proof}
For $n\leq 8$, one can construct rate-$\frac{1}{2}$ COD of size $[\nu(n),n,\frac{\nu(n)}{2}]$
from  a COD of size $[8,8,4]$ given by \eqref{tx8cod1}. We assume that $n\geq 9$. We claim that the matrix $RH_n$ given by \eqref{gn} is a rate-$\frac{1}{2}$ scaled-COD for $n$ transmit antennas with decoding delay $\nu(n)$. 

Let $p=\nu(n)$.
We have
\begin{equation*}
RH_n^\mathcal{H}RH_n=\left[\begin{array}{cc}
     E_8^\mathcal{H}E_8+O_8^\mathcal{H}O_8  & E_8^\mathcal{H}H_t + O_8^\mathcal{H}\hat{H}_t\\
     H_t^\mathcal{H}E_8 + \hat{H}_t^\mathcal{H}O_8 & H_t^\mathcal{H}H_t+\hat{H}_t^\mathcal{H}\hat{H}_t
\end{array}\right].
\end{equation*}
From the construction of $E_8$ and $O_8$ given by \eqref{rowvectorrep},
we have \\
 $E_8^\mathcal{H}E_8+O_8^\mathcal{H}O_8= ({\vert x_0\vert}^2 +\cdots+{\vert x_{\frac{p}{2}-1}\vert}^2)I_8$.
From equation \eqref{Aoverline}, we have
\begin{eqnarray*}
H_t^\mathcal{H}H_t+\hat{H}_t^\mathcal{H}\hat{H}_t= ({\vert x_0\vert}^2 +\cdots+{\vert x_{p/2-1}\vert}^2)I_{n-8}.
\end{eqnarray*}
Thus it is enough to prove that $E_8^\mathcal{H}H_t + O_8^\mathcal{H}\hat{H}_t=0_{8\times (n-8)}$ where $0_{8\times (n-8)}$ is a matrix of size $8\times (n-8)$ containing zero only.
Let the $j$-th column of $H_t$ and $\hat{H}_t$ be $H_t(j)$ and $\hat{H}_t(j)$ respectively.
Then we show that $Z(j)=E_8^\mathcal{H}H_t(j) + O_8^\mathcal{H}\hat{H}_t(j)=0_{8\times 1}$ for all $j\in\{0,1,\cdots,n-8-1\}$.\\
Let $u=p/8$. For convenience, we write $\gamma$ for $\gamma_{\nu(t)}$.
We have
\begin{eqnarray*}
\begin{array}{c}
E_8^\mathcal{H}= \left[
\begin{array}{cccc}
A^\mathcal{H}(0) & A^\mathcal{H}(2)&\cdots & A^\mathcal{H}(u-2)
\end{array}
\right],\\
\vspace*{.40cm}
O_8^\mathcal{H}= \left[
\begin{array}{cccc}
A^\mathcal{H}(1) & A^\mathcal{H}(3)&\cdots & A^\mathcal{H}(u-1)
\end{array}
\right],
\end{array}
\end{eqnarray*}
\begin{eqnarray*}
H_t(j)&=
\left[
\begin{array}{c}
s(0,j)\overline{A}({2(0\oplus \gamma(j))+1}) \\
s(1,j)\overline{A}({2(1\oplus \gamma(j))+1}) \\
. \\
. \\
s(i,j)\overline{A}({2(i\oplus \gamma(j))+1}) \\
. \\
. \\
s(\frac{u}{2}-1,j)\overline{A}({2\bigl((\frac{u}{2}-1)\oplus \gamma(j)\bigr)+1})
\end{array}
\right], \\
\vspace*{.80cm}
\hat{H}_t(j)&=
\left[
\begin{array}{c}
\hat{s}(0,j)\overline{A}({2(0\oplus \gamma(j))}) \\
\hat{s}(1,j)\overline{A}({2(1\oplus \gamma(j))}) \\
. \\
. \\
\hat{s}(i,j)\overline{A}({2(i\oplus \gamma(j))}) \\
. \\
. \\
\hat{s}(\frac{u}{2}-1,j)\overline{A}({2((\frac{u}{2}-1)\oplus \gamma(j))})
\end{array}
\right],
\end{eqnarray*}
where $s(i,j)$ and $\hat{s}(i,j)$ are given by \eqref{defineW} and \eqref{defineWhat} respectively. We have
\begin{eqnarray*}
Z(j)&=&\sum_{i=0}^{\frac{u}{2}-1}s(i,j)A^\mathcal{H}(2i)\overline{A}({2(i\oplus \gamma(j))+1})\\
 &&+ \sum_{i=0}^{\frac{u}{2}-1}\hat{s}(i,j)A^\mathcal{H}(2i+1)\overline{A}({2(i\oplus \gamma(j))}).
\end{eqnarray*}
Now $s(i,j)=\hat{s}(i\oplus \gamma(j),j)$ and 
\begin{align*}
&\sum_{i=0}^{\frac{u}{2}-1}\hat{s}(i,j)A^\mathcal{H}(2i+1)\overline{A}({2(i\oplus \gamma(j))})\\
&=\sum_{i=0}^{\frac{u}{2}-1}\hat{s}(i\oplus \gamma(j),j)A^\mathcal{H}(2(i\oplus \gamma(j))+1)\overline{A}({2i})\bigr).
\end{align*}
Therefore,
\begin{eqnarray*}
Z(j)&=&\sum_{i=0}^{\frac{u}{2}-1}\bigl(s(i,j)A^\mathcal{H}(2i)\overline{A}({2(i\oplus \gamma(j))+1})\\  &&+ \hat{s}(i\oplus \gamma(j),j)A^\mathcal{H}(2(i\oplus \gamma(j))+1)\overline{A}({2i})\bigr)\\
&=&\sum_{i=0}^{\frac{u}{2}-1}s(i,j)(A^\mathcal{H}(2i)\overline{A}({2(i\oplus \gamma(j))+1})\\
  &&+ A^\mathcal{H}(2(i\oplus \gamma(j))+1)\overline{A}({2i}))\\
 &=&0_{8 \times 1} 
\end{eqnarray*}
as the matrix given by \eqref{Aij} is a scaled-COD.
\end{proof}
\noindent
\begin{example}
For $9$ transmit antennas, the rate-$\frac{1}{2}$ scaled-COD of size $[16,9,8]$ and 
the known rate-$\frac{1}{2}$ scaled-COD~\cite{TJC} of size $[32,9,16]$ are given by \eqref{ratehalf_9}.
For $10$ transmit antennas, the proposed rate-$\frac{1}{2}$ code of size $[32,10,16]$ is given in Appendix \ref{appendixIII}. 
\end{example}

It has been shown by Liang~\cite{Lia} that the maximal rate of a COD for $n$ transmit antennas is $\frac{1}{2}+\frac{1}{2t}$ when $n=2t-1$ or $2t$. 
However, the rate of a scaled-COD, with scaling of at least one column is at most half as
each variable appears twice in that column and therefore $k/p\leq 1/2$ where 
$k$ is the number of complex variables and $p$ is the number of rows of the design.

\subsection{Summary of the proposed rate-$\frac{1}{2}$ codes}
It has been observed that the number of complex variables in the proposed rate-$\frac{1}{2}$ code for $n$ transmit antennas is $\frac{\nu(n)}{2}$ and the number of rows is $\nu(n)$ (the number $\nu(n)$ is given by \eqref{nun}).
The construction of these codes requires two rate-1 RODs for $n-8$ antennas. In this paper, we construct $W_{t}$ and $\hat{W_t}$ (where $t=n-8$) given by \eqref{defineW} and \eqref{defineWhat} respectively which are used to construct rate-$\frac{1}{2}$ scaled-CODs  $H_{t}$ and $\hat{H_t}$ (for $t$ transmit antennas) respectively. The matrix 
\begin{eqnarray*}
 \left[\begin{array}{c}
     H_t\\
    \hat{H}_t 
\end{array}\right]
\end{eqnarray*}
constitutes the last $n-8$ columns of the proposed rate-$\frac{1}{2}$ scaled-COD for $n$ antennas while the matrices $E_8$ and $O_8$ given by \eqref{rowvectorrep} constitute the first eight columns of the proposed code.

\section{Delay-minimality for 9 transmit antennas}
\label{sec3}
In this section, it is  shown that the proposed rate-$\frac{1}{2}$ scaled-COD 
for $9$ transmit antennas achieves minimal delay.
To prove this, we need some preliminary facts regarding the interrelationship between ODs and certain bilinear maps. 
It has been observed that~\cite{WaX} the orthogonal designs and bilinear maps are intimately related in the sense that an LPROD of size $[p,n,k]$ exists if and only if there exists a type of bilinear map called {\it{normed bilinear map}} with parameters $p,n$ and $k$. The normed bilinear maps have been studied extensively and one can find a good introduction to this topic in the book by Shapiro~\cite{Sha}. 

A bilinear map $f$ (over a field $\mathbb{F}$) is a map
\begin{eqnarray}
       f : \mathbb{F}^k \times \mathbb{F}^n &\rightarrow& \mathbb{F}^p \\
             (x,y) &\mapsto&    f(x,y)
\end{eqnarray}
\noindent
such that it is linear in both $x$ and $y$, i.e., $f(x_1+x_2,y)=f(x_1,y)+f(x_2,y)$ and $f(x,y_1+y_2)=f(x,y_1)+f(x,y_2)$ for all $x,x_1,x_2\in\mathbb{F}^k$ and $y,y_1,y_2\in\mathbb{F}^n.$ 
If the vector space under consideration is an inner product space, for example, when the field is real numbers or complex numbers, the Euclidean norm of a vector $x$ is denoted by $\norm{x}$.
If a bilinear map preserves the norm, then it is called a  normed bilinear map. More precisely,
\begin{defn}
 A  {\textit{normed real bilinear map}} (NRBM) of size $[p,n,k]$ is a map
 $f : \mathbb{R}^k \times \mathbb{R}^n \rightarrow \mathbb{R}^p$ such that
 $f$ is bilinear and normed i.e., $\norm{f(x,y)} =\norm{x}\norm{y}\forall x\in\mathbb{R}^k, y\in\mathbb{R}^n$.\\
A bilinear map $f$ is called nonsingular if 
$f(x,y)=0$ implies $x=0$ or $y=0$. 
\end{defn}

The following theorem gives a lower bound on $p$ for fixed values of $n$ and $k$.
 \begin{theorem}[Hopf-Stiefel Theorem~\cite{Sha}]
If there exists a nonsingular bilinear map of size $[p,n,k]$ over $\mathbb{R}$, then
      $(x+y)^p =0 \mbox{ in the ring }\mathbb{F}_2[x,y]/(x^n,y^k)$.
\end{theorem}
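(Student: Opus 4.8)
The plan is to derive the binomial divisibility condition from the mod-$2$ cohomology of real projective spaces; this is the classical Hopf--Stiefel argument. First I would turn the bilinear map into a map of projective spaces. Nonsingularity of $f$ gives $f(u,v)\ne 0$ whenever $u\ne 0$ and $v\ne 0$, and bilinearity gives $f(\lambda u,\mu v)=\lambda\mu\,f(u,v)$; hence $([u],[v])\mapsto[f(u,v)]$ is a well-defined continuous map
\[
\bar f:\ \mathbb{RP}^{\,k-1}\times\mathbb{RP}^{\,n-1}\longrightarrow\mathbb{RP}^{\,p-1}.
\]
I would then work with singular cohomology with $\f_2$ coefficients, recalling that $H^{*}(\mathbb{RP}^{\,m-1};\f_2)\cong\f_2[t]/(t^{m})$ with $\deg t=1$, so that by the K\"unneth formula $H^{*}(\mathbb{RP}^{\,k-1}\times\mathbb{RP}^{\,n-1};\f_2)\cong\f_2[x,y]/(x^{n},y^{k})$, where $x$ and $y$ are the degree-one classes pulled back from the $\mathbb{RP}^{\,n-1}$ and $\mathbb{RP}^{\,k-1}$ factors respectively.

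The crux is to compute $\bar f^{*}(c)$ for the generator $c\in H^{1}(\mathbb{RP}^{\,p-1};\f_2)$. Since $H^{1}$ of the product is the $\f_2$-span of $x$ and $y$, write $\bar f^{*}(c)=\alpha x+\beta y$ with $\alpha,\beta\in\f_2$. To fix $\beta$ I would restrict $\bar f$ to a slice $\mathbb{RP}^{\,k-1}\times\{[v_0]\}$: this is the map of projective spaces induced by the linear map $u\mapsto f(u,v_0)$, which is injective by nonsingularity (as $v_0\ne 0$). A linear injection $\R^{k}\hookrightarrow\R^{p}$ induces a map $\mathbb{RP}^{\,k-1}\to\mathbb{RP}^{\,p-1}$ that is nonzero on $H^{1}(-;\f_2)$, hence an isomorphism there since both groups are $\f_2$; comparing with the restriction of $\alpha x+\beta y$ to the same slice, which kills $x$ and carries $y$ to the generator of $H^{1}(\mathbb{RP}^{\,k-1})$, forces $\beta=1$. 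The symmetric computation on the slice $\{[u_0]\}\times\mathbb{RP}^{\,n-1}$ forces $\alpha=1$, so $\bar f^{*}(c)=x+y$.

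Finally I would use that the cohomology ring of $\mathbb{RP}^{\,p-1}$ is truncated above degree $p-1$, whence $c^{p}=0$; applying the ring homomorphism $\bar f^{*}$ gives
\[
(x+y)^{p}=\bigl(\bar f^{*}(c)\bigr)^{p}=\bar f^{*}\!\left(c^{p}\right)=0\quad\text{in}\quad\f_2[x,y]/(x^{n},y^{k}),
\]
which is precisely the assertion (equivalently, $\binom{p}{i}$ is even for all $i$ with $p-k<i<n$). I expect the main obstacle to be the middle paragraph: making rigorous the claim that a linear injection of Euclidean spaces induces a map of projective spaces that is nontrivial on first $\f_2$-cohomology --- which one pins down by observing that restricting it further to a projective line yields an embedded $\mathbb{RP}^{1}$ representing the generator of $\pi_{1}(\mathbb{RP}^{\,p-1})\cong\z/2\z$ --- and keeping the identification of generators under the two slice restrictions consistent with the labelling of $x$ and $y$. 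Everything else is formal once $\bar f$ is in hand. One can alternatively give a purely algebraic proof that bypasses topology, but the cohomological route above is shorter and is the one in Shapiro's book.
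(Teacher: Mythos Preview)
The paper does not prove this theorem at all: it is quoted from Shapiro's book~\cite{Sha} and used as a black box to bound $p_{BL}(n,k)$ from below, so there is no ``paper's own proof'' to compare against. Your argument is the standard Hopf--Stiefel proof via the mod-$2$ cohomology of real projective spaces, and it is correct as written; the only thing to be careful about is exactly what you flagged---keeping straight which generator ($x$ or $y$) comes from which factor so that the quotient relations $x^{n}=0$, $y^{k}=0$ line up with the statement---and you have done that consistently.
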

\begin{defn}
 Let $n,k$ be positive integers. Then the three quantities $n\circ k,$  $p_{BL}(n,k)$ and $p_{NBL}(n,k)$ are defined by
\begin{itemize}
\item $n\circ k  = min \{ p: (x+y)^p = 0 \ \mbox {in} \ {\mathbb{F}_2} [x,y]/(x^n,y^k)\}$,\item    $p_{BL}(n,k) = min \{p: $ there is a nonsingular bilinear map $ [p,n,k] $ over $\mathbb{R}  $ \},
\item   $p_{NBL}(n,k) = min \{p: $ there is a normed bilinear map $ [p,n,k]$ over $ \mathbb{R} $\},
\end{itemize}
\end{defn}
The following  basic facts about these quantities are well-known~\cite{Sha}.\\
 $p_{NBL}(n,k) \geq p_{BL}(n,k) \geq n\circ k$.
It follows from the definition of $ n \circ k $ that
   \begin{prpn}[\cite{Sha}]
   \label{ncirck}
      $ n \circ k $ is a commutative binary operation. \\ 
    $(I)$ If $ k \leq l $ then $ n \circ k \leq  n \circ l $ \\
    $(II)$ $ n \circ k  = 2^m $   if and only if $ k,n \leq 2^m $ and $ k + n > 2^m $ .\\
    $(III)$ If $ n \leq 2^m $ then $ n \circ ( k  + 2^m ) = n \circ k + 2^m . $
\end{prpn}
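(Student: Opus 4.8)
The plan is to discard the bilinear-map language entirely and argue inside the ring $\mathbb{F}_2[x,y]/I$ with $I=(x^n,y^k)$, using only mod-$2$ binomial identities. First I would record the combinatorial reformulation furnished by Lucas' theorem: in $\mathbb{F}_2[x,y]$ one has $(x+y)^p=\sum_{i=0}^p\binom{p}{i}x^iy^{p-i}$ with coefficients taken mod $2$, and in particular $(x+y)^{2^m}=x^{2^m}+y^{2^m}$ while $\binom{2^m-1}{i}$ is odd for every $0\le i\le 2^m-1$. Since $I$ is a monomial ideal, with $x^iy^j\in I$ exactly when $i\ge n$ or $j\ge k$, this gives: $(x+y)^p=0$ in $\mathbb{F}_2[x,y]/I$ iff every $i$ with $\binom{p}{i}$ odd satisfies $i\ge n$ or $p-i\ge k$; equivalently, $(x+y)^p\ne 0$ iff there is an $i$ with $p-k<i<n$ and $\binom{p}{i}$ odd. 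I would also use the trivial fact that $\{p:(x+y)^p=0\}$ is upward closed, so $n\circ k$ is pinned down by ``$(x+y)^{n\circ k}=0$ while $(x+y)^{n\circ k-1}\ne 0$''.

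Commutativity and (I) are then formal: the substitution $x\leftrightarrow y$ is a ring isomorphism $\mathbb{F}_2[x,y]/(x^n,y^k)\cong\mathbb{F}_2[x,y]/(x^k,y^n)$ fixing $x+y$, so $n\circ k=k\circ n$; and if $k\le l$ then $(x^n,y^l)\subseteq(x^n,y^k)$, so the quotient by $(x^n,y^l)$ surjects onto the quotient by $(x^n,y^k)$ fixing $x+y$, and applying this to $(x+y)^{n\circ l}=0$ yields $n\circ k\le n\circ l$. For (II), I would first establish the crude bounds $\max(n,k)\le n\circ k\le n+k-1$: if $p<n$ the monomial $x^p$ has coefficient $\binom{p}{p}=1$ and lies outside $I$ (symmetrically for $p<k$), while in $(x+y)^{n+k-1}$ every monomial $x^iy^{n+k-1-i}$ has $i\ge n$ or $n+k-1-i\ge k$, hence lies in $I$. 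These give the forward implication at once: $n\circ k=2^m$ forces $n,k\le 2^m$, and $2^m\le n+k-1$ forces $n+k>2^m$. Conversely, if $n,k\le 2^m$ then $(x+y)^{2^m}=x^{2^m}+y^{2^m}\in I$, so $n\circ k\le 2^m$; and if in addition $n+k>2^m$, then in $(x+y)^{2^m-1}=\sum_{i=0}^{2^m-1}x^iy^{2^m-1-i}$ the monomial $x^{n-1}y^{2^m-n}$ appears and is not in $I$ (since $n-1<n$ and $2^m-n<k$, the latter being exactly where $n+k>2^m$ is used), so $n\circ k\ge 2^m$; hence $n\circ k=2^m$.

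For (III), put $p=n\circ k$ and assume $n\le 2^m$. Write $(x+y)^p=P+Q$, where $P$ collects the monomials of $(x+y)^p$ of $x$-degree $\ge n$ and $Q$ collects the rest, which necessarily have $y$-degree $\ge k$ because $(x+y)^p\in(x^n,y^k)$. Then $(x+y)^{p+2^m}=(x^{2^m}+y^{2^m})(P+Q)$; working modulo $(x^n,y^{k+2^m})$, the terms $x^{2^m}(P+Q)$ vanish since $x^{2^m}\in(x^n)$ (here $n\le 2^m$), while $y^{2^m}P\in(x^n)$ and $y^{2^m}Q\in(y^{k+2^m})$, so $(x+y)^{p+2^m}=0$ and $n\circ(k+2^m)\le p+2^m$. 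For the reverse inequality, choose $i$ with $p-k\le i\le n-1$ and $\binom{p-1}{i}$ odd, which exists because $(x+y)^{p-1}\ne 0$ in $\mathbb{F}_2[x,y]/(x^n,y^k)$; in $(x+y)^{p+2^m-1}=(x+y)^{p-1}(x^{2^m}+y^{2^m})$ the monomial $x^iy^{p-1-i+2^m}$ can arise only from $y^{2^m}\cdot x^iy^{p-1-i}$ and not from the $x^{2^m}$ branch (that would require $i\ge 2^m\ge n$, impossible), so its coefficient is $\binom{p-1}{i}=1$; moreover it lies outside $(x^n,y^{k+2^m})$ because $i<n$ and, using $i\ge p-k$, $p-1-i+2^m<k+2^m$. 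Hence $(x+y)^{p+2^m-1}\ne 0$, so $n\circ(k+2^m)\ge p+2^m$, completing the proof.

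\textbf{The main obstacle} will be the non-cancellation bookkeeping in the lower bound of (III): one must extract the witness exponent $i$ from the tight window $p-k\le i\le n-1$ given by minimality of $p$, and then verify that the $x^{2^m}$-branch of $(x^{2^m}+y^{2^m})(x+y)^{p-1}$ cannot also contribute the monomial $x^iy^{p-1-i+2^m}$ — which is exactly where the hypothesis $n\le 2^m$ enters. Once the monomial-ideal/Lucas reformulation of the first paragraph is in place, everything else reduces to short index chases.
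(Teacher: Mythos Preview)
Your proof is correct and self-contained. Note, however, that the paper does not actually prove this proposition: it is quoted from Shapiro's book \cite{Sha} with the remark ``It follows from the definition of $n\circ k$ that\ldots'', and no argument is supplied. So there is nothing to compare against; you have filled in what the paper leaves to the reference.

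Your approach---working entirely inside $\mathbb{F}_2[x,y]/(x^n,y^k)$ via the monomial-ideal membership criterion and the mod-$2$ binomial identities $(x+y)^{2^m}=x^{2^m}+y^{2^m}$ and $\binom{2^m-1}{i}\equiv 1$---is the standard elementary route, and each step checks out. The crude bounds $\max(n,k)\le n\circ k\le n+k-1$ are derived cleanly, and the witness-monomial argument for the lower bound in (III) correctly isolates the place where $n\le 2^m$ is needed (to rule out contribution from the $x^{2^m}$ branch). One small cosmetic point: in the upper bound for (III) you could simply say $(x+y)^{p+2^m}=(x+y)^p\cdot(x^{2^m}+y^{2^m})$ and observe that $x^{2^m}\in(x^n)$ while $y^{2^m}\cdot(x+y)^p\in y^{2^m}(x^n,y^k)\subseteq(x^n,y^{k+2^m})$, which avoids the explicit $P+Q$ split; but your version is fine.
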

\begin{example}
 To compute $10 \circ 10$, note that $10 < 2^4$, but $(10 +10) > 16$.
Therefore, $10\circ 10 =16$.
\end{example}

The relation between RODs and NRBMs has been observed by Wang and Xia~\cite{WaX}.
The following theorem states that RODs and normed bilinear maps are equivalent.
\begin{lemma}
\label{rod_nrbl}
An LPROD of size $[p,n,k]$ exists if and only if there exists a normed real bilinear map of size $[p,n,k]$.
\end{lemma}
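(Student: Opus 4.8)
The plan is to use the elementary dictionary between a formal $p\times n$ design matrix and the bilinear map given by matrix--vector multiplication, observing that the defining orthogonality relation of an ROD is literally the norm-preservation relation of an NRBM.

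For the ``only if'' direction I would proceed as follows. Let $G$ be an LPROD of size $[p,n,k]$ in the real variables $x_0,\dots,x_{k-1}$, so each entry of $G$ is a real linear combination of $x_0,\dots,x_{k-1}$ (a zero entry being the zero form) and $G^{\mathcal T}G=(x_0^2+\cdots+x_{k-1}^2)I_n$. Writing $G=G(x)$ for $x=(x_0,\dots,x_{k-1})^{\mathcal T}\in\mathbb{R}^k$ and evaluating this identity at $x=0$ gives $G(0)^{\mathcal T}G(0)=0$, hence $G(0)=0$; thus every entry of $G$ is a homogeneous linear form and $x\mapsto G(x)$ is $\mathbb{R}$-linear. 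I would then define $f:\mathbb{R}^k\times\mathbb{R}^n\to\mathbb{R}^p$ by $f(x,y)=G(x)y$: it is linear in $x$ by the previous remark and linear in $y$ since it is a matrix product, and $\norm{f(x,y)}^2=y^{\mathcal T}G(x)^{\mathcal T}G(x)y=(x_0^2+\cdots+x_{k-1}^2)\norm{y}^2=\norm{x}^2\norm{y}^2$, so $f$ is an NRBM of size $[p,n,k]$.

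For the converse, given an NRBM $f:\mathbb{R}^k\times\mathbb{R}^n\to\mathbb{R}^p$, I would note that for each fixed $x$ the map $y\mapsto f(x,y)$ is $\mathbb{R}$-linear, so there is a unique $p\times n$ real matrix $M(x)$ with $f(x,y)=M(x)y$ for all $y$; its $(i,j)$ entry is the $i$-th coordinate of $f(x,e_j)$, which is linear in $x$ because $f$ is linear in its first argument. Hence, writing $x_0,\dots,x_{k-1}$ for the coordinate functions of $\mathbb{R}^k$, the entries of $M$ are linear forms in $x_0,\dots,x_{k-1}$. The norm condition gives $y^{\mathcal T}M(x)^{\mathcal T}M(x)y=\norm{x}^2\norm{y}^2$ for all $x\in\mathbb{R}^k$, $y\in\mathbb{R}^n$; by polarization (both sides being symmetric bilinear forms in $y$) this yields $M(x)^{\mathcal T}M(x)=\norm{x}^2I_n$ for every $x$, and since both sides are matrices of polynomials in $x_0,\dots,x_{k-1}$ that agree on all of $\mathbb{R}^k$ we obtain the formal identity $M^{\mathcal T}M=(x_0^2+\cdots+x_{k-1}^2)I_n$. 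Thus $M$ is an LPROD of size $[p,n,k]$.

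Finally I would remark that this is just the classical correspondence between composition (``sum of squares'') identities and orthogonal designs, and note in passing that an NRBM is automatically nonsingular over $\mathbb{R}$ (from $\norm{f(x,y)}=\norm{x}\norm{y}$), which is what licenses the use of the Hopf--Stiefel bound in the sequel. I do not expect a genuine obstacle here: the only points requiring care are that the design has no constant entries (handled by evaluating at $x=0$) and that the pointwise norm identity upgrades to a formal polynomial identity (standard, since a polynomial vanishing on all of $\mathbb{R}^k$ is zero); everything else is just reading off the two definitions.
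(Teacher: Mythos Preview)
Your proof is correct and follows essentially the same route as the paper's: both directions are obtained via the correspondence $f(x,y)=G(x)y$, with the norm condition on $f$ matching the identity $G^{\mathcal T}G=\lVert x\rVert^2 I_n$. Your version is slightly more careful in two places (checking $G(0)=0$ and invoking polarization to pass from $y^{\mathcal T}M^{\mathcal T}My=\lVert x\rVert^2\lVert y\rVert^2$ to $M^{\mathcal T}M=\lVert x\rVert^2 I_n$), but these are refinements of the same argument rather than a different approach.
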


\begin{proof}

Let $\underline{x}\in\R^k$ be the column vector $(x_1,\cdots,x_k)^\mathcal{T}$.\\
 Similarly, define
$\underline{y}=(y_1,\cdots,y_n)^\mathcal{T}$ and
$\underline{z}=(z_1,\cdots,z_p)^\mathcal{T}$.

Let $A$ be an ROD of size $[p,n,k]$ in $k$ variables $x_1,x_2\cdots,x_k$.
Let
\begin{eqnarray*}
f : \mathbb{R}^k \times \mathbb{R}^n &\rightarrow& \mathbb{R}^p\\
(\underline{x},\underline{y})&\mapsto&  A\underline{y}. 
\end{eqnarray*}
The $i$-th row of $A$ is given by $\underline{x}^\mathcal{T}B_i$ where the matrices $B_i,i=1,2,\cdots,p$ are uniquely determined by the matrix $A$.
Let $\underline{z}=f(\underline{x},\underline{y})$. As $z_i=\underline{x}^\mathcal{T}B_i\underline{y}$ for $i=1,2,\cdots,p$, the map $f$ is bilinear.\\
$f$ is normed as $\norm{f(\underline{x},\underline{y})}^2=\norm{ A\underline{y}}^2=(A\underline{y})^\mathcal{T}A\underline{y}=
\underline{y}^\mathcal{T}(x_1^2 + x_2^2 +\cdots+x_k^2)I_n)\underline{y}=
\norm{ \underline{x}}^2 \norm{\underline{y}}^2$.

We now prove the converse. Let $f$ be the normed bilinear map given by 
\begin{eqnarray*}
f : \mathbb{R}^k \times \mathbb{R}^n &\rightarrow& \mathbb{R}^p\\
                                (\underline{x},\underline{y})&\mapsto&   \underline{z}. 
\end{eqnarray*}
As $f$ is linear in both $\underline{x}$ and $\underline{y}$, we have $\underline{z}=A\underline{y}$ where $A$ is a $p\times n$ matrix where each entry of the matrix is a real linear combination of the variables $x_1,\cdots,x_k$.
As $f$ is normed, we have 
$\norm{\underline{z}}^2=\norm{ f(\underline{x},\underline{y})}^2=\norm{\underline{x}}^2 \norm{\underline{y}}^2$.
But $f(\underline{x},\underline{y})=A\underline{y}$.
Then, $\norm{ A\underline{y}}^2=(x_1^2 +\cdots+x_k^2)\underline{y}^\mathcal{T}\underline{y}$ i.e.,
$\underline{y}^\mathcal{T}A^\mathcal{T}A\underline{y}=(x_1^2 +\cdots+x_k^2)\underline{y}^\mathcal{T}\underline{y}$.
As $\underline{y}$ consists of variables, this equation is equivalent to 
$A^\mathcal{T}A=(x_1^2 +\cdots+x_k^2)I_n$.
 \end{proof}

We now prove the main result of this section.
\begin{theorem}
The minimum value of the decoding delay of a rate-$\frac{1}{2}$ LPCOD for $9$ transmit antennas is $16$.

\end{theorem}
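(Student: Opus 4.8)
For the upper bound, the rate-$\frac{1}{2}$ scaled-COD of size $[16,9,8]$ displayed in~\eqref{ratehalf_9} (equivalently, the $n=9$ case of the construction of Section~\ref{sec2}) is in particular an LPCOD, so the minimum decoding delay of a rate-$\frac{1}{2}$ LPCOD for $9$ antennas is at most $16$. For the lower bound, suppose a rate-$\frac{1}{2}$ LPCOD $G$ for $9$ antennas exists with decoding delay $p$. Rate $\frac{1}{2}$ forces $p=2k$, where $k$ is the number of complex variables, so it suffices to show that an LPCOD of size $[2k,9,k]$ can exist only when $k\geq 8$.

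The first step is to realify $G$: write each complex variable as $z_j=x_{2j}+\sqrt{-1}\,x_{2j+1}$ with $x_{2j},x_{2j+1}$ real, replace every complex entry $a+\sqrt{-1}\,b$ of $G$ by the $2\times 2$ real matrix with first row $(a,-b)$ and second row $(b,a)$, and replace $I_9$ by $I_{18}$. Since $a$ and $b$ are real linear combinations of $x_0,\dots,x_{2k-1}$, the resulting $4k\times 18$ real matrix $\tilde G$ has nonzero entries that are real linear forms in $x_0,\dots,x_{2k-1}$, and the identity $G^{\mathcal H}G=(\sum_j|z_j|^2)I_9$ becomes $\tilde G^{\mathcal T}\tilde G=(\sum_\ell x_\ell^2)I_{18}$. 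Hence $\tilde G$ is an LPROD of size $[4k,18,2k]$, and by Lemma~\ref{rod_nrbl} there exists a normed real bilinear map of size $[4k,18,2k]$.

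Now I would invoke the Hopf--Stiefel theorem together with $p_{NBL}\geq p_{BL}\geq n\circ k$ to get $4k\geq 18\circ 2k$. The remaining point is the arithmetic of $\circ$. From $(x+y)^2=x^2+y^2$ over $\mathbb{F}_2$ one obtains $2m\circ 2l=2(m\circ l)$; and by Proposition~\ref{ncirck}(III), since $1\leq 2^3$ and $k\leq 8$, commutativity gives $9\circ k=k\circ(1+2^3)=(k\circ 1)+2^3=k+8$ (using $k\circ 1=k$). Therefore $18\circ 2k=2(9\circ k)=2k+16$ for $1\leq k\leq 8$, so $4k\geq 2k+16$, i.e. $k\geq 8$ and $p=2k\geq 16$. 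Combined with the construction, this shows the minimum decoding delay is exactly $16$ (note that $18\circ 16=32=4\cdot 8$, so the bound is attained precisely at $k=8$).

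The steps are short, but two places need care. First, one must verify that the realification really yields an object meeting the paper's definition of an LPROD with parameters $[4k,18,2k]$ — in particular that the entries remain honest real linear forms and that the orthogonality relation transfers as claimed; this is routine but is the only conceptual reduction. Second, if one prefers to bypass the doubling identity $2m\circ 2l=2(m\circ l)$, it suffices to evaluate $18\circ 2k$ directly for $k\in\{1,\dots,7\}$ using binomial coefficients modulo $2$ (Kummer/Lucas) and check that $18\circ 2k=2k+16>4k$ in every case; that single family of evaluations is the only quantitative input, and everything else is bookkeeping.
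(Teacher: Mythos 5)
Your proof is correct and follows essentially the same route as the paper: realify to an LPROD of size $[4x,18,2x]$, invoke Lemma~\ref{rod_nrbl} to obtain a normed real bilinear map, and apply the Hopf--Stiefel bound $4x\geq 18\circ 2x$. The only cosmetic differences are that you make the upper bound (the explicit $[16,9,8]$ code) and the realification step explicit, and you evaluate $18\circ 2x=2x+16$ uniformly via the doubling identity and Proposition~\ref{ncirck}(III), whereas the paper simply tabulates $18\circ 2x$ for $x=5,6,7$.
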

\begin{proof}
We prove it by contradiction. If the minimum value of decoding delay is less than $16$, then 
there exists an LPCOD of size $[2x,9,x]$ with $x\leq 7$ and therefore an LPROD of size $[4x,18,2x]$ exists with $x\leq 7$. By Lemma \ref{rod_nrbl}, there exists a normed real bilinear map of size $[4x,18,2x]$ and hence $4x\geq p_{NBL}(18,2x)\geq 18\circ 2x\geq 18$. Therefore, $x\geq 5$. But for $x=5,6$ and $7$, $18 \circ 2x=26,28$ and $30$ respectively.
In each case, $18\circ 2x>4x$.
\end{proof}
It must be noted that the above argument fails to work when number of antennas is more than $9$. However, it is likely that the proposed rate-$\frac{1}{2}$ scaled-CODs are delay-optimal.

\section{PAPR reduction of rate-$\frac{1}{2}$ scaled-CODs}
\label{sec4}
In this section, we study PAPR properties of the scaled-CODs constructed in this paper. Note that in the construction of $TJC_n$~\cite{TJC}, even though the delay is more, there is no zero entry in the design matrix. On the contrary, in our construction of rate-$\frac{1}{2}$ codes, there are zero entries. To be specific, observe that the first eight columns of rate-$\frac{1}{2}$ code $RH_n, n\geq 9$ given by \eqref{gn} contains as many zero as the number of non-zero entries in it, while there is no zero in the remaining columns of the matrix. When the  number of transmit antennas $n$ is more than $7$, the total number of zeros in the codeword matrix is equal to $8(\nu(n)/2)=4\nu(n)$.  Hence the fraction of zeros in the codeword matrix is equal to $\frac{4\nu(n)}{n\nu(n)}=4/n$ for $n\geq 8$.

Now in the remaining part of this section, we show that one can further reduce the number of zeros in $RH_n$ by suitably choosing a post-multiplication matrix without increasing signaling complexity of the code.

As seen easily, only  the first eight columns contain zeros while the others do not. Moreover, the zeros in the $0$-th column and the $7$-th column occupy complementary locations, so is also for the pairs of columns given by $(1,6), (2,5)$ and $(3,4)$. What it essentially suggests is that we can perform some elementary column operations which will result in a code with no zero entry in it.
Let $Q_n$ be an $n\times n$ matrix given by
\begin{eqnarray*}
\label{matrix8}
Q_n=\left[\begin{array}{cc}
A & 0  \\
0 & I_{n-8}
\end{array}\right]
\end{eqnarray*}
where 
$I_{n-8}$ is the $(n-8)\times (n-8)$ identity matrix and the matrix $A$ (with entries $0,1$ and $-1$) is given by
\begin{eqnarray*}
A=\frac{1}{\sqrt{2}}\left[\begin{array}{r@{\hspace{0.9pt}}r@{\hspace{0.9pt}}r@{\hspace{0.9pt}}r@{\hspace{0.9pt}}r@{\hspace{0.9pt}}r@{\hspace{0.9pt}}r@{\hspace{0.9pt}}r@{\hspace{0.9pt}}}
  1&0&0&0&0&0&0&1 \\
  0&1&0&0&0&0&1&0  \\
  0&0&1&0&0&1&0&0  \\
  0&0&0&1&1&0&0&0  \\
  0&0&0&1&-&0&0&0 \\
  0&0&1&0&0&-&0&0 \\
  0&1&0&0&0&0&-&0 \\
  1&0&0&0&0&0&0&-
\end{array}\right].
\end{eqnarray*}
Here $-1$ is represented by simply the minus sign.
We post-multiply $RH_n$ with $Q_n$ to get a code in which none of the  entries is 
zero.
We formally present this fact as:
\begin{theorem}
$RH_nQ_n$ is a scaled-COD with no zero entry in it. Moreover, the matrix $Q_n$ does not depend on any particular construction procedure (namely the maps $\gamma_t$ and $\psi_t$) used to obtain the constituent rate-1 RODs.
\end{theorem}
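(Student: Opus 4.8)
The plan is to verify the two claims separately: first, that post-multiplying $RH_n$ by $Q_n$ destroys all zero entries, and second, that $RH_nQ_n$ is still a scaled-COD. For the orthogonality claim I would first observe that $Q_n$ is (real) orthogonal up to scaling: the lower-right block $I_{n-8}$ is trivially orthogonal, and a direct check shows $A^{\mathcal T}A = I_8$, since each pair of columns of $\sqrt{2}\,A$ — namely columns forming the index pairs $(0,7),(1,6),(2,5),(3,4)$ — has disjoint support and unit-norm contributions that sum to $2$, while columns from different pairs are clearly orthogonal. Hence $Q_n^{\mathcal T}Q_n = I_n$, so $(RH_nQ_n)^{\mathcal H}(RH_nQ_n) = Q_n^{\mathcal T}\bigl(RH_n^{\mathcal H}RH_n\bigr)Q_n = Q_n^{\mathcal T}\bigl(\sum_i |x_i|^2\bigr)I_n\,Q_n = \bigl(\sum_i |x_i|^2\bigr)I_n$, using Theorem \ref{rate12cod}. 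One must also check that $RH_nQ_n$ remains a \emph{scaled-COD} in the formal sense: every nonzero entry must be of the form $\pm x_i$, $\pm x_i^*$, or $\pm\frac{1}{\sqrt2}x_i$, $\pm\frac{1}{\sqrt2}x_i^*$. This follows because $A$ mixes only pairs of columns of $E_8$ (resp. $O_8$) whose nonzero entries occupy complementary rows — so in each row at most one of the two combined entries is nonzero, and multiplying a single entry $\pm x_i$ or $\pm x_i^*$ by $\pm\frac{1}{\sqrt2}$ lands back in the allowed set; the last $n-8$ columns are untouched.

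For the no-zero-entry claim, the key structural fact — already noted in the text preceding the theorem — is that in $RH_n$ the zeros live only in the first eight columns, and within those eight columns the zero pattern of column $\ell$ is exactly complementary to that of column $7-\ell$ for $\ell = 0,1,2,3$ (this is visible in $A(x_0,x_1,x_2,x_3)$, $B(x_4,x_5,x_6,x_7)$ of \eqref{tx8cod1} and propagates to $E_8,O_8$). When we form $(RH_nQ_n)$, the new $\ell$-th column ($\ell \le 3$) is $\frac{1}{\sqrt2}$ times (column $\ell$ of $RH_n$) plus $\frac{1}{\sqrt2}$ times (column $7-\ell$), and the new $(7-\ell)$-th column is the corresponding difference; since in every row exactly one of these two old columns is nonzero, both new columns are entirely nonzero. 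The remaining $n-8$ columns are carried over unchanged and were already nonzero. Hence $RH_nQ_n$ has no zero entry.

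The final sentence of the theorem — independence of $Q_n$ from the particular maps $\gamma_t,\psi_t$ — requires one more observation: $Q_n$ acts only on the first eight coordinates, i.e. only on the $E_8/O_8$ block and on the first eight columns of $H_t,\hat H_t$ — wait, more precisely, $Q_n$'s nontrivial block $A$ sits in the first eight columns, which are exactly the columns occupied by $E_8$ over $O_8$; the submatrices $H_t,\hat H_t$ (built from $W_t,\hat W_t$, hence from $\gamma_t,\psi_t$) occupy the \emph{last} $n-8$ columns, on which $Q_n$ is the identity. Therefore the cancellation of zeros and the orthogonality computation above depend only on the fixed matrices $A(x_0,\dots,x_3)$ and $B(x_4,\dots,x_7)$ of \eqref{tx8cod1}, never on the choice of $\gamma_t,\psi_t$. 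The main obstacle, such as it is, is bookkeeping: one must confirm that the complementary-support property of the first eight columns is genuinely preserved through the definition \eqref{rowvectorrep} of $E_8,O_8$ as stacks of $A(2i),A(2i+1)$, and that the scaled-COD membership condition is not violated by the $\frac{1}{\sqrt2}$ factors — both of which are routine once the column-support pattern of \eqref{tx8cod1} is written out explicitly.
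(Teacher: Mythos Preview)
Your proposal is correct and follows essentially the same line as the paper's own proof: both rely on (i) the observation that the last $n-8$ columns of $RH_n$ are already zero-free because $H_t,\hat H_t$ arise from rate-1 RODs with no zeros, (ii) the complementary-support pairing $(\ell,7-\ell)$ in the first eight columns, and (iii) the fact that $Q_n$ acts nontrivially only on those first eight columns and hence is independent of $\gamma_t,\psi_t$. Your write-up is in fact more explicit than the paper's---you verify $Q_n^{\mathcal T}Q_n=I_n$ and check that the entries of $RH_nQ_n$ stay in the scaled-COD alphabet, whereas the paper's proof leaves these as understood.
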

\begin{proof}
It is clear that the first $8$ columns of the matrix has $50\%$ zeros in it and in the remaining $n-8$ columns formed by $H_t$ and $\hat{H}_t$, there is no zero as both these matrices are constructed from rate-1 ROD by substituting all the variables in it with appropriate $8$-tuple column vectors. Here neither rate-1 ROD nor the $8$-tuple column vector has any any zero in it. Therefore, the matrix $Q_n$ gives a rate-$\frac{1}{2}$ scaled-COD without any zero irrespective of how the rate-1 RODs are obtained for the construction of $RH_n$.
\end{proof}
\begin{example}
For $9$ antennas, we construct a rate-$\frac{1}{2}$ scaled-COD with no zero entry as shown below  

{\footnotesize
\begin{equation*}
  \left[\begin{array}{rrrrrrrrr}
  x_0 &  -x_1^*&  -x_2^*&  -x_3^*   &x_3^*& -x_2^*& -x_1^* &  x_0   & -x_7^*\\
  x_1   &  x_0^*   & -x_3^*   & -x_2^*   & -x_2^*   &  x_3^*   &  x_0^*   & x_1   &  x_6^*\\
  x_2   & -x_3^*   &  x_0^*   &  x_1^*   &  x_1^*   &  x_0^*   &  x_3^*   & x_2   & -x_5^*\\
 -x_3^*   & x_2   &-x_1   & x_0   & x_0   &-x_1   & x_2   &  x_3^*   &-x_4\\
  x_3   &  x_2^*   &  x_1^*   &  x_0^*   & -x_0^*   & -x_1^*   & -x_2^*   & x_3   &  x_4^*\\
  x_2^*   & x_3   & x_0   &-x_1   & x_1   &-x_0   & x_3   & -x_2^*   &-x_5\\
 -x_1^*   & x_0   & x_3   &-x_2   & x_2   & x_3   &-x_0   &  x_1^*   &-x_6\\
  x_0^*   & x_1   &-x_2   & x_3   & x_3   & x_2   &-x_1   & -x_0^*   &-x_7\\
  x_4   & -x_5^*   & -x_6^*   & -x_7^*   & -x_7^*   & -x_6^*   & -x_5^*   & x_4   & -x_3^*\\
  x_5   &  x_4^*   & -x_7^*   & -x_6^*   &  x_6^*   &  x_7^*   &  x_4^*   & x_5   &  x_2^*\\
  x_6   & -x_7^*   &  x_4^*   &  x_5^*   & -x_5^*   &  x_4^*   &  x_7^*   & x_6   & -x_1^*\\
 -x_7^*   & x_6   &-x_5   & x_4   &-x_4   &-x_5   & x_6   &  x_7^*   &-x_0\\
  x_7   &  x_6^*   &  x_5^*   &  x_4^*   &  x_4^*   & -x_5^*   & -x_6^*   & x_7   &  x_0^*\\
  x_6^*   & x_7   & x_4   &-x_5   &-x_5   &-x_4   & x_7   & -x_6^*   &-x_1\\
 -x_5^*   & x_4   & x_7   &-x_6   &-x_6   & x_7   &-x_4   &  x_5^*   &-x_2\\
  x_4^*   & x_5   &-x_6   & x_7   &-x_7   & x_6   &-x_5   & -x_4^*   &-x_3
\end{array}\right]
\end{equation*}
}

\noindent
with each entry  multiplied by $\sqrt{2},$  by post-multiplying the matrix $RH_9$ (given by
the L.H.S of \eqref{ratehalf_9}) with $Q_9$.
\end{example}
\section{Discussion}
\label{sec5}

For any positive integer $n$, this paper gives a rate-$\frac{1}{2}$ scaled-COD for $n$ transmit antennas with decoding delay $\nu(n)$. The decoding delay of these codes is half the decoding delay of the rate-$\frac{1}{2}$ scaled-CODs given by Tarokh et al~\cite{TJC}.
When number of transmit antennas is large, the maximal rate of CODs is close to $1/2$ and therefore the rate-$\frac{1}{2}$ codes and the maximal-rate CODs are comparable with respect to the rate of the codes. However, the proposed rate-$\frac{1}{2}$ codes have much less decoding delay than that of the maximal-rate CODs. 
Another advantage with the designs reported in this paper is that they do not contain zero entry leading to low PAPR.

All the four constructions namely Adams, Lax and Phillips's construction from Quaternions \& Octonion, Geramita-Pullman construction and the construction given in this paper will give the same square ROD if number of transmit antennas is less than or equal to $8$. Therefore, these four constructions will generate the same rate-$\frac{1}{2}$ scaled-COD if the number of transmit antennas (of the scaled-COD) is less than or equal to $16$. For more than $16$ antennas, rate-$\frac{1}{2}$ scaled-CODs will vary with the methods chosen for the  construction of rate-1 RODs.
Due to space constraint, 
two distinct rate-$\frac{1}{2}$ scaled-CODs for $17$ transmit antennas obtained by two different construction procedures for rate-1 RODs, are not given in this paper.
%

It is not known whether the decoding delay of the proposed rate-$\frac{1}{2}$ scaled-COD for given number of transmit antennas is of minimal delay. It is shown that the proposed code for $9$ antennas is of minimal delay.
In general,
we conjecture that $\nu(n)$ is the minimum value of the decoding delay of rate-$\frac{1}{2}$ scaled-COD for any $n$ transmit antennas. It will be interesting to see whether this is indeed true.



\begin{appendices}
\section{Recursive Construction of $R_t$}
\label{appendixI}
In this appendix we show that the RODs $R_t$ can be constructed recursively. 

Let $K_t=B_t$ for $t=1,2,4$ and $8$.
The four square ODs $K_t,t=1,2,4,8$ are shown below.

{\small
\begin{eqnarray}
\label{K1248}
(x_0),~
\left(\begin{array}{rr}
    x_0   & x_1\\
   -x_1   & x_0
\end{array}\right), ~
\left(\begin{array}{rrrr}
    x_0   & x_1   & x_2    & x_3 \\
   -x_1   & x_0   &-x_3    & x_2 \\
   -x_2   & x_3   & x_0    &-x_1 \\
   -x_3   &-x_2   & x_1    & x_0 \nonumber\\
\end{array}\right),\\
\left(\begin{array}{rrrrrrrr}
    x_0   & x_1   & x_2    & x_3    & x_4    & x_5     & x_6  & x_7\\
   -x_1   & x_0   &-x_3    & x_2    &-x_5    & x_4     & x_7  &-x_6 \\
   -x_2   & x_3   & x_0    &-x_1    &-x_6    &-x_7     & x_4  & x_5 \\
   -x_3   &-x_2   & x_1    & x_0    &-x_7    & x_6     &-x_5  & x_4 \\
   -x_4   & x_5   & x_6    & x_7    & x_0    &-x_1     &-x_2  &-x_3 \\
   -x_5   &-x_4   & x_7    &-x_6    & x_1    & x_0     & x_3  &-x_2 \\
   -x_6   &-x_7   &-x_4    & x_5    & x_2    &-x_3     & x_0  & x_1 \\
   -x_7   & x_6   &-x_5    &-x_4    & x_3    & x_2     &-x_1  & x_0
\end{array}\right).
\end{eqnarray}
}

It follows that
\begin{eqnarray*}
\begin{array}{c}
K_t^\mathcal{T}=K_t^\mathcal{T}(x_0,x_1,\cdots,x_{t-1})=K_t(x_0,-x_1,\cdots,-x_{t-1})\\\mbox{ and }-K_t^\mathcal{T}=K_t(-x_0,x_1,\cdots,x_{t-1})
\end{array}
\end{eqnarray*}
for $t=1,2,4$ or $8$.
The expression for $R_t$  of order $t$ as given in Theorem \ref{orthog} gives rise to the following recursive construction of $R_t$.
\begin{figure*}
\begin{equation*}
  R_{2n}= \left[\begin{array}{cc}
 R_n & x_{\rho(n)}I_n \\
-x_{\rho(n)}I_n & R_n^\mathcal{T}
\end{array}\right],
\quad
  R_{4n}= \left[\begin{array}{cc}
 R_{2n} & x_{\rho(n)+1}I_{2n}\\
-x_{\rho(n)+1}I_{2n} & R_{2n}^\mathcal{T}
\end{array}\right],
  \end{equation*}
 \begin{equation}
\label{Rn16n1}
  R_{8n}= \left[\begin{array}{cc}
 R_{4n} & T_4(y_0,y_1)\otimes I_n \\
T_4(-y_0,y_1)\otimes I_n & R_{4n}^\mathcal{T}
\end{array}\right],
\quad
 R_{16n}= \left[\begin{array}{cc}
 R_{8n} & T_8(y_2,y_3,y_4,y_5)\otimes I_n \\
 T_8(-y_2,y_3,y_4,y_5)\otimes I_n & R_{8n}^\mathcal{T}
\end{array}\right]\\
\end{equation}
\hrule
\end{figure*}
Given two matrices $U=(u_{ij})$ of size $v_1\times w_1$ and $V$ of size $v_2\times w_2$, we define the {\it{Kronecker product or tensor product }} of $U$ and $V$ as the following $v_1v_2\times w_1w_2$ matrix:
\[\left(\begin{array}{cccc}
  u_{11}V     &u_{12}V         & \cdots &u_{1w_1}V  \\
  u_{11}V     &u_{12}V         & \cdots &u_{1w_1}V  \\
  \vdots      &\vdots          & \ddots &\vdots     \\
  u_{v_11}V   &u_{v_12}V       & \cdots &u_{v_1w_1}V
\end{array}\right).  \]

Let $I_n$ be an identity matrix of size $n$. Define
\begin{eqnarray*}
I_2^0=\begin{bmatrix} 1 & 0 \\ 0 & 1 \end{bmatrix},&&~
I_2^1=\begin{bmatrix} 1 & 0 \\ 0 & -1 \end{bmatrix},\\
I_2^2=\begin{bmatrix} 0 & 1 \\ 1 & 0 \end{bmatrix},&& ~
I_2^3=\begin{bmatrix} 0 & -1 \\ 1 & 0 \end{bmatrix},\\
I_4^0=I_4,&& I_4^1=I_2^3 \otimes I_2^2,\\
I_8^0=I_8,&& I_8^1=I_2^0 \otimes I_4^1,\\
I_8^2= I_2^3 \otimes I_2^1 \otimes I_2^2,&& I_8^3= I_2^3 \otimes I_2^2 \otimes I_2^0.
\end{eqnarray*}

Let $y_0,\cdots,y_5$ be real variables. Define
\begin{eqnarray*}
T_4(y_0,y_1)&=&y_0I_4^0 +y_1I_4^1,\\
T_8(y_2,y_3,y_4,y_5)&=& y_2 I_8^0 + y_3 I_8^1 +y_4 I_8^2 + y_5 I_8^3.
\end{eqnarray*}

We have four RODs of order $n=2^a$ with $a=0,1,2,3$ as given by
\eqref{K1248} which are respectively $K_1,K_2,K_4$ and $K_8$.\\
Assuming that a square ROD of order $n=2^{4l-1}, l\geq 1$
\[
R_n=R_n(x_0,\cdots,x_{\rho(n)-1})
\]
which has $\rho(n)$ real variables, is given, then
we construct $R_{2n},R_{4n},R_{8n},R_{16n}$ of order $2n$, $4n$, $8n$ and $16n$ respectively given by \eqref{Rn16n1}
where $y_i=x_{\rho(n)+2+i}$ and
\begin{eqnarray*}
R_t^\mathcal{T}&=&R_t^\mathcal{T}(x_0,x_1,\cdots,x_{\rho(t)-1})\\
               &=&R_t(x_0,-x_1,\cdots,-x_{\rho(t)-1}),\\
-R_t^\mathcal{T}&=&R_t(-x_0,x_1,\cdots,x_{\rho(t)-1}).
\end{eqnarray*}
 
\section{Adams-Lax-Phillips and Geramita-Pullman constructions as special cases}
\label{appendixII}
In this appendix we show that the well-known constructions of square RODs by Adams-Lax-Phillips using Octonions and Quaternions as well as the construction by Geramita and Pullman are nothing but our construction corresponding to specific choices of the functions  $\gamma_t$ and $\psi_t$ defined by \eqref{gammamap} and \eqref{psimap}.
It turns out to be convenient to use the map  $\chi_t=\psi_t\gamma_t$ than the map $\psi_t.$ Note that both $\gamma_t$ and $\chi_t$ act on the set $Z_{\rho(t)}$ and are injective. Now given $\gamma_t$ and $\chi_t$, we have $\psi_t=\chi_t\gamma_t^{(-1)}$.
With this new definition, we can reformulate the criterion given in Theorem \ref{orthog}
as follows.
\begin{eqnarray}
\label{gammachi}
\vert (\chi_t(x)\oplus \chi_t(y))\cdot(\gamma_t(x)\oplus \gamma_t(y))\vert \\
\mbox{ is an odd integer } \forall x,y\in Z_{\rho(t)}, x\neq y \nonumber.
\end{eqnarray}
In the following lemma, we define $\gamma_t$ and $\chi_t$ in three different ways and these maps are shown to satisfy the relation given by \eqref{gammachi}.
Although both $\gamma_t$ and $\chi_t$ are different for all the three cases for arbitrary values of $t$, $\gamma_t$ is the identity map when $t=1,2,4$ or $8$.
Hence $\chi_t=\psi_t$ if $t\in\{1,2,4,8\}$. 

\begin{lemma}
\label{threedesign}
Let $t=2^a$, $a=4c+d$, $m\in\{0,1,\cdots,7\}$.
Let $\gamma_t$ and $\chi_t$ be two maps defined over $Z_{\rho(t)}$ in three different ways as given below. Identify $\gamma_t(Z_{\rho(t)})$ and $\chi_t(Z_{\rho(t)})$ as subsets of $\mathbb{F}_2^a$.
Then $\vert (\gamma_t(x_1)\oplus \gamma_t(x_2))\cdot (\chi_t(x_1)\oplus \chi_t(x_2))\vert$ is odd for all $x_1,x_2\in Z_{\rho(t)}$, $x_1\neq x_2$. For $x=8l+m \in Z_{\rho(t)},$\\
(i)
\begin{eqnarray*}
\begin{array}{lll}
\gamma_t(8l+m)&=&t(1-2^{-l})+{8^l}m\\
\chi_t(8l+m)&=&
\begin{cases}
0  &  \text{ if  $l=0, m=0$} \\
t.2^{-l} & \text{ if  $l\neq 0, m=0$} \\
8^l\chi_{2^d}(m)&\text{ if $l=c, m\neq 0$ }\\
t.2^{-l-1}+8^l\chi_8(m)&\text{ if $l\neq c, m\neq 0$ }
 \end{cases}
\end{array}
\end{eqnarray*}
(ii)
{\footnotesize
\begin{eqnarray*}
\gamma_t(8l+m)&=&
 \begin{cases}
t(1-2^{-2l})+2^{2l}m  & \hspace{-8pt}\mbox{ if }  0\leq m \leq 3  \\
t(1-2^{-2l-1})+2^{2l}(m-4)  & \hspace{-8pt}\mbox{ if } 4\leq m \leq 7,  \nonumber\\
\end{cases}\\
\chi_t(8l+m)&=&
\begin{cases}
0  &  \mbox{ if } l=0,m=0 \\
t.2^{-2l} & \text{ if  $l\neq 0,m=0$}\\
t.2^{-2l-1}  &  \text{ if  $l\neq 0,m=4$} \\
4            &  \text{ if  $l=0,m=4$} \\
2^{2l}\chi_{2^d}(m)&\text{ if $l=c,m\neq 0$ }\\
t.2^{-2l-1}+2^{2l}\chi_4(m)&\text{ if $l\neq c,m\in\{1,2,3\}$ }\\
t.2^{-2l-2}+2^{2l}\chi_4^\prime(m-4)&\text{ if $l\neq c,m\in\{5,6,7\}$ }\\
 \end{cases}
\end{eqnarray*}
}
where $\chi_4^\prime=\left(\begin{array}{cccccccccccc}
    0   & 1 &2 &3 \\
    0   & 1 &3 &2
    \end{array}\right),$

(iii)
\begin{eqnarray*}
\gamma_t(8l+m)&=&
 \begin{cases}
\frac{8t}{15}(1-2^{-4l}) +\frac{tm}{16^{l+1}} &\text{ if $l< c$, }\\
\frac{8t}{15}(1-2^{-4l}) +m & \text{ if $l=c$ }\nonumber\\
\end{cases}\\
\chi_t(8l+m)&=&
\begin{cases}
0  &  \text{ if  $l=0,m=0$} \\
\frac{t}{2}2^{-4(l-1)} & \text{ if  $l\neq 0, m=0$}\\
 \chi_{2^d}(m)&\text{ if $l=c, m\neq 0$. }\\
\frac{t}{2}2^{-4l}+\frac{t\chi_8(m)}{2^{4(l+1)}}&\text{ if $l\neq c,m\neq 0$. }
 \end{cases}
\end{eqnarray*}

\end{lemma}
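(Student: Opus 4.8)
The plan is to reduce, for each of the three pairs $(\gamma_t,\chi_t)$, the criterion \eqref{gammachi} to a small number of finite checks, in exactly the spirit of the proof of Theorem~\ref{orthog}. All three pairs satisfy $\gamma_t(0)=\chi_t(0)=0$, and since $u\mapsto\lvert u\cdot v\rvert$ is additive modulo $2$ both in $u$ and in $v$, the expansion
$\lvert(\gamma_t(x)\oplus\gamma_t(y))\cdot(\chi_t(x)\oplus\chi_t(y))\rvert\equiv\lvert\gamma_t(x)\cdot\chi_t(x)\rvert+\lvert\gamma_t(y)\cdot\chi_t(y)\rvert+\lvert\gamma_t(x)\cdot\chi_t(y)\rvert+\lvert\gamma_t(y)\cdot\chi_t(x)\rvert$
shows that for each construction it suffices to prove (a) $\lvert\gamma_t(z)\cdot\chi_t(z)\rvert$ is odd for every $z\neq 0$, and (b) $\lvert\gamma_t(x)\cdot\chi_t(y)\rvert+\lvert\gamma_t(y)\cdot\chi_t(x)\rvert$ is odd for all distinct nonzero $x,y$; statement (a) then also disposes of the pairs involving $0$. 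I would carry out this bookkeeping once for each of (i), (ii), (iii), treating $t=2^a$ and viewing $\gamma_t(Z_{\rho(t)}),\chi_t(Z_{\rho(t)})$ inside $\f_2^a$.

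The common mechanism is a coordinate-support analysis. In each construction the index set $Z_{\rho(t)}$ is organised into blocks indexed by $l$, with $z=8l+m$ (in (ii) the blocks split further into half-blocks $0\le m\le 3$, $4\le m\le 7$; in (iii) they are grouped by powers of $16$), and on block $l$ both $\gamma_t(z)$ and $\chi_t(z)$ decompose into a \emph{high part} --- a run of consecutive high-order ones for $\gamma_t$, a single high-order bit for $\chi_t$ (a slightly different single bit in the $m=0$ branch) --- together with a \emph{low part} of width at most three near coordinate $3l$ (for the terminal block $l=c$ the low part is instead confined to the bottom $d$ coordinates, which is precisely why the definitions single out $l=c$ via $\chi_{2^d}$). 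I would first establish the disjointness ledger: for $0\le l\le c$ the low parts of distinct blocks are pairwise disjoint and disjoint from every high part, with the one deliberate exception that the high run of $\gamma_t$ on block $l$ contains the single high bit of $\chi_t$ on block $l$. Granting this and writing $x=8l_1+m_1$, $y=8l_2+m_2$: when $l_1=l_2=l$ the high parts contribute nothing to the relevant products, everything localises to the window near $3l$, and (a), (b) collapse to ``$\lvert m\cdot\chi_8(m)\rvert$ odd'' and ``$\lvert m_1\cdot\chi_8(m_2)\rvert+\lvert m_2\cdot\chi_8(m_1)\rvert$ odd'' --- the exact analogues of Lemmas~\ref{propertyV8} and~\ref{propertyphi}, which are finite checks. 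When $l_1\neq l_2$, say $l_1<l_2$, one of the two mixed products vanishes on support grounds while the other is supported on the single coordinate where the long high run of $\gamma_t$ of the higher block meets the high bit of $\chi_t$ of the lower block, hence has weight $1$; so the sum in (b) is odd. The terminal block $l=c$ with its $2^d$ variables is handled by the same argument with $\chi_8$ replaced by $\chi_{2^d}$ (and by $\chi_4,\chi_4^\prime$ for the half-blocks of construction (ii)), the base check again being over a set of size at most $8$.

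An alternative, possibly cleaner, route is induction along an order-doubling recursion for each family, analogous to the one for $R_t$ in Appendix~\ref{appendixI}: assuming \eqref{gammachi} for order $n$, the inductive step for orders $2n,4n,8n,16n$ is precisely the cross-block tally above, the small maps $\chi_8,\chi_4,\chi_4^\prime,\chi_{2^d}$ supplying the required parity identities. I would use whichever presentation keeps the case list shortest for the construction at hand, and dispatch the sub-cases $m_i=0$ (absent low part, high bit of $\chi_t$ at coordinate $a-l$ instead of $a-l-1$) in the same uniform way.

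The main obstacle I anticipate is combinatorial rather than conceptual: keeping the coordinate-support ledger correct and uniform across $d\in\{0,1,2,3\}$, in particular at the terminal block, where for $d\le 2$ the naive ``window near $3l$'' would collide with the high run and only the dedicated $\chi_{2^d}$ branch restores disjointness; and verifying that the several $m=0$ branches still produce exactly one odd overlap in (b). Once these adjacency facts are pinned down, what remains is three small finite verifications for $\chi_8,\chi_4,\chi_4^\prime,\chi_{2^d}$ together with routine modulo-$2$ arithmetic.
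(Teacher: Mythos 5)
Your proposal follows essentially the same route as the paper's proof: the reduction to the two parity statements (the diagonal terms $\lvert\gamma_t(z)\cdot\chi_t(z)\rvert$ and the cross terms $\lvert\gamma_t(x)\cdot\chi_t(y)\rvert+\lvert\gamma_t(y)\cdot\chi_t(x)\rvert$), the split of $\gamma_t$ and $\chi_t$ into high and low parts with a support-disjointness ledger (the paper's facts (A1)--(A6)), the case distinction $l_1<l_2$ versus $l_1=l_2$ with the single-coordinate overlap giving weight one in the former and the finite checks via $\chi_8,\chi_{2^d}$ (Lemma~\ref{propertyV8}) in the latter, and the separate $m=0$ branches. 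The paper carries this out explicitly only for construction (i), declaring (ii) and (iii) analogous, so your plan is, if anything, slightly more thorough.
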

\begin{proof}
We give proof only for the case (i). The cases  (ii) and (iii) can be proved similarly.\\
It is enough to prove that \\
  (B1) $\vert \gamma_t(x)\cdot \chi_t(x)\vert$ is odd for all $x\neq 0$, $x\in Z_{\rho(t)}$ and \\
  (B2) $\vert \gamma_t(x_1)\cdot \chi_t(x_2)\vert +\vert\gamma_t(x_2)\cdot \chi_t(x_1)\vert$ is odd for all $x_1,x_2\in Z_{\rho(t)}$,  $x_1\neq x_2, x_1\neq 0,x_2\neq 0$.

Let $\gamma_t(8l+m)=\gamma_t^{(1)}(8l+m)+\gamma_t^{(2)}(8l+m)$ such that $\gamma_t^{(1)}(8l+m)=t(1-2^{-l})$ and $\gamma_t^{(2)}(8l+m)={8^l}m$.\\
Similarly, let $\chi_t(8l+m)=\chi_t^{(1)}(8l+m)+\chi_t^{(2)}(8l+m)$ such that 
\begin{eqnarray}
\chi_t^{(1)}(8l+m)=
    \begin{cases}
0  &  \text{ if  $l=0,m=0,$} \nonumber\\
t2^{-l} & \text{ if  $l\neq 0,m=0,$} \nonumber\\
0 &\text{ if $l=c,m \neq 0,$ }\\
t2^{-l-1} &\text{ if $l\neq c,m\neq 0,$ }\\
 \end{cases}\\
\chi_t^{(2)}(8l+m)=
    \begin{cases}
0  &  \text{ if  $l=0,m=0,$} \nonumber\\
0  & \text{ if  $l\neq 0,m=0,$} \nonumber\\
8^l\chi_{2^d}(m)&\text{ if $l=c,m\neq 0,$ }\\
8^l\chi_{8}(m)&\text{ if $l\neq c,m\neq 0.$ }
 \end{cases}
\end{eqnarray}

\noindent Let $8l+m\neq 0$ and $8l^\prime+ m^\prime\neq 0$.
From the definition of $\gamma_t^i,\chi_t^i,i=1,2$, it follows that

{\small
\begin{eqnarray*}
\begin{array}{l}
(A1)~~ \vert\chi_t^{(2)}(8l+m)\cdot \gamma_t^{(2)}(8l^\prime+ m^\prime)\vert=0 \mbox{ if } l\neq l^\prime,\\
(A2)~~ \vert\chi_t^{(1)}(8l+m)\cdot \gamma_t^{(1)}(8l^\prime+ m^\prime)\vert=1\mbox{ if }l <l^\prime,\\
(A3)~~ \vert\chi_t^{(1)}(8l+m)\cdot \gamma_t^{(1)}(8l^\prime+ m^\prime)\vert=0\mbox{ if }l >l^\prime \\
\hspace{24pt}\mbox{ or if } l=l^\prime, m\neq 0, \\
(A4)~~ \vert\chi_t^{(1)}(8l)\cdot \gamma_t^{(1)}(8l+m)\vert=1 \mbox{ if }l\neq 0, \\
(A5)~~ \vert\chi_t^{(1)}(x)\cdot \gamma_t^{(2)}(y)\vert=\vert\chi_t^{(2)}(x)\cdot \gamma_t^{(1)}(y)\vert=0
     ~\forall~ x,y\in Z_{\rho(t)},\\
(A6)~~ \vert\chi_t^{(2)}(8l)\cdot \gamma_t^{(2)}(8l+m)\vert=\vert\chi_t^{(2)}(8l+m)\cdot \gamma_t^{(2)}(8l)\vert=0.
\end{array}
\end{eqnarray*}
}

First we prove (B1). Let $x=8l+m$ with $m\neq 0$. We have 

{\small
\begin{eqnarray*}
\vert \chi_t(x)\cdot\gamma_t(x)\vert&\equiv&\vert\chi_t^{(1)}(8l+m)\cdot \gamma_t^{(1)}(8l+m)\vert + \vert\chi_t^{(2)}(8l+m)\\
&& \cdot \gamma_t^{(2)}(8l+m)\vert + \vert\chi_t^{(1)}(8l+m)\cdot \gamma_t^{(2)}(8l+m)\vert \\
&& + \vert\chi_t^{(2)}(8l+m)\cdot \gamma_t^{(1)}(8l+m)\vert\\
&=&\vert\chi_t^{(1)}(8l+m)\cdot \gamma_t^{(1)}(8l+m)\vert\\ &&+\vert\chi_t^{(2)}(8l+m)\cdot \gamma_t^{(2)}(8l+m)\vert \mbox{ by (A5) } \\
 &=&\vert\chi_t^{(2)}(8l+m)\cdot \gamma_t^{(2)}(8l+m)\vert \mbox{ using (A3) }\\	
&=&\vert\chi_e(m)\cdot m\vert, \mbox{ $e=2^d$ if $l=c$, else $e=8$ }
\end{eqnarray*}
}
\noindent
But $\vert\chi_e(m)\cdot m\vert$ is an odd number by Lemma \ref{propertyV8}.\\
If $m=0$, we have $\vert \gamma_t(x)\cdot \chi_t(x)\vert=1$ by (A4).\\

To prove (B2), let $x_1\neq 0$ and $x_2\neq 0$.
Write $x_2=8l_2+m_2$, $x_1=8l_1+m_1$ with $x_2>x_1$. 
We have two cases:\\
(C1): $l_2>l_1$,~~~~~~
(C2): $l_2=l_1=l$, $m_2> m_1$.\\

\noindent
{\bf Case (C1):} we have 

{\small
\begin{eqnarray*}
\chi_t(x_2)\cdot \gamma_t(x_1)=\chi_t^{(1)}(8l_2+m_2)\cdot \gamma_t^{(1)}(8l_1+m_1) \\
\vspace{120pt}\oplus \chi_t^{(2)}(8l_2+m_2)\cdot \gamma_t^{(2)}(8l_1+m_1) \text{ by (A5) }.
\end{eqnarray*}
}
\noindent
But $\vert\chi_t^{(1)}(8l_2+m_2)\cdot \gamma_t^{(1)}(8l_1+m_1)\vert=0$ by (A3) \\
and $\vert\chi_t^{(2)}(8l_2+m_2)\cdot \gamma_t^{(2)}(8l_1+m_1)\vert=0$ by (A1),\\
thus $\vert\chi_t(x_2)\cdot \gamma_t(x_1)\vert=0$. \\
Now $\chi_t(x_1)\cdot \gamma_t(x_2)= \chi_t^{(1)}(8l_1+m_1)\cdot \gamma_t^{(1)}(8l_2+m_2) \oplus \chi_t^{(2)}(8l_1+m_1)\cdot \gamma_t^{(2)}(8l_2+m_2)$ by (A5).\\
But $\vert\chi_t^{(2)}(8l_1+m_1)\cdot \gamma_t^{(2)}(8l_2+m_2)\vert=0$ by (A1) and
$\vert \chi_t^{(1)}(8l_1+m_1)\cdot \gamma_t^{(1)}(8l_2+m_2)\vert=1$ by (A2).\\
Hence $\vert \chi_t(x_1)\cdot \gamma_t(x_2)\vert + \vert \chi_t(x_2)\cdot \gamma_t(x_1)\vert$
is an odd number.

\noindent
{\textbf{Case (C2):}} we consider two following cases:\\
 (i) $m_1\neq 0$ and (ii) $m_1=0$. Note that $m_2$ is always non-zero.\\
Let $d=\vert (\chi_t(x_1)\cdot \gamma_t(x_2))\oplus(\chi_t(x_2)\cdot \gamma_t(x_1))\vert$.\\

\noindent
{\textbf{Case (i):}} We have
 {\footnotesize
\begin{eqnarray*}
 d &\equiv&\vert\chi_t^{(2)}(8l+m_1)\cdot\gamma_t^{(2)}(8l+m_2)\vert \\
&&+ \vert \chi_t^{(2)}(8l+m_2)\cdot \gamma_t^{(2)}(8l+m_1)\vert \mbox{ by (A3) and (A5) }\\
&=&\vert(\chi_e(m_1)\cdot m_2)
\oplus(\chi_e(m_2)\cdot m_1)\vert, \mbox{ $e=2^d$ if $l=c$, else $e=8$ }
\end{eqnarray*}
}
which is an odd number by Lemma \ref{propertyV8}.\\
{\textbf{Case (ii):}}
Since $m_1=0$, therefore $l\neq 0$. We have
\begin{eqnarray*}
 d &\equiv&\vert\chi_t^{(1)}(8l)\cdot\gamma_t^{(2)}(8l+m_2)\vert \\
&&+\vert \chi_t^{(1)}(8l+m_2)\cdot \gamma_t^{(1)}(8l)\vert \mbox{ by (A6). }\\
&=& 1 \mbox{ by (A3) and (A4). }	
\end{eqnarray*}
\end{proof}

\begin{figure*}
{\small
\begin{equation}
\label{ALPQ}
\mathbb{O}_{16n}^{(Q)}=\left(\begin{array}{cccc}
I_n \otimes L_4(x_0,x_1,x_2,x_3)& 0_{4n} & I_n \otimes R_4(x_4,x_5,x_6,x_7) & \mathbb{O}_1(y_0,\cdots,y_{\rho(n)-1})\otimes I_4 \\
  0_{4n} & I_n \otimes L_4(x_0,x_1,x_2,x_3) & -\mathbb{O}_1^\mathcal{T}(y_0,\cdots,y_{\rho(n)-1})\otimes I_4  & I_n \otimes R_4^\mathcal{T}(x_4,x_5,x_6,x_7) \\
 I_n \otimes -R_4^\mathcal{T}(x_4,x_5,x_6,x_7) & \mathbb{O}_1(y_0,\cdots,y_{\rho(n)-1})\otimes I_4 & I_n \otimes L_4^\mathcal{T}(x_0,x_1,x_2,x_3)
& 0_{4n} \\
-\mathbb{O}_1^\mathcal{T}(y_0,\cdots,y_{\rho(n)-1})\otimes I_4 & I_n \otimes -R_4(x_4,x_5,x_6,x_7) &  0_{4n} & I_n\otimes L_4^\mathcal{T}(x_0,x_1,x_2,x_3)
\end{array}\right)
\end{equation}
}
\hrule
\end{figure*}
By Lemma \ref{threedesign} and Theorem \ref{ratesquarerod}, the matrix $B_t$ defined by two functions $\gamma_t$ and $\chi_t$ is a square ROD in all the three cases. We refer to these three different RODs by $A_t, \hat{A}_t$ and $P_t$ corresponding to the pair of functions defined in (i), (ii) and (iii) respectively. 

Now, we proceed to show that the designs $A_t, \hat{A}_t$ and $P_t$ are essentially the Adams-Lax-Phillips construction using Octonions and Quaternions and the Geramita-Pullman construction respectively with change in sign of some rows or columns. 
 
\subsection{Adams-Lax-Phillips Construction from Octonions as a special case}
The Adams-Lax-Phillips construction from Octonions is given by induction from order $n=2^a$ to $16n$ as follows~\cite{Lia}:
denoting the square ROD of order $n=2^a$ resulting from the Adams-Lax-Phillips construction using Octonions by  
\[
\mathbb{O}_n=\mathbb{O}_n(x_0,\cdots,x_{\rho(n)-1})
\]
which has $\rho(n)$ real variables, the square ROD of order $16n$ with $(\rho(n)+8)$
real variables $x_i,$ $i=0,1,\cdots, \rho(n)+7,$
\[
\mathbb{O}_{16n}=\mathbb{O}_{16n}(x_0,\cdots,x_{\rho(n)+7})
\]
is given by
\begin{eqnarray*}
  \mathbb{O}_{16n}= \left[\begin{array}{cc}
 I_n\otimes K_8(y_0,\cdots,y_7) & \mathbb{O}_n\otimes I_8 \\
\mathbb{O}_n^\mathcal{T}\otimes I_8 & I_n\otimes (-K_8^\mathcal{T}(y_0,\cdots,y_7))\\
\end{array}\right]
 \end{eqnarray*}
with $y_i=x_{\rho(n)+i}$. \\
With re-arrangement of variables and change in signs, we rewrite the design $\mathbb{O}_{16n}$ as

{\footnotesize
\begin{eqnarray*}
  \mathbb{O}_{16n}^{(O)}= \left[\begin{array}
{c@{\hspace{0.4pt}}c@{\hspace{0.2pt}}}
 I_n\otimes K_8(x_0,\cdots,x_7) & \mathbb{O}_n^{(O)}(y_0,\cdots,y_{\rho(n)-1})\otimes I_8 \\
-\mathbb{O}_n^{(O)\mathcal{T}}(y_0,\cdots,y_{\rho(n)-1})\otimes I_8 & I_n\otimes K_8^\mathcal{T}(x_0,\cdots,x_7)\\
\end{array}\right]
 \end{eqnarray*}
}
with $y_i=x_{8+i}$ and $\mathbb{O}_n^{(O)}=\mathbb{O}_n,~~ n=1,2,4,8.$ The reason why we consider this rearranged version is that we show in Lemma \ref{AtO2} that $A_t$ is same as $\mathbb{O}_{2n}^{(O)}$ with $t=16n.$\\
\begin{lemma}
\label{AtO2}
Let $t \geq 16$ be a power of $2.$ Also, let $A_t$ be the square ROD of order $t$ as given in Lemma \ref{threedesign} (i), and $\mathbb{O}_{16n}^{(O)}$ be the square ROD 
which is of order $16n$. Then $A_t=\mathbb{O}_{16n}^{(O)}$ for $t=16n$.
\end{lemma}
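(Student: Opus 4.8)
The plan is to prove the identity by induction on $n$ over powers of $2$, the engine being a recursive description of $A_t$ parallel to the Adams--Lax--Phillips recursion. I would start from the base observation that for $n\le 8$ the map $\gamma_n$ is the identity and $\chi_n=\psi_n$, so $A_n=K_n=\mathbb{O}_n=\mathbb{O}_n^{(O)}$; these $n$ are exactly the anchors of the ALP recursion. The heart of the argument is then to prove, for every power of $2$ $n$, the block identity
\[
A_{16n}=\left[\begin{array}{cc}
I_n\otimes K_8(x_0,\dots,x_7) & A_n(x_8,\dots,x_{\rho(n)+7})\otimes I_8\\
-A_n^\mathcal{T}(x_8,\dots,x_{\rho(n)+7})\otimes I_8 & I_n\otimes K_8^\mathcal{T}(x_0,\dots,x_7)
\end{array}\right],
\]
where $A_n(x_8,\dots)$ means $A_n$ with all variable indices shifted by $8$. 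Comparing with the displayed recursion for $\mathbb{O}_{16n}^{(O)}$ and substituting either the induction hypothesis $A_n=\mathbb{O}_n^{(O)}$ (for $n\ge 16$) or the base observation (for $n\le 8$) yields $A_{16n}=\mathbb{O}_{16n}^{(O)}$; since every power of $2$ that is at least $16$ has the form $16n$, this is the lemma.

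To establish the block identity I would argue entrywise, decomposing each row/column index of $A_{16n}$ in mixed radix as $8n\beta+8\alpha+\delta$ with $\beta\in Z_2$, $\alpha\in Z_n$, $\delta\in Z_8$. The two combinatorial facts I would extract from Lemma \ref{threedesign}(i) are: (a) $\gamma_{16n}$ maps the indices $8l+m$ with $l=0$ onto $\{0,\dots,7\}$ and the indices with $l\ge 1$ onto multiples of $8$ in $[8n,16n)$, through the relation $\gamma_{16n}(8l+m)=8n+8\,\gamma_n(8(l-1)+m)$; and (b) on these indices $\chi_{16n}$ equals $8\,\chi_n(8(l-1)+m)$ for $l\ge 2$, while for $l=1$ it is $8\,\chi_n(m)$ if $m\ne 0$ and $8n$ if $m=0$. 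From (a) the support condition $i\oplus j\in\hat{Z}_{\rho(16n)}$ forces $\alpha=\alpha'$ in the diagonal ($\beta=\beta'$) blocks and $\delta=\delta'$ in the off-diagonal ($\beta\ne\beta'$) blocks, leaving in the off-diagonal blocks an $\alpha$-support equal to that of $A_n$; this gives the $I_n\otimes(\cdot)$ and $(\cdot)\otimes I_8$ tensor factors, and $\gamma_{16n}^{-1}(i\oplus j)$ comes out to be $\delta\oplus\delta'$ in the diagonal blocks and $8+\gamma_n^{-1}(\alpha\oplus\alpha')$ in the off-diagonal ones, matching $K_8$, $K_8^\mathcal{T}$, and the relabelled $A_n$.

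The main obstacle is the sign bookkeeping, and this is where the parity facts underlying Lemma \ref{propertyV8} and the proof of Theorem \ref{orthog}, together with the relations (A1)--(A6) in the proof of Lemma \ref{threedesign}, are needed. Writing $\mu_{16n}(i,j)=(-1)^{\vert i\cdot\chi_{16n}(i\oplus j)\vert}$ and using (b): on entries where the small-block index $(\delta,\delta')$ or $(\alpha,\alpha')$ is off the diagonal, $\mu_{16n}$ collapses to $\pm\mu_8(\delta,\delta')$ or $\pm\mu_n(\alpha,\alpha')$, and the required passage $K_8\to K_8^\mathcal{T}$ and $A_n\to -A_n^\mathcal{T}$ follows from the antisymmetry $\mu_t(u,v)=-\mu_t(v,u)$ for $u\ne v$, which rests on $\vert(u\oplus v)\cdot\psi_t(u\oplus v)\vert$ being odd; on entries where $(\delta,\delta')$ or $(\alpha,\alpha')$ is diagonal, the variable is $x_0$ (resp.\ $x_8$) and the sign is read off directly from $\chi_{16n}(0)=0$ (resp.\ from $\chi_{16n}(8)=8n$, which supplies the extra $-1$ that distinguishes the lower-left block). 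The only genuinely delicate part is tracking the position of the lone ``carry'' bit in $\chi_{16n}$ through the mixed-radix digits of $i$ and handling the top level $l=c$, where Lemma \ref{threedesign}(i) substitutes the $\chi_{2^d}$ branch for the generic $\chi_8$ branch --- precisely the case $n\le 8$; all remaining verifications are bounded, mechanical parity checks.
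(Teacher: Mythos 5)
Your proposal is correct and takes essentially the same route as the paper: induction anchored at the orders $1,2,4,8$, with the inductive step being an entrywise verification that $A_{16n}$ has the Adams--Lax--Phillips block form built from $I_n\otimes K_8$, $A_n\otimes I_8$ and their (negated) transposes --- your recursion facts (a)--(b) for $\gamma_{16n},\chi_{16n}$ and the antisymmetry of $\mu$ are precisely what underlie the paper's six displayed sign conditions. (Only a notational quibble: the sign exponent is $\vert i\cdot\psi_{16n}(i\oplus j)\vert=\vert i\cdot\chi_{16n}(\gamma_{16n}^{-1}(i\oplus j))\vert$ rather than $\vert i\cdot\chi_{16n}(i\oplus j)\vert$.)
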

\begin{proof} We prove it by induction on $t$.
For $t=1,2,4$ and $8$, $A_t=K_t$ and the COD $\mathbb{O}_t^{(O)}$ of order $t$ is also given by $K_t$. Hence the lemma holds for $t=1,2,4$ and $8$.
Assuming that the lemma holds for $t=n$, i.e., $A_n=\mathbb{O}_n^{(O)}$ of order $n$, we have to prove that the lemma also holds for $t=16n, $ i.e., $A_{16n}=\mathbb{O}_{16n}^{(O)}$. Let
\begin{eqnarray*}
  A_{16n}= \left[\begin{array}{cc}
 \hat{A}_{11} & \hat{A}_{12} \\
\hat{A}_{21} & \hat{A}_{22}
\end{array}\right]
 \end{eqnarray*}
where $\hat{A}_{ij}$, $1\leq i,j\leq 2$ are square matrices of size $8n\times 8n$.
It is easy to check that the location of non-zero variables in the matrix $A_{16n}$ coincide with that of $\mathbb{O}_{16n}^{(O)}$. Therefore it is enough to show the signs (positive/negative polarity) of the corresponding entry in the two designs are same i.e.,
\begin{enumerate}
\item $\mu_{16n}(i,j)= \mu_{16n}(i\% 8, j\% 8 )$ for $0\leq i,j \leq 8n-1$,
\item $\mu_{16n}(i,j)= \mu_8(i,j)$ for $0\leq i,j \leq 7$,
\item $\mu_{16n}(i,j)=\mu_{16n}(i\oplus i\%8, j\oplus j\%8)$ \\
if $0\leq i\leq 8n-1, 8n\leq j \leq 16n-1$,
\item $\mu_{16n}(8i,8n\oplus 8j)=\mu_n(i,j)$ for $0\leq i,j \leq n-1$, 
\item $\mu_{16n}(8n \oplus i,8n \oplus j)=\mu_{16n}(i,j)$ if $i\oplus j=0$ or $i\oplus j>8n$,
\item $\mu_{16n}(8n \oplus i,8n \oplus j)=-\mu_{16n}(i,j)$ 
if $i\oplus j\in\{1,2,\cdots,7\}\cup \{8n\}$. 
\end{enumerate}

Note that\\
1) \& 2) together imply $\hat{A}_{11}=I_n\otimes K_8(x_0,\cdots,x_7 )$,\\
3) \& 4) together imply $\hat{A}_{12}=\mathbb{O}_{n}^{(O)}\otimes I_8$ and \\
5) \& 6) together imply $\hat{A}_{22}=A_{11}^\mathcal{T}, \hat{A}_{21}=-A_{12}^\mathcal{T}$.\\
\noindent
Let $A_{16n}(i,j)\neq 0$.\\
Then $i\oplus j\in \hat{Z}_{\rho(16n)}$ and $\mu_{16n}(i,j)=(-1)^{\vert i\cdot\psi_{16n}(i\oplus j)\vert}$.\\
To prove 1), we have to show that
$\vert i\cdot\psi_{16n}(i\oplus j)\vert\equiv \vert (i\% 8)\cdot\psi_{16n}(i\% 8\oplus j\% 8)$
for $0\leq i,j \leq 8n-1$.\\
We have $i\oplus j=(16n)(1-2^{-l})+{8^l}m$ and $i\oplus j< 8n$. So $l=0$ and $i\oplus j=m$. i.e., $i \oplus j=i\% 8 \oplus j\%8$.\\
Thus it is enough to prove that $\vert (i\oplus i\% 8)\cdot\psi_{16n}(i\oplus j)\vert\equiv 0$
Now
$(i\oplus i\% 8)< 8n$, $8$ divides $(i\oplus i\% 8)$ and $\psi_{16n}(i\oplus j)=8n\oplus\psi_8(m)$, hence the statement holds.

The statement 2) is true as
$\vert i\cdot\psi_{16n}(i\oplus j)\vert\equiv \vert i\cdot\psi_8(i\oplus j)\vert$
for $0\leq i,j \leq 7$. \\ 
In order to prove 3), we must have
\begin{eqnarray*}
 \vert i\cdot\psi_{16n}(i\oplus j)\vert \equiv \vert (i\oplus i\% 8)\cdot\psi_{16n}((i \oplus i\% 8) \oplus (j\oplus j\% 8))\vert
\end{eqnarray*}
i.e., $\vert (i\% 8)\cdot\psi_{16n}((i \oplus i\% 8) \oplus (j\oplus j\% 8))\vert\equiv 0$.
As $8n\leq i\oplus j\leq 16n-1$, we have $i\oplus j=(16n)(1-2^{-l})+{8^l}m$ with $l\geq 1$.
So $8$ divides $i \oplus j$ as $8$ divides both $(16n)(1-2^{-l})$ and ${8^l}m$.
So $i\% 8=j\% 8$ i.e., $i\oplus j=((i \oplus i\%8) \oplus (j\oplus j\%8))$.
Thus it is enough to prove that $\vert (i\% 8)\cdot\psi_{16n}(i \oplus j)\vert \equiv 0$.
It is indeed true as $\psi_{16n}(i \oplus j)$ is a multiple of $8$.

To prove 4), we have to show that
\[
\vert (8i)\cdot\psi_{16n}(8n\oplus 8i\oplus 8j)\vert \equiv \vert (i\cdot\psi_{n}((i \oplus j).
\]
We have $8n\oplus 8i\oplus 8j=(16n)(1-2^{-l})+8^lm$ for some $l$ with $l\geq 1$ and $m\in Z_8$.
Let $16n=2^a$ and $a=4c+d$.\\
If $l=c$, we have $\psi_{16n}(8n\oplus 8i\oplus 8j)=8^l\chi_{2^d}(m)$ and $\psi_{n}(i\oplus j)=8^{l-1}\chi_{2^d}(m)$. One can easily see that the above statement holds.\\
On the other hand, if $l<c$, we have  $\psi_{16n}(8n\oplus 8i\oplus 8j)=(16n)2^{-l-1}+8^l\chi_{8}(m)$ and $\psi_{n}(i\oplus j)=n.2^{-l}+8^{l-1}\chi_{8}(m)$.
In this case too, the statement holds.

To prove 5), we have to show that
\[ \vert (i\oplus 8n)\cdot\psi_{16n}(i\oplus j)\vert \equiv \vert i\cdot\psi_{16n}(i\oplus j)\vert,
\]
i.e., $\vert (8n)\cdot\psi_{16n}(i\oplus j)\vert\equiv 0$.
Now for $i\oplus j=0$ or greater than $8n$, $ (8n)\cdot\psi_{16n}(i\oplus j)=\mathbf{0}$.

To prove 6), we have to show that
\[ \vert (i\oplus 8n)\cdot\psi_{16n}(i\oplus j)\vert \equiv 1+\vert i\cdot\psi_{16n}(i\oplus j)\vert,
\]
i.e., $\vert (8n)\cdot\psi_{16n}(i\oplus j)\vert\equiv 1$.
But $(8n)\cdot\psi_{16n}(i\oplus j)=8n$ for all $(i\oplus j)\in\{1,2,3,4,5,6,7,8n\}$.
\end{proof}

\subsection{Adams-Lax-Phillips Construction from Quaternions and Geramita-Pullman 	Construction as special cases}

\begin{figure*}
\scriptsize{
\begin{eqnarray}
\label{GP32}
\left[\begin{array}{ r @{\hspace{.2pt}} r @{\hspace{.2pt}} r @{\hspace{.2pt}} r @{\hspace{.2pt}}r @{\hspace{.2pt}} r @{\hspace{.2pt}} r @{\hspace{.2pt}} r @{\hspace{.2pt}}
r @{\hspace{.2pt}} r @{\hspace{.2pt}} r @{\hspace{.2pt}} r @{\hspace{.2pt}}r @{\hspace{.2pt}} r @{\hspace{.2pt}} r @{\hspace{.2pt}} r @{\hspace{.2pt}} r @{\hspace{.2pt}} r @{\hspace{.2pt}} r @{\hspace{.2pt}} r @{\hspace{.2pt}}r @{\hspace{.2pt}} r @{\hspace{.2pt}} r @{\hspace{.2pt}} r @{\hspace{.2pt}}r @{\hspace{.2pt}} r @{\hspace{.2pt}} r @{\hspace{.2pt}} r @{\hspace{.2pt}}r @{\hspace{.2pt}} r @{\hspace{.2pt}} r @{\hspace{.2pt}} r @{\hspace{.2pt}}}
  x_0 &     0 &    x_1 &     0 &    x_2 &     0 &    x_3 &     0 &    x_4 &     0 &    x_5 &     0 &    x_6 &     0 &    x_7 &     0 &    x_8 &    x_9 &     0 &     0 &     0 &     0 &     0 &     0 &     0 &     0 &     0 &     0 &     0 &     0 &     0 &     0\\
   0 &    x_0 &     0 &    x_1 &     0 &    x_2 &     0 &    x_3 &     0 &    x_4 &     0 &    x_5 &     0 &    x_6 &     0 &    x_7 &   -x_9 &    x_8 &     0 &     0 &     0 &     0 &     0 &     0 &     0 &     0 &     0 &     0 &     0 &     0 &     0 &     0\\
 -x_1 &     0 &    x_0 &     0 &   -x_3 &     0 &    x_2 &     0 &   -x_5 &     0 &    x_4 &     0 &    x_7 &     0 &   -x_6 &     0 &     0 &     0 &    x_8 &    x_9 &     0 &     0 &     0 &     0 &     0 &     0 &     0 &     0 &     0 &     0 &     0 &     0\\
   0 &   -x_1 &     0 &    x_0 &     0 &   -x_3 &     0 &    x_2 &     0 &   -x_5 &     0 &    x_4 &     0 &    x_7 &     0 &   -x_6 &     0 &     0 &   -x_9 &    x_8 &     0 &     0 &     0 &     0 &     0 &     0 &     0 &     0 &     0 &     0 &     0 &     0\\
 -x_2 &     0 &    x_3 &     0 &    x_0 &     0 &   -x_1 &     0 &   -x_6 &     0 &   -x_7 &     0 &    x_4 &     0 &    x_5 &     0 &     0 &     0 &     0 &     0 &    x_8 &    x_9 &     0 &     0 &     0 &     0 &     0 &     0 &     0 &     0 &     0 &     0\\
   0 &   -x_2 &     0 &    x_3 &     0 &    x_0 &     0 &   -x_1 &     0 &   -x_6 &     0 &   -x_7 &     0 &    x_4 &     0 &    x_5 &     0 &     0 &     0 &     0 &   -x_9 &    x_8 &     0 &     0 &     0 &     0 &     0 &     0 &     0 &     0 &     0 &     0\\
 -x_3 &     0 &   -x_2 &     0 &    x_1 &     0 &    x_0 &     0 &   -x_7 &     0 &    x_6 &     0 &   -x_5 &     0 &    x_4 &     0 &     0 &     0 &     0 &     0 &     0 &     0 &   x_8 &    x_9 &     0 &     0 &     0 &     0 &     0 &     0 &     0 &     0\\
   0 &   -x_3 &     0 &   -x_2 &     0 &    x_1 &     0 &    x_0 &     0 &   -x_7 &     0 &    x_6 &     0 &   -x_5 &     0 &    x_4 &     0 &     0 &     0 &     0 &     0 &     0 &  -x_9 &    x_8 &     0 &     0 &     0 &     0 &     0 &     0 &     0 &     0\\
 -x_4 &     0 &    x_5 &     0 &    x_6 &     0 &    x_7 &     0 &    x_0 &     0 &   -x_1 &     0 &   -x_2 &     0 &   -x_3 &     0 &     0 &     0 &     0 &     0 &     0 &     0 &    0 &     0 &    x_8 &    x_9 &     0 &     0 &     0 &     0 &     0 &     0\\
   0 &   -x_4 &     0 &    x_5 &     0 &    x_6 &     0 &    x_7 &     0 &    x_0 &     0 &   -x_1 &     0 &   -x_2 &     0 &   -x_3 &     0 &     0 &     0 &     0 &     0 &     0 &    0 &     0 &   -x_9 &    x_8 &     0 &     0 &     0 &     0 &     0 &     0\\
 -x_5 &     0 &   -x_4 &     0 &    x_7 &     0 &   -x_6 &     0 &    x_1 &     0 &    x_0 &     0 &    x_3 &     0 &   -x_2 &     0 &     0 &     0 &     0 &     0 &     0 &     0 &    0 &     0 &     0 &     0 &    x_8 &    x_9 &     0 &     0 &     0 &     0\\
   0 &   -x_5 &     0 &   -x_4 &     0 &    x_7 &     0 &   -x_6 &     0 &    x_1 &     0 &    x_0 &     0 &    x_3 &     0 &   -x_2 &     0 &     0 &     0 &     0 &     0 &     0 &    0 &     0 &     0 &     0 &   -x_9 &    x_8 &     0 &     0 &     0 &     0\\
 -x_6 &     0 &   -x_7 &     0 &   -x_4 &     0 &    x_5 &     0 &    x_2 &     0 &   -x_3 &     0 &    x_0 &     0 &    x_1 &     0 &     0 &     0 &     0 &     0 &     0 &     0 &    0 &     0 &     0 &     0 &     0 &     0 &    x_8 &    x_9 &     0 &     0\\
   0 &   -x_6 &     0 &   -x_7 &     0 &   -x_4 &     0 &    x_5 &     0 &    x_2 &     0 &   -x_3 &     0 &    x_0 &     0 &    x_1 &     0 &     0 &     0 &     0 &     0 &     0 &    0 &     0 &     0 &     0 &     0 &     0 &   -x_9 &    x_8 &     0 &     0\\
 -x_7 &     0 &    x_6 &     0 &   -x_5 &     0 &   -x_4 &     0 &    x_3 &     0 &    x_2 &     0 &   -x_1 &     0 &    x_0 &     0 &     0 &     0 &     0 &     0 &     0 &     0 &    0 &     0 &     0 &     0 &     0 &     0 &     0 &     0 &    x_8 &    x_9\\
   0 &   -x_7 &     0 &    x_6 &     0 &   -x_5 &     0 &   -x_4 &     0 &    x_3 &     0 &    x_2 &     0 &   -x_1 &     0 &    x_0 &     0 &     0 &     0 &     0 &     0 &     0 &    0 &     0 &     0 &     0 &     0 &     0 &     0 &     0 &   -x_9 &    x_8\\
 -x_8 &    x_9 &     0 &     0 &     0 &     0 &     0 &     0 &     0 &     0 &     0 &  0 &     0 &     0 &     0 &     0 &    x_0 &     0 &   -x_1 &     0 &   -x_2 &     0 &   -x_3 &     0 &   -x_4 &     0 &   -x_5 &     0 &   -x_6 &     0 &   -x_7 &     0\\
 -x_9 &   -x_8 &     0 &     0 &     0 &     0 &     0 &     0 &     0 &     0 &     0 &  0 &     0 &     0 &     0 &     0 &     0 &    x_0 &     0 &   -x_1 &     0 &   -x_2 & 0 &   -x_3 &     0 &   -x_4 &     0 &   -x_5 &     0 &   -x_6 &     0 &   -x_7\\
   0 &     0 &   -x_8 &    x_9 &     0 &     0 &     0 &     0 &     0 &     0 &     0 &  0 &     0 &     0 &     0 &     0 &    x_1 &     0 &    x_0 &     0 &    x_3 &     0 &   -x_2 &     0 &    x_5 &     0 &   -x_4 &     0 &   -x_7 &     0 &    x_6 &     0\\
   0 &     0 &   -x_9 &   -x_8 &     0 &     0 &     0 &     0 &     0 &     0 &     0 &  0 &     0 &     0 &     0 &     0 &     0 &    x_1 &     0 &    x_0 &     0 &    x_3 & 0 &   -x_2 &     0 &    x_5 &     0 &   -x_4 &     0 &   -x_7 &     0 &    x_6\\
   0 &     0 &     0 &     0 &   -x_8 &    x_9 &     0 &     0 &     0 &     0 &     0 &  0 &     0 &     0 &     0 &     0 &    x_2 &     0 &   -x_3 &     0 &    x_0 &     0 &    x_1 &     0 &    x_6 &     0 &    x_7 &     0 &   -x_4 &     0 &   -x_5 &     0\\
   0 &     0 &     0 &     0 &   -x_9 &   -x_8 &     0 &     0 &     0 &     0 &     0 &  0 &     0 &     0 &     0 &     0 &     0 &    x_2 &     0 &   -x_3 &     0 &    x_0 & 0 &    x_1 &     0 &    x_6 &     0 &    x_7 &     0 &   -x_4 &     0 &   -x_5\\
   0 &     0 &     0 &     0 &     0 &     0 &   -x_8 &    x_9 &     0 &     0 &     0 &  0 &     0 &     0 &     0 &     0 &    x_3 &     0 &    x_2 &     0 &   -x_1 &     0 &    x_0 &     0 &    x_7 &     0 &   -x_6 &     0 &    x_5 &     0 &   -x_4 &     0\\
   0 &     0 &     0 &     0 &     0 &     0 &   -x_9 &   -x_8 &     0 &     0 &     0 &  0 &     0 &     0 &     0 &     0 &     0 &    x_3 &     0 &    x_2 &     0 &   -x_1 & 0 &    x_0 &     0 &    x_7 &     0 &   -x_6 &     0 &    x_5 &     0 &   -x_4\\
   0 &     0 &     0 &     0 &     0 &     0 &     0 &     0 &   -x_8 &    x_9 &     0 &  0 &     0 &     0 &     0 &     0 &    x_4 &     0 &   -x_5 &     0 &   -x_6 &     0 &   -x_7 &     0 &    x_0 &     0 &    x_1 &     0 &    x_2 &     0 &    x_3 &     0\\
   0 &     0 &     0 &     0 &     0 &     0 &     0 &     0 &   -x_9 &   -x_8 &     0 &  0 &     0 &     0 &     0 &     0 &     0 &    x_4 &     0 &   -x_5 &     0 &   -x_6 & 0 &   -x_7 &     0 &    x_0 &     0 &    x_1 &     0 &    x_2 &     0 &    x_3\\
   0 &     0 &     0 &     0 &     0 &     0 &     0 &     0 &     0 &     0 &   -x_8 &    x_9 &     0 &     0 &     0 &     0 &    x_5 &     0 &    x_4 &     0 &   -x_7 &     0 &    x_6 &     0 &   -x_1 &     0 &    x_0 &     0 &   -x_3 &     0 &    x_2 &     0\\
   0 &     0 &     0 &     0 &     0 &     0 &     0 &     0 &     0 &     0 &   -x_9 &   -x_8 &     0 &     0 &     0 &     0 &     0 &    x_5 &     0 &    x_4 &     0 &   -x_7 & 0 &    x_6 &     0 &   -x_1 &     0 &    x_0 &     0 &   -x_3 &     0 &    x_2\\
   0 &     0 &     0 &     0 &     0 &     0 &     0 &     0 &     0 &     0 &     0 &     0 &   -x_8 &    x_9 &     0 &     0 &    x_6 &     0 &    x_7 &     0 &    x_4 &     0 &   -x_5 &     0 &   -x_2 &     0 &    x_3 &     0 &    x_0 &     0 &   -x_1 &     0\\
   0 &     0 &     0 &     0 &     0 &     0 &     0 &     0 &     0 &     0 &     0 &     0 &   -x_9 &   -x_8 &     0 &     0 &     0 &    x_6 &     0 &    x_7 &     0 &    x_4 & 0 &   -x_5 &     0 &   -x_2 &     0 &    x_3 &     0 &    x_0 &     0 &   -x_1\\
   0 &     0 &     0 &     0 &     0 &     0 &     0 &     0 &     0 &     0 &     0 &     0 &     0 &     0 &   -x_8 &    x_9 &    x_7 &     0 &   -x_6 &     0 &    x_5 &     0 &    x_4 &     0 &   -x_3 &     0 &   -x_2 &     0 &    x_1 &     0 &    x_0 &     0\\
   0 &     0 &     0 &     0 &     0 &     0 &     0 &     0 &     0 &     0 &     0 &     0 &     0 &     0 &   -x_9 &   -x_8 &     0 &    x_7 &     0 &   -x_6 &     0 &    x_5 & 0 &    x_4 &     0 &   -x_3 &     0 &   -x_2 &     0 &    x_1 &     0 &    x_0\\
\end{array}\right]
\end{eqnarray}
}
\hrule
\end{figure*}

Adams-Lax-Phillips has also provided another construction of square RODs using Quaternions~\cite{Lia}. 
Assuming that a square ROD of order $n=2^a$
\[
\mathbb{O}_n^{(Q)}=\mathbb{O}_n^{(Q)}(x_0,\cdots,x_{\rho(n)-1})
\]
which has $\rho(n)$ real variables, is given, then a square ROD of order $16n$ with $\rho(n)+8$
real variables $x_i$ for  $i=0,1,\cdots, \rho(n)+7$
\[
\mathbb{O}_{16n}^{(Q)}=\mathbb{O}_{16n}^{(Q)}(x_0,\cdots,x_{\rho(n)+7})
\]
is given by \eqref{ALPQ}, 
where the matrices $L_4$ and $R_4$ are given by
\begin{eqnarray*}
L_4(x_0,x_1,x_2,x_3)=\left(\begin{array}{rrrr}
    x_0   & x_1   & x_2    & x_3 \\
   -x_1   & x_0   &-x_3    & x_2 \\
   -x_2   & x_3   & x_0    &-x_1 \\
   -x_3   &-x_2   & x_1    & x_0
\end{array}\right),\\
R_4(x_4,x_5,x_6,x_7)=\left(\begin{array}{rrrr}
    x_4   & x_5   & x_6    & x_7 \\
   -x_5   & x_4   & x_7    &-x_6 \\
   -x_6   &-x_7   & x_4    & x_5 \\
   -x_7   & x_6   &-x_5    & x_4
\end{array}\right).
\end{eqnarray*}
respectively with $y_i=x_{8+i}$.\\
The Geramita-Pullman construction of square RODs~\cite{Lia} is given as follows.

Consider a recursive construction of square ROD of order $n=2^a$ to $16n$ as follows:
$\mathbb{O}_n^{(GP)}=\mathbb{O}_n^{(GP)}(x_0,\cdots,x_{\rho(n)-1})$
which has $\rho(n)$ real variables is given, then a square ROD $\mathbb{O}_{16n}^{(GP)}$ of order $16n$ with $\rho(n)+8$ real variables $x_i$ for  $i=0,1,\cdots, \rho(n)+7$ is given by

{\tiny{
\begin{eqnarray}
\label{geramita}
 \left[\begin{array}{cc}
  K_8(x_0,\cdots,x_7)\otimes I_n & I_8 \otimes\mathbb{O}_n^{(GP)}(y_0,\cdots,y_{\rho(n)-1})\\
I_8\otimes (-\mathbb{O}_n^{(GP)})^\mathcal{T}(y_0,\cdots,y_{\rho(n)-1}) & K_8^\mathcal{T}(x_0,\cdots,x_7)\otimes I_n\\
\end{array}\right]
 \end{eqnarray}
}}

\noindent
with $y_i=x_{8+i}$.

It can be checked that both Adams-Lax-Phillips construction from Quaternions and Geramita-Pullman's construction differ from the constructions of $\mathbb{O}_{16n}^{(Q)}$ and $\mathbb{O}_{16n}^{(GP)}$ defined above only in rearrangement of variables and in signs of some of the rows or columns of the design matrix.
\begin{lemma}
\label{AtQGP}
Let $t \geq 16$ and $\hat{A}_t$ and $P_t$ be the square RODs of order $t$ given by Lemma \ref{threedesign} (ii) and (iii) respectively, and also let $\mathbb{O}_{16n}^{(Q)}$ and $\mathbb{O}_{16n}^{(GP)}$ be the square RODs of order $16n$ given by \eqref{ALPQ} and \eqref{geramita} respectively. Then $\hat{A}_t=\mathbb{O}_{16n}^{(Q)}$ and $P_t=\mathbb{O}_{16n}^{(GP)}$ for $t=16n$   .
\end{lemma}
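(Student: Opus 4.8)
The plan is to mimic the proof of Lemma \ref{AtO2} almost verbatim, replacing the octonion recursion by the two recursions \eqref{ALPQ} and \eqref{geramita}. Both equalities are proved by induction on $t$. For the base cases $t\in\{1,2,4,8\}$ one has $\hat A_t=P_t=K_t$ by Lemma \ref{threedesign} (there $\gamma_t$ is the identity and $\chi_t=\psi_t$), and the matrices $\mathbb{O}_t^{(Q)}$, $\mathbb{O}_t^{(GP)}$ are likewise $K_t$, so the claim holds. For the inductive step one assumes the claim at order $n$ and deduces it at order $16n$. As in Lemma \ref{AtO2}, the argument splits into two halves: (a) the supports (positions of the non-zero entries) of the two designs coincide, and (b) at each non-zero position the signs agree, i.e. $\mu_{16n}(i,j)$ equals the sign prescribed by the right-hand side of \eqref{ALPQ} (resp. \eqref{geramita}).

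Step (a) is read off directly from the formula for $\gamma_t$ in Lemma \ref{threedesign}. In the quaternion case $\hat A_{16n}(i,j)\neq0$ iff $i\oplus j\in\hat Z_{\rho(16n)}$, and the expression $\gamma_t(8l+m)=t(1-2^{-2l})+2^{2l}m$ for $0\le m\le 3$, together with its $4\le m\le 7$ variant, determines exactly which of the sixteen $4n\times 4n$ blocks of \eqref{ALPQ} a pair $(i,j)$ falls into and what the offset inside that block is: the $I_n\otimes L_4$ and $I_n\otimes L_4^{\mathcal T}$ blocks correspond to $l=0$ with $m\in\{0,1,2,3\}$, the $I_n\otimes R_4$, $I_n\otimes R_4^{\mathcal T}$ blocks to $l=0$ with $m\in\{4,5,6,7\}$, and the remaining off-diagonal blocks (the order-$n$ design tensored with $I_4$, with a sign) to $l\ge 1$, where the recursion on the $n$-variable design takes over. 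The Geramita--Pullman case is identical in spirit, but the Kronecker products in \eqref{geramita} occur in the opposite order ($K_8\otimes I_n$ and $I_8\otimes\mathbb{O}_n^{(GP)}$), and the matching factor $\tfrac{8t}{15}(1-2^{-4l})+\tfrac{tm}{16^{l+1}}$ of Lemma \ref{threedesign}(iii) is precisely what encodes ``outer $K_8$, inner $\mathbb{O}_n^{(GP)}$''.

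Step (b) is the heart of the proof and follows the six-condition recipe of Lemma \ref{AtO2}. Using $\mu_{16n}(i,j)=(-1)^{|i\cdot\psi_{16n}(i\oplus j)|}$ together with $\psi_t=\chi_t\gamma_t^{-1}$ and the explicit $\gamma_t,\chi_t$ of Lemma \ref{threedesign}, I would establish a short list of identities of the form: on each diagonal $4n\times 4n$ block $\mu_{16n}(i,j)$ depends only on $(i\bmod 4,j\bmod 4)$ and there agrees with the sign pattern of $L_4$ (resp. $L_4^{\mathcal T}$); similarly $\mu_{16n}$ reproduces the signs of $R_4$ (resp. $R_4^{\mathcal T}$) on the relevant off-diagonal blocks; and on the remaining blocks a suitably indexed restriction of $\mu_{16n}$ reduces to $\mu_n$, which is exactly the inductive hypothesis. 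For Geramita--Pullman the analogous identities are indexed by the quotients $\lfloor i/n\rfloor,\lfloor j/n\rfloor$ rather than by $i\bmod 8,j\bmod 8$, reflecting the swapped Kronecker order. Each individual identity unwinds, via the decompositions $\gamma_t=\gamma_t^{(1)}+\gamma_t^{(2)}$, $\chi_t=\chi_t^{(1)}+\chi_t^{(2)}$ and the cross-term vanishing relations (A1)--(A6) already used in the proof of Lemma \ref{threedesign}, to a statement about $|a\cdot\bar b|$ with $a,b$ ranging over $Z_8$ or $Z_4$, which is a finite check (equivalently, is subsumed by Lemma \ref{propertyV8} and Lemma \ref{propertyphi}).

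The main obstacle is bookkeeping rather than any conceptual difficulty. The quaternion recursion \eqref{ALPQ} is a $4\times 4$ block array built from four distinct $4\times 4$ blocks $L_4,R_4,L_4^{\mathcal T},R_4^{\mathcal T}$ (against the single $K_8$ of the octonion case), so the sign analysis has noticeably more sub-cases; and one must be careful that the asserted equality is against the \emph{normalized} forms $\mathbb{O}_{16n}^{(Q)}$, $\mathbb{O}_{16n}^{(GP)}$, which already differ from the literal Adams--Lax--Phillips and Geramita--Pullman matrices by a permutation of the variables and sign changes on some rows or columns, so that no further such adjustment is needed. For Geramita--Pullman the one genuine novelty compared with Lemma \ref{AtO2} is the reversed order of the Kronecker factors, which forces every index computation to be redone with $i\bmod n$ and $\lfloor i/n\rfloor$ in place of $i\bmod 8$ and $\lfloor i/8\rfloor$; this is routine but must be carried out carefully so that the block boundaries come out right.
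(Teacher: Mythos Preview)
Your proposal is correct and follows exactly the approach the paper intends: the paper's own proof reads in full ``Similar to that of Lemma \ref{AtO2} and hence omitted.'' Your plan is precisely the spelled-out version of that omitted argument, with the appropriate bookkeeping changes for the $4\times4$ block structure of \eqref{ALPQ} and the reversed Kronecker order in \eqref{geramita}.
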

\begin{proof}
Similar to that of Lemma \ref{AtO2} and hence omitted.
\end{proof}

Note that the square RODs for more than than $8$ antennas obtained by Adams-Lax-Phillips construction from Octonion and Quaternion are different from the square RODs constructed in this paper (denoted by $R_t$, $t$ a power of $2$).
On the other hand, the square ROD $P_{16}$ for $16$ antennas obtained by Geramita-Pullman construction is exactly the square ROD $R_{16}$ given by \eqref{R16}. However, for more than $16$ antennas, they are not identical. For example, the ROD
$P_{32}$ of size $[32,32,10]$ (given by \eqref{GP32}) is different from the matrix $R_{32}$ given by \eqref{R32}.

\section{rate-1/2 scaled-COD of size $[32,10,16]$}
\label{appendixIII}
\vspace{-10pt}
{\scriptsize
\begin{eqnarray*}
\vspace{-40pt}
\hspace{25pt}
 \left[
\begin{array}{ r @{\hspace{.2pt}} r @{\hspace{.2pt}} r @{\hspace{.2pt}} r @{\hspace{.2pt}} r @{\hspace{.2pt}} r @{\hspace{.2pt}} r @{\hspace{.2pt}} r @{\hspace{.2pt}} r @{\hspace{.2pt}} r}
     x_0 & -x_1^* & -x_2^* & 0   & -x_3^* & 0 & 0 & 0 &-\frac{x_7^*}{\sqrt{2}} & -\frac{x_{15}^*}{\sqrt{2}}\\
     x_1 & x_0^*  & 0  & -x_2^* & 0 & -x_3^* & 0 & 0  & \frac{x_6^*}{\sqrt{2}} & \frac{x_{14}^*}{\sqrt{2}}\\
     x_2 & 0 & x_0^* & x_1^*  & 0 & 0 & -x_3^* & 0 &-\frac{x_5^*}{\sqrt{2}} & -\frac{x_{13}^*}{\sqrt{2}}\\
     0   & x_2 & -x_1 & x_0  & 0 & 0 & 0 & -x_3^*  &-\frac{x_4}{\sqrt{2}} & -\frac{x_{12}}{\sqrt{2}}\\
     x_3 & 0 & 0 & 0 & x_0^* & x_1^* &x_2^* &0    & \frac{x_4^*}{\sqrt{2}}  & \frac{x_{12}^*}{\sqrt{2}} \\
     0   & x_3 & 0 & 0  & -x_1 & x_0 &0 &x_2^*     &-\frac{x_5}{\sqrt{2}} & -\frac{x_{13}}{\sqrt{2}}\\
     0   &0  & x_3 & 0  & -x_2 &0 & x_0 & -x_1^* &-\frac{x_6}{\sqrt{2}} & -\frac{x_{14}}{\sqrt{2}}\\
     0   &0  &0 & x_3 & 0  & -x_2  & x_1 & x_0^* &-\frac{x_7}{\sqrt{2}} & -\frac{x_{15}}{\sqrt{2}}\\
     x_8  & -x_9^* & -x_{10}^* & 0   & -x_{11}^* & 0 & 0 & 0 &-\frac{x_{15}^*}{\sqrt{2}} & \frac{x_7^*}{\sqrt{2}}\\
     x_9 & x_8^*  & 0  & -x_{10}^* & 0 & -x_{11}^* & 0 & 0  & \frac{x_{14}^*}{\sqrt{2}} &-\frac{x_6^*}{\sqrt{2}}\\
     x_{10} & 0 & x_8^* & x_9^*  & 0 & 0 & -x_{11}^* & 0 &-\frac{x_{13}^*}{\sqrt{2}} & \frac{x_5^*}{\sqrt{2}}\\
     0  & x_{10} & -x_9 & x_8  & 0 & 0 & 0 & -x_{11}^*  &-\frac{x_{12}}{\sqrt{2}} & \frac{x_4}{\sqrt{2}}\\
     x_{11} & 0 & 0 & 0 & x_8^* & x_9^* &x_{10}^* &0    & \frac{x_{12}^*}{\sqrt{2}} &-\frac{x_4^*}{\sqrt{2}}\\
     0  & x_{11} & 0 & 0  & -x_9 & x_8 &0 &x_{10}^*     &-\frac{x_{13}}{\sqrt{2}} & \frac{x_5}{\sqrt{2}} \\
     0  &0  & x_{11} & 0  & -x_{10} &0 & x_8 & -x_9^* &-\frac{x_{14}}{\sqrt{2}} & \frac{x_6}{\sqrt{2}}\\
     0  &0  &0 & x_{11} & 0  & -x_{10}  & x_9 & x_8^* &-\frac{x_{15}}{\sqrt{2}} & \frac{x_7}{\sqrt{2}} \\
    x_4 &-x_5^* &-x_6^* &-x_7^*  & 0 & 0 & 0 & 0  &-\frac{x_3^*}{\sqrt{2}} & \frac{x_{11}^*}{\sqrt{2}}\\
     x_5& x_4^* & 0 & 0  & -x_6^* &-x_7^* & 0 & 0   &\frac{x_2^*}{\sqrt{2}} &-\frac{x_{10}^*}{\sqrt{2}}\\
    x_6 &0  & x_4^* & 0  & x_5^* &0 &-x_7^* & 0 &-\frac{x_1^*}{\sqrt{2}}  & \frac{x_9^*}{\sqrt{2}}\\
     0    & x_6 & -x_5 &0  & x_4  & 0  & 0 &-x_7^* &-\frac{x_0}{\sqrt{2}} & \frac{x_8}{\sqrt{2}}\\
     x_7 &0 &0 & x_4^*  &0 & x_5^*   &-x_7^*  & 0  & \frac{x_0^*}{\sqrt{2}} &-\frac{x_8^*}{\sqrt{2}}\\
     0    & x_7  & 0  & -x_5 & 0 & x_4 & 0 & x_6  &-\frac{x_1}{\sqrt{2}}  & \frac{x_{10}}{\sqrt{2}}\\
     0    & 0 & x_7 &-x_6  & 0 & 0 & x_4 &-x_5^* &-\frac{x_2}{\sqrt{2}}   & \frac{x_{10}}{\sqrt{2}}\\
     0    & 0 & 0 & 0 & x_7 &-x_6 & x_5   & x_4^*  &-\frac{x_3}{\sqrt{2}}  & \frac{x_{11}}{\sqrt{2}} \\
     x_{12} &-x_{13}^* &-x_{14}^* &-x_{15}^*  & 0 & 0 & 0 & 0  &-\frac{x_{11}^*}{\sqrt{2}} &-\frac{x_3^*}{\sqrt{2}}\\
     x_{13}& x_{12}^* & 0 & 0  & -x_{14}^* &-x_{15}^* & 0 & 0  & \frac{x_{10}^*}{\sqrt{2}} & \frac{x_2^*}{\sqrt{2}}\\
    x_{14} &0  & x_{12}^* & 0  & x_{13}^* &0 &-x_{15}^* & 0 &-\frac{x_9^*}{\sqrt{2}} &-\frac{x_1^*}{\sqrt{2}}\\
     0    & x_{14} & -x_{13} &0  & x_{12}  & 0  & 0 &-x_{15}^* &-\frac{x_8}{\sqrt{2}} &-\frac{x_0}{\sqrt{2}}\\
     x_{15} &0 &0 & x_{12}^*  &0 & x_{13}^*   &-x_{15}^*  & 0  & \frac{x_8^*}{\sqrt{2}} & \frac{x_0^*}{\sqrt{2}}\\
     0    & x_{15}  & 0  & -x_{13} & 0 & x_{12} & 0 & x_{14}  &-\frac{x_9}{\sqrt{2}} &-\frac{x_1}{\sqrt{2}}\\
     0    & 0 & x_{15} &-x_{14}  & 0 & 0 & x_{12} &-x_{13}^* &-\frac{x_{10}}{\sqrt{2}}   &-\frac{x_2}{\sqrt{2}}\\
     0    & 0 & 0 & 0 & x_{15} &-x_{14} & x_{13}   & x_{12}^*  &-\frac{x_{11}}{\sqrt{2}}  &-\frac{x_3}{\sqrt{2}}
\end{array}\right]
\end{eqnarray*}
 }
\end{appendices}

\newpage
\onecolumn


\begin{IEEEbiographynophoto}{Smarajit Das}

(S'2007-M'2010) was born in West Bengal, India. He received his B.E. degree from the Sardar Vallabhbhai National Institute of Technology, Surat, India,  M.Tech in electrical engineering from the Indian Institute of Technology, Delhi, India,  and the Ph.D. degree in electrical communication engineering from the Indian Institute of Science, Bangalore, India, in 2001, 2003 and 2009 respectively.
He is currently a Post-doctoral fellow in the School of Technology and Computer science, Tata Institute of Fundamental Research, Mumbai, India. His primary research interests are in algebraic coding, Classical and Quantum Information theory.

\end{IEEEbiographynophoto}


\begin{IEEEbiographynophoto}{B.~Sundar~Rajan}
(S'84-M'91-SM'98) was born in Tamil Nadu, India. He received the B.Sc. degree in mathematics from Madras University, Madras, India, the B.Tech degree in electronics from Madras Institute of Technology, Madras, and the M.Tech and Ph.D. degrees in electrical engineering from the Indian Institute of Technology, Kanpur, in 1979, 1982, 1984, and 1989 respectively. He was a faculty member with the Department of Electrical Engineering at the Indian Institute of Technology in Delhi, from 1990 to 1997. Since 1998, he has been a Professor in the Department of Electrical Communication Engineering at the Indian Institute of Science, Bangalore. His primary research interests include space-time coding for MIMO channels, distributed space-time coding and cooperative communication, modulation and coding for multiple-access and relay channels, and network coding. 

Dr. Rajan is an Editor of the IEEE \textsc{Transactions on Wireless Communications}, an Editor of \textsc{IEEE Wireless Communications Letters}, and an Editorial Board Member of \emph{International Journal of Information and Coding Theory}. He was an Associate Editor of the \textsc{IEEE Transactions on Information Theory} during 2008-2011. He served as Technical Program Co-Chair of the IEEE Information Theory Workshop (ITW'02), held in Bangalore, in 2002. He is a Fellow of the Indian National Science Academy, a Fellow of the Indian National Academy of Engineering, and a Fellow of the National Academy of Sciences, India.  He is a recipient of IEEE Wireless Communications and Networking Conference 2011 Best Academic Paper Award,  a  recipient of Prof. Rustum Choksi award by I.I.Sc., for excellence in research in Engineering for the year 2009, recipient of the Khosla National Award from I.I.T. Roorkee for the year 2010, and recipient of the IETE Pune Center's S.V.C Aiya Award for Telecom Education in 2004. Dr. Rajan is a Member of the American Mathematical Society.

\end{IEEEbiographynophoto}

\vfill

\end{document}